\newtheorem{theorem}{Theorem}
\newtheorem{proposition}[theorem]{Proposition}
\newtheorem{lemma}{Lemma}
\newtheorem{corollary}[theorem]{Corollary}
\theoremstyle{definition}
\newtheorem{simulation}{Simulation}
\newcommand\independent{\protect\mathpalette{\protect\independenT}{\perp}}
\def\independenT#1#2{\mathrel{\rlap{$#1#2$}\mkern2mu{#1#2}}}
\title{Empirical Likelihood-Based Estimation and Inference in Randomized Controlled Trials with High-Dimensional Covariates}
\author{Wei Liang and Ying Yan\thanks{Corresponding Author. Email: \href{mailto: yanying7@mail.sysu.edu.cn}{yanying7@mail.sysu.edu.cn}} \\
School of Mathematics, Sun Yat-sen University, No. 135, Xingang Xi Road, Guangzhou 510275, China}
\date{}
\begin{document}

\maketitle

\doublespacing

\begin{abstract}
  In this paper, we propose a data-adaptive empirical likelihood-based approach for treatment effect estimation and inference, which overcomes the obstacle of the traditional empirical likelihood-based approaches in the high-dimensional setting by adopting penalized regression and machine learning methods to model the covariate-outcome relationship. In particular, we show that our procedure successfully recovers the true variance of Zhang's treatment effect estimator \citep{Zhang2018} by utilizing a data-splitting technique. Our proposed estimator is proved to be asymptotically normal and semiparametric efficient under mild regularity conditions. Simulation studies indicate that our estimator is more efficient than the estimator proposed by \citet{Wager2016} when random forests are employed to model the covariate-outcome relationship. Moreover, when multiple machine learning models are imposed, our estimator is at least as efficient as any regular estimator with a single machine learning model. We compare our method to existing ones using the ACTG175 data and the GSE118657 data, and confirm the outstanding performance of our approach.
\end{abstract}
\textbf{Keywords}: Average treatment effect, Randomized controlled trials, High-dimensional covariates, Empirical likelihood, Machine learning, Data-splitting, Semiparametric efficiency bound

\newpage
\section{Introduction}\label{sec:intro}
Randomized controlled trials (RCTs) are recognized as the standard clinical design to eliminate sources of confounding bias. When the outcome of interest is a continuous variable, the difference of mean responses in the treatment and the control groups is an
unbiased and consistent estimator for the average treatment effect (ATE), a commonly used estimand to evaluate the effect of a treatment or policy. When the baseline information is involved before receiving the treatment, such as age, sex, and other characteristics, adjusting for the pre-treatment covariates helps to improve the efficiency of the ATE estimator.\par
The key of covariate adjustment is to explore the relationship between the auxiliary covariates and response. Analysis of covariance (ANCOVA) is a classical regression method for covariate adjustment where a linear regression model for $\mathrm{E}[Y|X,D]$ is postulated, i.e.,
		\begin{equation}
		\label{ancova}
		\mathrm{E}[Y|X,D]=\beta_0+\beta^\tau_{x}X+\beta_{d}D.
		\end{equation}
Here, $Y$ is the outcome variable, $X$ is the vector of covariates and $D$ is the binary treatment indicator variable. The parameter of interest, the unconditional population-level treatment effect, is $\beta_{d}$. Then, we can make inference about ATE based on the asymptotic normality of the least square estimator, $\widehat{\beta}_{d}^{\text{ols}}$, of $\beta_d$ in (\ref{ancova}) \citep{Imbens2015}. It follows from \citet{Leon2003} and \citet{Tsiatis2008} that $\widehat{\beta}_{d}^{\text{ols}}$ belongs to the class of all regular and asymptotically linear estimators, and more efficient estimators in this class can be obtained by positing two separate working regression models for $\eta^{(1)}(x)=\mathrm{E}[Y|X=x,D=1]$ and $\eta^{(0)}(x)=\mathrm{E}[Y|X=x,D=0]$, respectively.\par
Empirical likelihood (EL) is an alternative way to carry out covariate adjustment. EL was introduced by Owen \citep{Owen1988,Owen1990,Owen2001} and primarily used to construct confidence intervals for the mean or parameters in the general estimating functions \citep{Qin1994}. It has also been adopted as a tool to efficiently incorporate information of auxiliary covariates in causal inference problems \citep{Huang2008,Qin2007}. In particular, when multiple parametric regression models are imposed into constraints, the EL estimator has good performance as long as one of multiple models correctly specifies the covariate-outcome relationship without requiring the knowledge of which model is correct. This is known as the multiple robustness property \citep{Han2013}. Recently, \citet{Zhang2018} and \citet{Tan2020Arxiv} extended EL for statistical inference of ATE in RCTs. Their EL estimators have two prominent advantages. First, \citet{Zhang2018} proved that his EL estimator was at least as efficient as existing regular estimators when the parametric models for the covariate-outcome relationship were mis-specified and asymptotically as efficient as the semiparametric estimator of \citet{Tsiatis2008} when the parametric models for the covariate-outcome relationship were correctly specified. As shown in Tan and Zhang's simulation studies, both Zhang and Tan's EL estimators were considerably more efficient than the estimator of \citet{Tsiatis2008} when the imposed parametric regression models were mis-specified. Secondly, \citet{Tan2020Arxiv} showed that the multiple robustness property of the EL estimator could be maintained in RCTs. \par
In practice, the true covariate-outcome model is unknown, which can be much more complicated than a simple linear combination of several variables in equation (\ref{ancova}). Furthermore, in the big data era, the number of features may be high-dimensional, where ANCOVA and other traditional methods are no longer directly applicable. It inspires us to model the highly complex covariate-outcome relationship by modern machine learning (ML) methods, such as Lasso \citep{Tibshirani1996}, SCAD \citep{Fan2001}, and random forests \citep{Breiman2001}. A general semiparametric framework for statistical inference of treatment effects under which infinite-dimensional nuisance parameters are modelled with ML methods is given by \citet{Chernozhukov2018} and \citet{Belloni2017}, where two crucial points are presented:
		\begin{itemize}
			\item[1.] They used Neyman orthogonal scores to remove the bias brought by regularization.
			\item[2.]They split data to avoid overfitting.
		\end{itemize}
Specifically, the Neyman orthogonal scores technique adjusts for the effect of covariates to reduce sensitivity with respect to the nuisance parameters, and thus promotes the efficiency of treatment effect estimation. It is straightforward to show that the score function developed by \citet{Tsiatis2008} is Neyman orthogonal in RCTs. With an additional data-splitting procedure, \citet{Wager2016} generalized the results of \citet{Tsiatis2008} to the high-dimensional setting and adopted ML methods to model the covariate-outcome relationship. Under mild regularity conditions, they derive valid inference of ATE due to the data-spitting procedure.\par
EL and Neyman orthogonal scores play similar roles in RCTs as they both achieve the goal of efficiency improvement of treatment effect estimation by incorporating information of auxiliary covariates. However, the estimator proposed by Wager and his colleagues does not enjoy some unique properties of EL, e.g., multiple robustness. When the single ML algorithm adopted by \citet{Wager2016} does not successfully capture the covariate-outcome relationship, their ATE estimation may incur efficiency loss. We are motivated to propose a Machine Learning and Data-splitting based Empirical Likelihood (MDEL) approach to estimate ATE, where we apply multiple ML algorithms to model the covariate-outcome relationship. Compared with the high-dimensional regression adjustment approach of \citet{Wager2016}, our proposed EL approach has the following advantages:
		\begin{itemize}
			\item[1.] When the single ML estimator of nuisance parameters does not perform well, our proposed EL estimator is more efficient, as indicated by our simulation studies.
			\item[2.] Different estimators of the nuisance parameters can be imposed simultaneously into constraints to enhance the performance of our estimator. Our simulation studies indicate that our EL estimator with multiple models tends to perform as good as that with the correct model without requiring the knowledge of which model is correct.
		\end{itemize}\par
Our paper is organized as follows. In section \ref{sec:review} we give a brief introduction to our concerning problems and notations. In addition, we review the semiparametric method proposed by \citet{Wager2016}. In section \ref{sec:method}, we introduce our proposed empirical likelihood approach. Then we discuss the practical implementation of our EL approach in section \ref{sec:pract}. In section \ref{sec:rda and sim},  we compare our proposed EL approach to the existing ones in extensive simulation studies, the ACTG175 data set and the GSE118657 data set.

\section{Notations and Reviews}\label{sec:review}
\subsection{The Model Setup}
We introduce our model setup under the potential outcome framework of \citet{Imbens2015}. Suppose we have $n$ observations $\{W_i=(Y_i,X_i,D_i),i=1,\cdots,n\}$ from a binary experiment with the treatment indicator variable $D_i\in\{0,1\}$. $D_i$ takes on the value $1$ if the $i$-th unit is assigned to the treatment group and $0$ if the $i$-th unit is assigned to the control group. Assume $Y_i(d)$ is the potential outcome under $D_i=d$ for $d=0,1$. The observed outcome of the $i$-th unit, $Y_i$, satisfies $Y_i=D_iY_i(1)+(1-D_i)Y_i(0)$. $X_i$ is the covariates of the $i$-th unit and contains additional pre-treatment information. We are interested in the estimation and inference of the population level ATE, defined by $\theta=\mathrm{E}[Y_i(1)-Y_i(0)]$, under the assumption that $\{W_i,i=1,\cdots,n\}$ are independent and identically distributed random samples from $W=(Y,X,D)$. In this paper, we focus on randomized controlled trials, where $D_i$ is randomly assigned to either 0 or 1 and is independent of all pre-treatment variables and the potential outcomes, i.e.,
		\[
		D_i\independent\left\{Y_i(1),Y_i(0), X_i\right\}\quad\text{for}\quad i=1,\cdots,n.
		\]
Let $\delta=\mathrm{P}(D=1)$ be the probability of a unit being assigned to the treatment group. We further assume that $0<\delta<1$. The Radon-Nikodym Theorem indicates that $\theta=\mathrm{E}[Y|D=1]-\mathrm{E}[Y|D=0]$, which leads to a natural and commonly used consistent estimator of ATE, the difference in the means, defined by
		\[
		\widehat{\theta}_{\text{dim}}=\bar{Y}^{(1)}-\bar{Y}^{(0)}=\sum\limits_{i=1}^n\frac{D_iY_i}{n_1}-\sum\limits_{i=1}^n\frac{(1-D_i)Y_i}{n_0},
		\]
where $n_1$ is the size of the treatment group and $n_0=n-n_1$.
$\widehat{\theta}_{\text{dim}}$ ignores information of covariates and thus loses efficiency. To exploit information of covariates and regain the efficiency, we review a popular regression adjustment approach in the following section.

\subsection{A Regression Adjustment Approach}
There are many useful regression adjustment methods when the covariates are low-dimensional. Here, we only review the method of \citet{Tsiatis2008}, which can be reformulated as a method based on the efficient score. For other methods and more details, we refer to \citet{Zhang2018} and \citet{Tan2020Arxiv}. Different from ANCOVA, the approach of \citet{Tsiatis2008} focuses on separately modelling the covariate-outcome relationships $\eta^{(d)}(x)=\mathrm{E}[Y|D=d,X=x],d=0,1$. By fitting $\eta^{(1)}(x)$ and $\eta^{(0)}(x)$ with two different parametric models ${f}_{1}(x,{\alpha}_1)$ and ${f}_{0}(x,{\alpha}_0)$, specified by two finite dimensional parameters, $\alpha_1$ and $\alpha_0$, respectively, \citet{Tsiatis2008} proposed to estimate $\theta$ with
		\[
		\widehat{\theta}_{\text{tdzl}}=\bar{Y}^{(1)}-\bar{Y}^{(0)}-\sum_{i=1}^{n}\left(D_{i}-\frac{n_1}{n}\right)\left(n_{0}^{-1} {f}_{0}(X_i,\widehat{\alpha}_0)+n_{1}^{-1} {f}_{1}(X_i,\widehat{\alpha}_1)\right),
		\] 
where $\widehat{\alpha}_d$, $d=0,1$, are estimators of $\alpha_0$ and $\alpha_1$, respectively, e.g., the least square estimator or the stepwise regression estimator.  Write $\widehat{f}_d(\cdot)=f(\cdot,\widehat{\alpha}_d),d=0,1$. The semiparametric theory \citep{Tsiatis2007} indicates that the efficient score of $\theta$ is given by
		\[
		\varphi(W,\theta,\delta,\eta^{(1)},\eta^{(0)})=\frac{D}{\delta}\left(Y-\eta^{(1)}(X) \right)-\frac{1-D}{1-\delta}\left(Y-\eta^{(0)}(X) \right) + \eta^{(1)}(X) - \eta^{(0)}(X) - \theta.
		\]
Here, $\eta^{(1)}$ and $\eta^{(2)}$ are treated as nuisance parameters, and $\theta$ is the parameter of interest. $\widehat{\theta}_{\text{tdzl}}$ can be reformulated as the solution of
		\[
		\frac{1}{n}\sum_{i=1}^{n}\varphi(W_i,\theta,\widehat{\delta}=\frac{n_1}{n},\widehat{f}_{1},\widehat{f}_{0}) =0.
		\]
Therefore, $\widehat{\theta}_{\text{tdzl}}$ reaches the semiparametric efficiency bound if both $f_1$ and $f_0$ are correctly specified.
\subsection{High-Dimensional Regression Adjustment}
\citet{Wager2016} extended the approach of \citet{Tsiatis2008} to the high-dimensional case. The nuisance parameters were proposed to be estimated using ML methods and a data-splitting procedure was adopted for valid inference with high-dimensional covariates. Let $\mathbb{I}=\{1,\cdots,n\}$ be the sample index set, $\mathbb{I}^{(1)}$ and $\mathbb{I}^{(0)}$ be the index set of the treatment group and control group, respectively. We use the notation $|\mathbb{A}|$ as the size of a set $\mathbb{A}$. Suppose we randomly partition $\mathbb{I}^{(d)}$ into $K$ subsets with equal size, denoted by $(\mathbb{I}^{(d)}_k)_{k=1}^K$, for $d=0,1$. Let $\mathbb{I}_k=\mathbb{I}^{(1)}_{k}\cup \mathbb{I}^{(0)}_{k}$, ${\mathbb{I}^{(d)}_{k}}^c=\mathbb{I}^{(d)}\backslash \mathbb{I}^{(d)}_{k}$ and $\mathbb{I}_{k}^c=\mathbb{I}\backslash \mathbb{I}_{k}$. Generally, we set $\frac{|\mathbb{I}^{(d)}_k|}{|\mathbb{I}_k|}=\frac{n_d}{n}$ and $\frac{|\mathbb{I}^{(d)}_k|}{|\mathbb{I}^{(d)}|}=\frac{1}{K}$.\par
After data-splitting, \citet{Wager2016} proposed to estimate $\theta$ with $\widehat{\theta}_{\text{wdtt}}=\frac{1}{K}\sum\limits_{k=1}^K\widehat{\theta}_{\text{wdtt}}^k$, where the $k$-th sub-estimator, $\widehat{\theta}_{\text{wdtt}}^k$, is the solution of
		\[
	   	\frac{1}{|\mathbb{I}_k|}\sum_{i\in\mathbb{ I}_k}\varphi(W_i,{\theta},\widehat{\delta}=\frac{|\mathbb{I}^{(1)}_k|}{|\mathbb{I}_k|},\widehat{\textsl{g}}^{(1)}_{k},\widehat{\textsl{g}}^{(0)}_{k}) =0.
		\]
For fixed $k$ and $d$, $\widehat{\textsl{g}}^{(d)}_{k}$ is an ML estimator of $\eta^{(d)}$ obtained via the sub-sample $\left(W_i\right)_{{i\in \mathbb{I}^{(d)}_{k}}^c}=\left\{W_i|i\in {\mathbb{I}^{(d)}_{k}}^c \right\}$. It follows immediately that conditional on the sample $\left(W_i\right)_{{i\in \mathbb{I}^{(d)}_{k}}^c}$, $\widehat{\textsl{g}}^{(d)}_{k}(x)$ is a non-random function of $x$. Therefore, the variance of $\widehat{\theta}_{\text{wdtt}}$ can be directly estimated by
		\[
		\widehat{\mathrm{Var}}(\widehat{\theta}_{\text{wdtt}})=\sum\limits_{k=1}^K\frac{|\mathbb{I}_k|^2}{n^2}\widehat{\mathrm{Var}}(\widehat{\theta}_{\text{wdtt}}^k)
		\]
where for a fixed $k$, $\widehat{\mathrm{Var}}(\widehat{\theta}_{\text{wdtt}}^k)$ is a moment-based plug-in variance estimator for the conditional variance of $\widehat{\theta}_{\text{wdtt}}^k$, 
        \[
        \sum\limits_{d=0,1}\frac{1}{|\mathbb{I}^{(d)}_k|}{\mathrm{Var}}\left[\left.Y-\frac{|\mathbb{I}^{(0)}_k|}{|\mathbb{I}_k|}\widehat{\textsl{g}}^{(1)}_k(X)-\frac{|\mathbb{I}^{(1)}_k|}{|\mathbb{I}_k|}\widehat{\textsl{g}}^{(0)}_k(X)\right\vert \widehat{\textsl{g}}^{(1)}_k,\widehat{\textsl{g}}^{(0)}_k,D=d\right].
        \]
\citet{Wager2016} demonstrated that $\frac{\left(\widehat{\theta}_{\text{wdtt}}-\theta\right)}{\sqrt{{\widehat{\mathrm{Var}}(\widehat{\theta}_{\text{wdtt}})}}}$ was asymptotically standard normal under certain regularity conditions. Therefore, for statistical inference, the corresponding $1-\alpha$ confidence interval for $\theta$ is given by
		\[
		\left(\widehat{\theta}_{\text{wdtt}}-z_{\frac{\alpha}{2}}\sqrt{\widehat{\mathrm{Var}}(\widehat{\theta}_{\text{wdtt}})},\widehat{\theta}_{\text{wdtt}}+z_{\frac{\alpha}{2}}\sqrt{\widehat{\mathrm{Var}}(\widehat{\theta}_{\text{wdtt}})}\right),
		\]
where $z_{\frac{\alpha}{2}}$ is the upper quantile of the standard normal distribution.

\section{Empirical Likelihood-Based Approaches in RCTs}\label{sec:method}
\subsection{Traditional EL-Based Approaches in RCTs}\label{subsec:tea}
Let $f(x)=(f_1(x),f_0(x))^\tau$ be a vector function of $x$ and $\xi=\mathrm{E}[f(X)]$. Based on two unbiased estimating functions
		\[
        h_1(D,Y,\theta,\delta)=\frac{DY}{\delta}-\frac{(1-D)Y}{1-\delta}-\theta\quad\text{and}\quad h_2(D,X,\delta,f,\xi)=\frac{D-\delta}{\delta(1-\delta)}\{f(X)-\xi\},
		\]
\citet{Zhang2018} proposed to estimate $\theta$ by maxmizing the nonparametric likelihood $L_F=\prod_{i=1}^{n}p_i$ subject to constraints $\sum_{i=1}^{n}p_i=1,p_i\geq 0$ and $\sum_{i=1}^{n}p_i\left(h_1(D_i,Y_i,\theta,\hat{\delta}),h^\tau_2(D_i,X_i,\hat{\delta},\hat{f},\hat{\xi})\right)^\tau=0$, where $\hat{\delta}=\frac{n_1}{n}$, $\hat{\xi}=\frac{1}{n}\sum_{i=1}^{n}\hat{f}(X_i)$, $\hat{f}=(\hat{f}_1,\hat{f}_0)^\tau$, and $\hat{f}_1$ and $\hat{f}_0$ are estimated working regression models for $\eta^{(1)}$ and $\eta^{(0)}$, respectively. The variance of $\widehat{\theta}_{\text{Zhang}}$ was estimated by a sandwich variance estimator. When the covariates are of low-dimension and $\hat{f}_d$ is an estimated parametric regression model for $\eta^{(d)}$,  $\widehat{\theta}_{\text{Zhang}}$ is more efficient than the semiparametric estimator $\widehat{\theta}_{\text{tdzl}}$ as suggested by Zhang's simulation studies. However, when the covariates are of high-dimension and $\hat{f}_d$ is an ML estimator for model selection, many spurious variables which have high correlations with the response but do not belong to the true feature set will be selected and thus result in serious underestimation of the variance \citep{Fan2012}. We conduct simulations to illustrate this point in Section \ref{subsec: vr}.\par
\citet{Tan2020Arxiv} extended the two-sample EL approach of \citet{Wu2012} and proposed to estimate $\theta$ based on the property
		\[
	    \mathrm{E}\left[f(X)-\xi|D=d\right]=0.
		\]
Their estimator is $\widehat{\theta}_{\text{Tan}}=\sum_{i\in\mathbb{I}^{(1)}}\hat{p}_iY_i-\sum_{i\in\mathbb{I}^{(0)}}\hat{p}_iY_i$, where $\hat{p}_i,i\in\mathbb{I}^{(d)}$ are obtained by maxmizing the nonparametric likelihood $L_F=\prod_{i\in\mathbb{I}^{(d)}}p_i$ subject to constraints 
		$\sum_{i\in \mathbb{I}^{(d)}}p_i=1,p_i\geq 0$ and
	    $
	    \sum_{i\in\mathbb{I}^{(d)}}p_i\hat{f}_d(X_i)=\frac{1}{n}\sum_{j=1}^{n}\hat{f}_d(X_j).
	    $
Here $\hat{f}_d(x)$ is a guess of $\mathrm{E}[Y|X=x,D=d]$. Multiple guesses are allowed in Tan's method. Estimation for the variance of $\widehat{\theta}_{\text{Tan}}$ is given by the bootstrap method. Tan's approach is simple and easy to explain. Asymptotic theory and simulation studies of \citet{Tan2020Arxiv} verify its multiple robustness property, which means that the estimator achieves the semiparametric efficiency bound as long as one model of $f_d$ is correctly specified. However, when $\hat{f}_d$ involves ML estimators, their proposed bootstrap re-sampling procedure is no longer applicable as Donsker conditions are inappropriate when the space of $\hat{f}_d$ is highly complicated.\par
As we can see, both EL approaches in RCTs have desirable properties in the low-dimensional setting but fail to make valid inference in the high-dimensional setting. To maintain multiple robustness and other ideal properties of EL estimators, as well as to overcome the invalid inference problem of traditional EL approaches, we are motivated to extend the approach of \citet{Tan2020Arxiv}, which is very simple to implement, to RCTs with high-dimensional covariates by means of machine learning and data-splitting.

\subsection{The Proposed EL-Based Estimator with High-Dimensional Covariates}\label{subsec:aea}
In our proposed approach, the nuisance parameters are allowed to be estimated using multiple ML methods. For $d=0,1$, assume we already have an $r$-dimensional vector of estimators of $\eta^{(d)}$, denoted as $\widehat{\textsl{g}}^{(d)}_k=\left(\widehat{\textsl{g}}^{(d)}_{k,1},\cdots,\widehat{\textsl{g}}^{(d)}_{k,r}\right)^\tau$, where each component of $\widehat{\textsl{g}}^{(d)}_{k}$ is an ML estimator such as the random forests estimator or Lasso estimator of ${\eta}^{(d)}$ based on the sub-sample $\left(W_i\right)_{{i\in \mathbb{I}^{(d)}_{k}}^c}$. Let $\widehat{\xi}^{(d)}=\frac{1}{n}\sum\limits_{k=1}^K\sum\limits_{i\in \mathbb{I}_k}\widehat{\textsl{g}}^{(d)}_{k}(X_i)$ and
$\widehat{\xi}^{(d)}_k=\frac{1}{|\mathbb{I}_k|}\sum\limits_{i\in \mathbb{I}_k}\widehat{\textsl{g}}^{(d)}_{k}(X_i)$ for $k=1,\cdots,K$ and $d=0,1$. It is easy to check that $\widehat{\xi}^{(d)}=\frac{1}{K}\sum\limits_{k=1}^K\widehat{\xi}^{(d)}_k$. Let $k_d(y,x)$ be the conditional density of $(Y,X)$ given $D=d$ and $p_i=k_d(Y_i,X_i)$, $i\in\mathbb{I}^{(d)}$ for $d=0,1$. Due to randomization, it is easy to verify that, conditional on the sub-sample $\left(W_i\right)_{{i\in \mathbb{I}^{(d)}_{k}}^c}$,
	     \[
	     \mathrm{E}\left[\left.\widehat{\textsl{g}}^{(d)}_{k}(X)\right\vert D=d\right]=\mathrm{E}\left[\widehat{\textsl{g}}^{(d)}_{k}(X)\right]
	     \]
for $k=1,\cdots,K$ and $d=0,1$. Then, the empirical form of the above equation is naturally given by $\sum\limits_{i\in\mathbb{I}_k^{(d)}}p_{i}\widehat{\textsl{g}}^{(d)}_{k}(X_i)=\frac{1}{K}\widehat{\xi}_k^{(d)}$ for $k=1,\cdots,K$ and $d=0,1$, which leads to, simply by summation,
	     \begin{equation}\label{eq:prandom}
	      \sum\limits_{k=1}^K\sum\limits_{i\in \mathbb{I}^{(d)}_k}p_{i}\widehat{\textsl{g}}^{(d)}_{k}(X_i)=\frac{1}{K}\sum_{k=1}^{K}\widehat{\xi}_k^{(d)}  
	     \end{equation}
for $d=0,1$. Based on (\ref{eq:prandom}) and $\iint k_d(y,x)dydx=1$, we impose the following constraints on $p_i$:
	    \begin{equation}
	    \begin{aligned}
		\quad &\sum\limits_{i\in \mathbb{I}^{(d)}} p_{i}=1,\quad d=0,1,\quad p_{i}\geq 0,\quad\forall i\in \mathbb{I},\\
		&\sum\limits_{k=1}^K\sum\limits_{i\in \mathbb{I}^{(d)}_k}p_{i}\widehat{\textsl{g}}^{(d)}_{k}(X_i)=\frac{1}{K}\sum_{k=1}^{K}\widehat{\xi}_k^{(d)},\quad d=0,1.
		\end{aligned}
		\label{eq:oriel}		
		\end{equation}
And we propose to estimate $p_i$ by maximizing the conditional nonparametric likelihood $L_k=\prod\limits_{i\in\mathbb{I}^{(1)}}p_i\prod\limits_{i\in\mathbb{I}^{(0)}}p_i$ subject to the constraints (\ref{eq:oriel}), which is equivalent to solving two separated minimization problems:
		\begin{equation}\label{eq:min}
		\begin{aligned}
		\min\limits_{p_i} &-\sum\limits_{i\in \mathbb{I}^{(d)}}\log (p_{i})\\
		\text{s.t.}\quad &\sum\limits_{i\in \mathbb{I}^{(d)}} p_{i}=1,\quad p_i\geq 0,i\in\mathbb{I}^{(d)},\\
		&\sum\limits_{k=1}^K\sum\limits_{i\in \mathbb{I}^{(d)}_k}p_{i}\left(\widehat{\textsl{g}}^{(d)}_{k}(X_i)-\widehat{\xi}^{(d)}\right)=0.
		\end{aligned}		
		\end{equation}
for $d=0,1$.
Let $\widehat{\textsc{G}}\left(x,\widehat{\textsl{g}}^{(d)}_{k},\widehat{\xi}^{(d)}\right)=\widehat{\textsl{g}}^{(d)}_{k}(x)-\widehat{\xi}^{(d)}$. The Lagrange multiplier method shows that the dual problem of (\ref{eq:min}) is 
		\begin{align}\label{eq:dual}
		\max_{\lambda_d}\ell(\lambda_d):\quad\ell(\lambda_d)=-\sum\limits_{k=1}^K\sum_{i\in \mathbb{I}_k^{(d)}} \log \left\{1+\lambda_d^{\tau}\widehat{\textsc{G}}\left(X_i,\widehat{\textsl{g}}^{(d)}_{k},\widehat{\xi}^{(d)}\right)\right\}-n_d\log n_d
		\end{align} 
and $\hat{p}_i$ is given by
		\[
		\widehat{p}_i=\{n_d\left(1+\widehat{\lambda}_d^{\tau} \widehat{\textsc{G}}\left(X_i,\widehat{\textsl{g}}^{(d)}_{k},\widehat{\xi}^{(d)}\right)\right)\}^{-1}\quad\text{for}\quad i\in \mathbb{I}^{(d)}_k,\quad k=1,\cdots,K,
		\]
where $\widehat{\lambda}_d$ is the solution of (\ref{eq:dual}). Simple calculation reveals that $\widehat{\lambda}_d$ is determined by
		\begin{align}\label{eq:lambdas}
		\frac{1}{n_d}\sum\limits_{k=1}^K\sum\limits_{i\in \mathbb{I}_k^{(d)}}\frac{\widehat{\textsc{G}}\left(X_i,\widehat{\textsl{g}}^{(d)}_{k},\widehat{\xi}^{(d)}\right)}{1+\widehat{\lambda}_d^{\tau} \widehat{\textsc{G}}\left(X_i,\widehat{\textsl{g}}^{(d)}_{k},\widehat{\xi}^{(d)}\right)}=0.
		\end{align}
Our proposed Machine learning and Data-splitting based Empirical Likelihood (MDEL) estimator for $\theta$ is $$\widehat{\theta}_{\text{mdel}}=\widehat{\theta}^{(1)}_{\text{mdel}}-\widehat{\theta}^{(0)}_{\text{mdel}}=\sum\limits_{i=1}^nD_i\widehat{p}_iY_i-\sum\limits_{i=1}^n(1-D_i)\widehat{p}_iY_i.$$ To solve the aforementioned optimization problem and obtain the MDEL estimator, we carry out a modified Newton-Raphson algorithm with details extensively discussed in \citet{Wu2004}. The unique and global minimizer of (\ref{eq:min}) requires that $0$ must be contained in the convex hull of\\ $\left\{\widehat{\textsc{G}}\left(X_i,\widehat{\textsl{g}}^{(d)}_{k},\widehat{\xi}^{(d)}\right),i\in\mathbb{I}^{(d)}\right\}$, which is asymptotically guaranteed by the following regularity condition:
\begin{itemize}
    \item[(\textbf{A1})] With probability tending to 1 as $n\rightarrow\infty$, $\mathrm{Var}\left\{\left[\widehat{\textsl{g}}^{(d)}_{k}(X)\right]_j\right\}>0$ and $\mathrm{E}\left[\left[\widehat{\textsl{g}}^{(d)}_k(X)\right]^2_j\right]<\infty$ conditionally on $\left(W_i\right)_{{i\in \mathbb{I}^{(d)}_{k}}^c}$ for $j=1,\cdots,r$, $k=1,\cdots,K$ and $d=0,1$, where $[x]_i$ to denote the i-th element of a vector $x$.
\end{itemize}
Condition (\textbf{A1}) is mild and does not give any restriction on the dimension $p$ of covariates. Therefore, the solution of (\ref{eq:lambdas}) exists and is unique as long as (\textbf{A1}) is satisfied and $n$ is large enough, even if the dimension of covariates grows with the sample size, which is indicated by the following proposition.
\begin{proposition}\label{pro:1}
Under condition (\textbf{A1}), the vector $0$ is in the convex hull of\\ $\left\{\widehat{\textsc{G}}\left(X_i,\widehat{\textsl{g}}^{(d)}_{k},\widehat{\xi}^{(d)}\right),i\in\mathbb{I}^{(d)}\right\}$ with probability tending to 1 as $n\rightarrow\infty$ for $d=0,1$.
\end{proposition}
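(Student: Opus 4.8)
The plan is to argue conditionally on the out-of-fold information and then integrate over it. Fix $d\in\{0,1\}$ (the two cases are symmetric) and let $\mathcal{F}_n$ be the $\sigma$-field generated by the fold assignments together with the out-of-fold sub-samples $(W_i)_{i\in({\mathbb{I}^{(d)}_k})^c}$, $k=1,\dots,K$. Conditionally on $\mathcal{F}_n$ every $\widehat{\textsl{g}}^{(d)}_k$ is a fixed function, and, by the randomization $D\independent X$, the within-fold covariates $(X_i)_{i\in\mathbb{I}_k}$ are i.i.d.\ copies of $X$, so $(\widehat{\textsl{g}}^{(d)}_k(X_i))_{i\in\mathbb{I}_k}$ are i.i.d.\ draws from a law $\nu_k$ with mean $\mu_k:=\mathrm{E}[\widehat{\textsl{g}}^{(d)}_k(X)\mid\mathcal{F}_n]$ (finite by (\textbf{A1})) and support $S_k$. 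The first step is the elementary reformulation that $0$ lies in the convex hull of $\{\widehat{\textsc{G}}(X_i,\widehat{\textsl{g}}^{(d)}_k,\widehat{\xi}^{(d)}):i\in\mathbb{I}^{(d)}\}$ if and only if $\widehat{\xi}^{(d)}$ lies in the convex hull $\mathcal{C}_n$ of $\{\widehat{\textsl{g}}^{(d)}_k(X_i):i\in\mathbb{I}^{(d)}_k,\ k=1,\dots,K\}$, since $\widehat{\textsc{G}}(x,\widehat{\textsl{g}}^{(d)}_k,\widehat{\xi}^{(d)})=\widehat{\textsl{g}}^{(d)}_k(x)-\widehat{\xi}^{(d)}$. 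So it suffices to show $\widehat{\xi}^{(d)}\in\mathcal{C}_n$ with conditional probability tending to $1$.

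Next I would locate the limit of $\widehat{\xi}^{(d)}$. Writing $\widehat{\xi}^{(d)}=\frac1K\sum_k\widehat{\xi}^{(d)}_k$ and splitting each $\widehat{\xi}^{(d)}_k=|\mathbb{I}_k|^{-1}\sum_{i\in\mathbb{I}_k}\widehat{\textsl{g}}^{(d)}_k(X_i)$ into its treatment-$d$ and control-$d$ parts, the strong law of large numbers (applicable because of the finite-second-moment half of (\textbf{A1})) shows both parts converge to $\mu_k$ — randomization enters a second time here, since conditionally on $\mathcal{F}_n$ the control-group covariates are distributionally interchangeable with the treatment-group ones — whence $\widehat{\xi}^{(d)}\to\bar\mu:=\frac1K\sum_k\mu_k$ conditionally a.s., and moreover $\widehat{\xi}^{(d)}$ equals an $o(1)$ perturbation of the point $\frac1K\sum_k\bigl(|\mathbb{I}^{(d)}_k|^{-1}\sum_{i\in\mathbb{I}^{(d)}_k}\widehat{\textsl{g}}^{(d)}_k(X_i)\bigr)$, which is a genuine element of $\mathcal{C}_n$. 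I would then show $\bar\mu$ lies in the \emph{relative} interior of $\mathrm{conv}(\bigcup_k S_k)$: it is the mean of the mixture $\frac1K\sum_k\nu_k$, whose support is $\bigcup_k S_k$, and the mean of any distribution with finite first moment lies in the relative interior of the convex hull of its support (else a supporting hyperplane would confine the support to a proper affine subspace). By Carathéodory one may then fix finitely many anchor points $w_1,\dots,w_m\in\bigcup_k S_k$ and a radius $\rho>0$ with the $2\rho$-ball around $\bar\mu$ inside the affine hull $L$ of $\bigcup_k S_k$ contained in $\mathrm{conv}\{w_1,\dots,w_m\}$; each $w_l$ sits in some $S_{k(l)}$, so every ball about it carries positive $\nu_{k(l)}$-mass, and since $|\mathbb{I}^{(d)}_k|=n_d/K\to\infty$, with conditional probability tending to $1$ the sample set $\mathcal{C}_n$ contains a point within any prescribed $\varepsilon$ of each $w_l$. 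A routine perturbation argument then gives $\mathcal{C}_n\supseteq B_L(\bar\mu,\rho)$ with conditional probability tending to $1$. Finally, because $\widehat{\xi}^{(d)}\in L$ and $\widehat{\xi}^{(d)}\to\bar\mu$, eventually $\|\widehat{\xi}^{(d)}-\bar\mu\|<\rho$, so $\widehat{\xi}^{(d)}\in B_L(\bar\mu,\rho)\subseteq\mathcal{C}_n$; with the first-step reformulation this gives the claim conditionally on $\mathcal{F}_n$, and integrating over $\mathcal{F}_n$ (on the probability-tending-to-one event that (\textbf{A1}) holds) completes the proof for $d=0$ and $d=1$.

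I expect the main obstacle to be the bookkeeping forced by cross-fitting rather than any deep analytic point. The quantity $\widehat{\xi}^{(d)}$ is a global average over all $n$ units in which a fold-$k$ unit is evaluated through $\widehat{\textsl{g}}^{(d)}_k$, while the constraint vectors involve only the treatment-$d$ units; reconciling the two requires using the randomization twice — to turn the in-fold covariates into honest i.i.d.\ copies of $X$, and to make the control-group evaluations track the same per-fold limits — so that all the relevant averages collapse to the one deterministic point $\bar\mu$. A secondary subtlety is that (\textbf{A1}) does not force the conditional covariance of $\widehat{\textsl{g}}^{(d)}_k(X)$ to be nonsingular, so one cannot simply assert that $0$ is an interior point of the convex hull; working with relative interiors inside $L$ sidesteps this while still delivering membership in the convex hull, which is all the proposition (and the existence of the empirical likelihood solution) requires. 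The classical Owen-type separating-hyperplane contradiction is the natural alternative, but it seems to need exactly the nonsingularity that (\textbf{A1}) stops short of guaranteeing, so I would prefer the Carathéodory / relative-interior route sketched above.
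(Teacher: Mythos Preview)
Your geometric framing is nice, but the conditioning it rests on collapses. You take $\mathcal{F}_n$ to be generated by the fold assignments and \emph{all} out-of-fold sub-samples $(W_i)_{i\in({\mathbb{I}^{(d)}_k})^c}$, $k=1,\dots,K$. In the paper $({\mathbb{I}^{(d)}_k})^c=\mathbb{I}^{(d)}\setminus\mathbb{I}^{(d)}_k$, and for $K\ge 2$ one has $\bigcup_{k}(\mathbb{I}^{(d)}\setminus\mathbb{I}^{(d)}_k)=\mathbb{I}^{(d)}$, so your $\mathcal{F}_n$ already contains every $W_i$ with $i\in\mathbb{I}^{(d)}$. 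Consequently the assertion that, conditional on $\mathcal{F}_n$, the within-fold covariates $(X_i)_{i\in\mathbb{I}_k}$ are i.i.d.\ copies of $X$ fails for the $d$-arm: those covariates, the functions $\widehat{\textsl{g}}^{(d)}_k$, \emph{and} the whole convex hull $\mathcal{C}_n$ are $\mathcal{F}_n$-measurable. The step ``since $|\mathbb{I}^{(d)}_k|\to\infty$, with conditional probability tending to $1$ the sample set $\mathcal{C}_n$ contains a point within $\varepsilon$ of each $w_l$'' then has no randomness left to appeal to. Nor can you repair this by conditioning fold by fold inside the same argument: the Carath\'eodory anchors $w_l$ are chosen relative to $\bar\mu=K^{-1}\sum_k\mu_k$, which depends on \emph{all} $\widehat{\textsl{g}}^{(d)}_k$'s at once, whereas for any single $k$ you can freeze $\widehat{\textsl{g}}^{(d)}_k$ only at the price of keeping $(X_i)_{i\in\mathbb{I}^{(d)}_k}$ random and the remaining $\widehat{\textsl{g}}^{(d)}_{k'}$ random as well. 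The circularity you flagged as ``bookkeeping'' is real, and your route does not escape it.

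The paper resolves this by never conditioning on more than one fold's training data. It re-indexes each fold's in-fold units as $W^d_{k1},\dots,W^d_{kn_d^*}$ and forms fold-averaged coordinates $T_{udj}=K^{-1}\sum_{k}[\widehat{\textsc{G}}(X^d_{kj},\widehat{\textsl{g}}^{(d)}_k,\widehat{\xi}^{(d)})]_u$. The inequality $\mathrm{P}(\max_j T_{udj}\le 0)\le \mathrm{P}(\min_q\max_j T^q_{udj}\le 0)$, with $T^q_{udj}$ built from fold $q$ alone, reduces everything to showing $\mathrm{P}(\max_j T^q_{udj}\le 0)\to 0$ for each fixed $q$; that last statement is a legitimate conditional-i.i.d.\ problem (condition only on $(W_i)_{i\in(\mathbb{I}^{(d)}_q)^c}$), handled via a smoothed-indicator Chebyshev argument. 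The essential device is this fold-decoupling inequality, not a convex-geometry fact; your relative-interior machinery, though internally correct, does not supply it.
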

For $d=0,1$, define 
	 \begin{align*}
	 &\ddot{\xi}^{(d)}=\frac{1}{K}\sum\limits_{k=1}^K\mathrm{E}\left[\left.\widehat{\textsl{g}}^{(d)}_k(X)\right\vert(W_i)_{i\in \mathbb{I}_k^{(d)^c}}\right],\quad \widehat{\textsc{G}}\left(x,\widehat{\textsl{g}}^{(d)}_{k},\ddot{\xi}^{(d)}\right)=\widehat{\textsl{g}}^{(d)}_k(x)-\ddot{\xi}^{(d)},\\
	 &\widehat{V}_n^{(d)}=\frac{1}{n}\sum\limits_{k=1}^K\sum\limits_{i\in \mathbb{I}^{(d)}_k}\frac{\widehat{\textsc{G}}\left(X_i,\widehat{\textsl{g}}^{(d)}_{k},\widehat{\xi}^{(d)}\right)^{\otimes2}}{(2d-1)\delta+1-d},\quad
	 \ddot{J}_n^{(d)}=\frac{1}{n}\sum\limits_{k=1}^K\sum\limits_{i\in \mathbb{I}^{(d)}_k}\frac{(Y_i-\theta_d)\widehat{\textsc{G}}\left(X_i,\widehat{\textsl{g}}^{(d)}_{k},\ddot{\xi}^{(d)}\right)}{(2d-1)\delta+1-d},\\
	 &\ddot{{S}}^{(d)}_n=\frac{1}{n}\sum\limits_{k=1}^K\sum\limits_{i\in \mathbb{I}^{(d)}_k}\frac{\widehat{\textsc{G}}\left(X_i,\widehat{\textsl{g}}^{(d)}_{k},\ddot{\xi}^{(d)}\right)^{\otimes2}}{(2d-1)\delta+1-d},  
	 \end{align*}
	 where, for any vector or matrix ${H}$, ${H}^{\otimes2}={H}{H}^\tau$. To derive the asymptotic properties of the proposed MDEL estimator, we impose the following regularity condition:
	 \begin{itemize}
	     \item[(\textbf{A2})] With probability tending to 1 as $n\rightarrow\infty$, $V_1<\widehat{V}_n^{(d)}<V_2$ where $V_1$ and $V_2$ are two finite positive semi-definite matrices with rank $q_2\geq q_1>0$.
	 \end{itemize}
	 
	 (\textbf{A2}) is also mild and reasonable as we do not require that $\widehat{V}_n^{(d)}$ converges to any finite and positive semi-definite matrix, but is bounded by two finite and positive semi-definite matrices with probability tending to 1. (\textbf{A2}) is similar to the finite covariance matrix of rank $q>0$ condition in the traditional empirical likelihood theory \citep{Owen2001}. (\textbf{A2}) together with (\textbf{A1}) are sufficient for us to derive the order of $\widehat{\lambda}_d$, which is given by the following proposition:
	 \begin{proposition}\label{pro:2}
	 Under conditions (\textbf{A1})-(\textbf{A2}), we have $||\widehat{\lambda}_d||=O_p(\frac{1}{\sqrt{n}})$ for $d=0,1$, where $||\cdot||$ is the Euclidean norm.
	 \end{proposition}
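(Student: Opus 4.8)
The plan is to adapt the classical empirical‑likelihood bound for the order of the Lagrange multiplier (as in \citet{Owen2001}) to the present cross‑fitted, machine‑learning constraints. Fix $d\in\{0,1\}$, write $k=k(i)$ for the fold containing index $i$, and abbreviate $\widehat{\textsc{G}}_i:=\widehat{\textsc{G}}(X_i,\widehat{\textsl{g}}^{(d)}_{k},\widehat{\xi}^{(d)})$, $\bar{\textsc{G}}:=\frac{1}{n_d}\sum_{k=1}^{K}\sum_{i\in\mathbb{I}^{(d)}_k}\widehat{\textsc{G}}_i$, $G^{*}:=\max_{i\in\mathbb{I}^{(d)}}\|\widehat{\textsc{G}}_i\|$, and $\widehat{S}_n^{(d)}:=\frac{1}{n_d}\sum_{k=1}^{K}\sum_{i\in\mathbb{I}^{(d)}_k}\widehat{\textsc{G}}_i^{\otimes2}$, so that $\widehat{S}_n^{(d)}=\frac{n[(2d-1)\delta+1-d]}{n_d}\,\widehat{V}_n^{(d)}$ with the scalar prefactor tending to $1$. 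By Proposition~\ref{pro:1} the multiplier $\widehat{\lambda}_d$ solving (\ref{eq:lambdas}) exists with probability tending to $1$; since $\ell(\lambda_d)$ in (\ref{eq:dual}) and the constraints in (\ref{eq:min}) involve $\lambda_d$ only through the scalars $\{\lambda_d^{\tau}\widehat{\textsc{G}}_i\}$, we may and do take $\widehat{\lambda}_d\in\mathrm{range}(\widehat{S}_n^{(d)})$ (this is automatic for the modified Newton--Raphson scheme of \citet{Wu2004}) and write $\widehat{\lambda}_d=\rho_d u_d$ with $\rho_d=\|\widehat{\lambda}_d\|$, $\|u_d\|=1$, $u_d\in\mathrm{range}(\widehat{S}_n^{(d)})$. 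It then suffices to show $\rho_d=O_p(n^{-1/2})$.

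The first step is the basic inequality. Left‑multiplying (\ref{eq:lambdas}) by $u_d^{\tau}$ and substituting $\widehat{\textsc{G}}_i/(1+\widehat{\lambda}_d^{\tau}\widehat{\textsc{G}}_i)=\widehat{\textsc{G}}_i-\widehat{\textsc{G}}_i\widehat{\textsc{G}}_i^{\tau}\widehat{\lambda}_d/(1+\widehat{\lambda}_d^{\tau}\widehat{\textsc{G}}_i)$ gives
\[
u_d^{\tau}\bar{\textsc{G}}=\rho_d\,u_d^{\tau}\Bigl(\frac{1}{n_d}\sum_{k=1}^{K}\sum_{i\in\mathbb{I}^{(d)}_k}\frac{\widehat{\textsc{G}}_i\widehat{\textsc{G}}_i^{\tau}}{1+\widehat{\lambda}_d^{\tau}\widehat{\textsc{G}}_i}\Bigr)u_d.
\]
Since $\widehat{p}_i=[n_d(1+\widehat{\lambda}_d^{\tau}\widehat{\textsc{G}}_i)]^{-1}>0$, every denominator lies in $(0,\,1+\rho_d G^{*}]$, so the matrix in parentheses is $\succeq(1+\rho_d G^{*})^{-1}\widehat{S}_n^{(d)}$ in the Loewner order; rearranging,
\[
\rho_d\bigl(u_d^{\tau}\widehat{S}_n^{(d)}u_d-G^{*}\,u_d^{\tau}\bar{\textsc{G}}\bigr)\le u_d^{\tau}\bar{\textsc{G}}\le\|\bar{\textsc{G}}\|.
\]
It remains to establish (i) $u_d^{\tau}\widehat{S}_n^{(d)}u_d\ge c$ with probability tending to $1$ for some constant $c>0$; (ii) $G^{*}=o_p(\sqrt{n})$; and (iii) $\|\bar{\textsc{G}}\|=O_p(n^{-1/2})$. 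Granting these, $G^{*}\|\bar{\textsc{G}}\|=o_p(1)$, so with probability tending to $1$ the parenthesised factor exceeds $c/2$, and the displayed inequality yields $\rho_d\le(2/c)\|\bar{\textsc{G}}\|=O_p(n^{-1/2})$; taking $d=0,1$ completes the proof.

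For (iii) I would use the randomization identity stated just above (\ref{eq:prandom}) together with $D\independent X$. Writing $\widehat{\xi}^{(d)}=\frac{1}{K}\sum_k\widehat{\xi}^{(d)}_k$ and $|\mathbb{I}^{(d)}_k|=n_d/K$, one has $\bar{\textsc{G}}=\frac{1}{K}\sum_k\bigl[|\mathbb{I}^{(d)}_k|^{-1}\sum_{i\in\mathbb{I}^{(d)}_k}\widehat{\textsl{g}}^{(d)}_k(X_i)-|\mathbb{I}_k|^{-1}\sum_{i\in\mathbb{I}_k}\widehat{\textsl{g}}^{(d)}_k(X_i)\bigr]$; conditionally on $(W_i)_{i\in\mathbb{I}_k^{(d)^c}}$ each bracketed term is a contrast of two sample averages of the then‑fixed function $\widehat{\textsl{g}}^{(d)}_k$ over observations independent of it and sharing the common mean $\mathrm{E}[\widehat{\textsl{g}}^{(d)}_k(X)\mid(W_i)_{i\in\mathbb{I}_k^{(d)^c}}]$, hence it has conditional mean zero and conditional variance of order $n^{-1}\,\mathrm{Var}[\widehat{\textsl{g}}^{(d)}_k(X)\mid(W_i)_{i\in\mathbb{I}_k^{(d)^c}}]$, the variance factor being a.s.\ finite by (\textbf{A1}); a conditional Chebyshev bound followed by dominated convergence in the truncation level upgrades this to $O_p(n^{-1/2})$ unconditionally, and summing over the finitely many folds gives (iii). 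For (ii), bound $\|\widehat{\textsc{G}}_i\|\le\max_{k}\max_{i\in\mathbb{I}^{(d)}_k}\|\widehat{\textsl{g}}^{(d)}_k(X_i)\|+\|\widehat{\xi}^{(d)}\|$; conditionally on $(W_i)_{i\in\mathbb{I}_k^{(d)^c}}$ the inner maximum is over i.i.d.\ vectors whose squared norm is integrable by (\textbf{A1}), so $\mathrm{P}(\max_i\|\widehat{\textsl{g}}^{(d)}_k(X_i)\|>\varepsilon\sqrt{n}\mid(W_i)_{i\in\mathbb{I}_k^{(d)^c}})\le\varepsilon^{-2}\mathrm{E}[\|\widehat{\textsl{g}}^{(d)}_k(X)\|^2\mathbf{1}\{\|\widehat{\textsl{g}}^{(d)}_k(X)\|>\varepsilon\sqrt{n}\}\mid(W_i)_{i\in\mathbb{I}_k^{(d)^c}}]\to0$ a.s., which gives $o_p(\sqrt{n})$ after taking expectations, while $\|\widehat{\xi}^{(d)}\|=O_p(1)$; finiteness of $K$ then yields (ii).

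The main obstacle is (i). Condition (\textbf{A2}) only sandwiches $\widehat{V}_n^{(d)}$ between two matrices that may be rank‑deficient, so I must argue carefully that (a) confining $\widehat{\lambda}_d$ to $\mathrm{range}(\widehat{S}_n^{(d)})$ is legitimate, and (b) (\textbf{A2}) (with $n_d/n\to(2d-1)\delta+1-d$) forces the smallest \emph{positive} eigenvalue of $\widehat{S}_n^{(d)}$, restricted to that subspace, to be bounded away from $0$ with probability tending to $1$. This is exactly the rank‑$q$ nondegeneracy step in classical empirical likelihood, which I would treat along the lines of \citet{Owen2001}; the remaining ingredients are the standard Owen‑type estimate and the elementary conditioning arguments above.
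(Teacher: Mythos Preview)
Your proposal is correct and follows essentially the same route as the paper's proof: both are the classical Owen argument (derive the basic inequality from~(\ref{eq:lambdas}), show $\|\bar{\textsc{G}}\|=O_p(n^{-1/2})$ by conditioning on the training folds and invoking Lemma~\ref{lemma1}, show $G^{*}=o_p(\sqrt{n})$ via the second‑moment bound in (\textbf{A1}), and use (\textbf{A2}) for the quadratic‑form lower bound). The only cosmetic difference is that the paper handles the possible rank deficiency in (\textbf{A2}) by passing to a full‑rank subset of the $q_1$ components (citing \citet{Owen2001}, Chapter~11) rather than confining $\widehat{\lambda}_d$ to $\mathrm{range}(\widehat{S}_n^{(d)})$ as you do; both reductions are standard and equivalent here.
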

	 To derive the asymptotic decomposition and study the asymptotic properties of the MDEL estimator, the following regularity conditions are needed.
	 \begin{itemize}
     \item[(\textbf{A3})] $\ddot{{S}}^{(d)}_n,d=0,1$ are invertible given a large $n$.
	 \item[(\textbf{A4})] $\mathrm{E}[Y^2|D=d]<\infty.$ for $d=0,1$. 
	 \end{itemize}
	 (\textbf{A3}) is very likely to hold under condition (\textbf{A2}) when $V_1$ and $V_2$ are both finite and positive definite, but there are still some gaps. For simplicity and avoiding lengthy discussions in this paper, we impose (\textbf{A3}) and ignore the details of the connections. (\textbf{A4}) is very basic and common in the empirical likelihood theory.
	 \begin{proposition}\label{pro:3}
	 Under conditions (\textbf{A1})-(\textbf{A4}), given a large $n$, we have
	 \begin{equation*}
     	\begin{aligned}
     		\sqrt{n}\left(\widehat{\theta}_{\text{mdel}}-\theta\right)=\frac{1}{\sqrt{n}}\sum\limits_{k=1}^K\sum\limits_{i\in \mathbb{I}_k}\left[\frac{D_i}{\delta}\left(Y_i-\theta_1\right)-\frac{D_i-\delta}{\delta}\ddot{J}_n^{(1)^\tau}\ddot{{S}}_n^{(1)^{-1}}\widehat{\textsc{G}}\left(X_i,\widehat{\textsl{g}}^{(1)}_{k},\ddot{\xi}^{(1)}\right)\right.\\
     		\left.-\frac{1-D_i}{1-\delta}\left(Y_i-\theta_0\right)+\frac{D_i-\delta}{1-\delta}\ddot{J}_n^{(0)^\tau}\ddot{{S}}_n^{(0)^{-1}}\widehat{\textsc{G}}\left(X_i,\widehat{\textsl{g}}^{(0)}_{k},\ddot{\xi}^{(0)}\right)\right]+o_p(1),
     	\end{aligned}
     	\end{equation*}
     where $\theta_1=\mathrm{E}[Y(1)]$ and $\theta_0=\mathrm{E}[Y(0)]$.
	 \end{proposition}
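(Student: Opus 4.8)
The plan is to obtain the decomposition by a second--order Taylor expansion of the empirical--likelihood dual problem (\ref{eq:dual})--(\ref{eq:lambdas}), carried out separately for $d=0$ and $d=1$, and then to add the two pieces using $\widehat{\theta}_{\text{mdel}}=\widehat{\theta}^{(1)}_{\text{mdel}}-\widehat{\theta}^{(0)}_{\text{mdel}}$ and $\theta=\theta_1-\theta_0$. Fix $d$; for $i\in\mathbb{I}^{(d)}_k$ write $\widehat{\textsc{G}}_i=\widehat{\textsc{G}}(X_i,\widehat{\textsl{g}}^{(d)}_{k},\widehat{\xi}^{(d)})$ and $\ddot{\textsc{G}}_i=\widehat{\textsc{G}}(X_i,\widehat{\textsl{g}}^{(d)}_{k},\ddot{\xi}^{(d)})$, and set $\delta_d=(2d-1)\delta+1-d$. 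I will first record three preliminary facts. (i) By the second--moment bound in (\textbf{A1}), $\max_{i\in\mathbb{I}^{(d)}}\|\widehat{\textsc{G}}_i\|=o_p(\sqrt{n})$, which with Proposition \ref{pro:2} gives $\sup_{i}|\widehat{\lambda}_d^{\tau}\widehat{\textsc{G}}_i|=o_p(1)$. (ii) Since $\mathbb{I}_k$ is disjoint from the training index set ${\mathbb{I}^{(d)}_{k}}^c$, every $X_i$ with $i\in\mathbb{I}_k$ is independent of $\widehat{\textsl{g}}^{(d)}_k$, so conditionally on $(W_i)_{i\in{\mathbb{I}^{(d)}_{k}}^c}$ the fold mean of $\widehat{\textsl{g}}^{(d)}_k(X_i)$ is unbiased for $\mathrm{E}[\widehat{\textsl{g}}^{(d)}_k(X)\mid(W_i)_{i\in{\mathbb{I}^{(d)}_{k}}^c}]$; combined with the equal--split sizes this yields the exact identity $\widehat{\xi}^{(d)}-\ddot{\xi}^{(d)}=\frac{1}{n}\sum_{k}\sum_{i\in\mathbb{I}_k}\{\widehat{\textsl{g}}^{(d)}_k(X_i)-\ddot{\xi}^{(d)}\}=O_p(n^{-1/2})$, and likewise $n_d/n=\delta_d+O_p(n^{-1/2})$ and $\frac{1}{n_d}\sum_{i\in\mathbb{I}^{(d)}}\widehat{\textsc{G}}_i=O_p(n^{-1/2})$. (iii) Decomposing $\widehat{\textsc{G}}_i=\ddot{\textsc{G}}_i-(\widehat{\xi}^{(d)}-\ddot{\xi}^{(d)})$ and combining (ii) with (\textbf{A2})--(\textbf{A4}) gives $\frac{1}{n_d}\sum_{i\in\mathbb{I}^{(d)}}\widehat{\textsc{G}}_i\widehat{\textsc{G}}_i^{\tau}=\ddot{S}_n^{(d)}+O_p(n^{-1/2})$ with $\ddot{S}_n^{(d)-1}=O_p(1)$, and $\frac{1}{n_d}\sum_{i\in\mathbb{I}^{(d)}}(Y_i-\theta_d)\widehat{\textsc{G}}_i=\ddot{J}_n^{(d)}+O_p(n^{-1/2})$.

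Next I expand. Inserting $(1+x)^{-1}=1-x+x^2(1+x)^{-1}$ into (\ref{eq:lambdas}), bounding the quadratic remainder by (i), Proposition \ref{pro:2} and (\textbf{A2}), and applying (iii), I get $\widehat{\lambda}_d=\ddot{S}_n^{(d)-1}\big(\frac{1}{n_d}\sum_{i\in\mathbb{I}^{(d)}}\widehat{\textsc{G}}_i\big)+o_p(n^{-1/2})$. Plugging $\widehat{p}_i=\{n_d(1+\widehat{\lambda}_d^{\tau}\widehat{\textsc{G}}_i)\}^{-1}$ into $\widehat{\theta}^{(d)}_{\text{mdel}}=\sum_{i\in\mathbb{I}^{(d)}}\widehat{p}_iY_i$ and expanding again, the remainder is bounded by $O_p(\|\widehat{\lambda}_d\|^2)\cdot\frac{1}{n_d}\sum_{i\in\mathbb{I}^{(d)}}|Y_i|\|\widehat{\textsc{G}}_i\|^2=o_p(n^{-1/2})$, using (\textbf{A4}), (i) and the Cauchy--Schwarz inequality; replacing $\frac{1}{n_d}\sum Y_i\widehat{\textsc{G}}_i^{\tau}$ by $\frac{1}{n_d}\sum(Y_i-\theta_d)\widehat{\textsc{G}}_i^{\tau}$ (the difference times $\widehat{\lambda}_d$ is $\theta_d O_p(n^{-1})$ by (ii)) and using (iii) once more, I obtain $\widehat{\theta}^{(d)}_{\text{mdel}}-\theta_d=(\bar{Y}^{(d)}-\theta_d)-\ddot{J}_n^{(d)\tau}\ddot{S}_n^{(d)-1}\big(\frac{1}{n_d}\sum_{i\in\mathbb{I}^{(d)}}\widehat{\textsc{G}}_i\big)+o_p(n^{-1/2})$.

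It remains to rewrite $\frac{1}{n_d}\sum_{i\in\mathbb{I}^{(d)}}\widehat{\textsc{G}}_i$ and $\bar{Y}^{(d)}-\theta_d$ as averages over $\bigcup_k\mathbb{I}_k$. The latter is immediate from $n_d/n=\delta_d+O_p(n^{-1/2})$: $\sqrt{n}(\bar{Y}^{(1)}-\theta_1)=\frac{1}{\sqrt{n}}\sum_{k}\sum_{i\in\mathbb{I}_k}\frac{D_i}{\delta}(Y_i-\theta_1)+o_p(1)$ and $\sqrt{n}(\bar{Y}^{(0)}-\theta_0)=\frac{1}{\sqrt{n}}\sum_{k}\sum_{i\in\mathbb{I}_k}\frac{1-D_i}{1-\delta}(Y_i-\theta_0)+o_p(1)$. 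For the former, write $\widehat{\textsc{G}}_i=\ddot{\textsc{G}}_i-(\widehat{\xi}^{(d)}-\ddot{\xi}^{(d)})$ and substitute the identity for $\widehat{\xi}^{(d)}-\ddot{\xi}^{(d)}$ from (ii); because $\widehat{\xi}^{(d)}$ averages $\widehat{\textsl{g}}^{(d)}_k$ over the \emph{whole} of each $\mathbb{I}_k$, the non--centered parts cancel, and after replacing $n_d/n$ by $\delta_d$ one is left with $\sqrt{n}\,\frac{1}{n_d}\sum_{i\in\mathbb{I}^{(d)}}\widehat{\textsc{G}}_i=(2d-1)\frac{1}{\sqrt{n}}\sum_{k}\sum_{i\in\mathbb{I}_k}\frac{D_i-\delta}{\delta_d}\,\widehat{\textsc{G}}(X_i,\widehat{\textsl{g}}^{(d)}_{k},\ddot{\xi}^{(d)})+o_p(1)$. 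Substituting these into $\sqrt{n}(\widehat{\theta}_{\text{mdel}}-\theta)=\sqrt{n}(\widehat{\theta}^{(1)}_{\text{mdel}}-\theta_1)-\sqrt{n}(\widehat{\theta}^{(0)}_{\text{mdel}}-\theta_0)$ and collecting the four resulting terms gives the asserted decomposition.

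The step I expect to be the main obstacle is fact (iii) together with the cancellation in the previous paragraph: one must check that every error incurred by (a) swapping the data--dependent centering $\widehat{\xi}^{(d)}$ for the conditional mean $\ddot{\xi}^{(d)}$, (b) swapping the sample second moment of $\widehat{\textsc{G}}_i$, and its inverse, for $\ddot{S}_n^{(d)}$ when only (\textbf{A2})--(\textbf{A3})---not convergence to a fixed matrix---is available, and (c) replacing $n_d/n$ by $\delta_d$, is genuinely $o_p(n^{-1/2})$ after multiplication by the remaining $O_p(n^{-1/2})$ factors; and the cancellation of the non--centered parts of $\widehat{\xi}^{(d)}$, which is precisely what produces the $D_i-\delta$ weights that make the summands agree, to first order, with the efficient score $\varphi$, must be carried through as an exact identity rather than merely up to $o_p$.
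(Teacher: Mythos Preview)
Your proposal is correct and follows essentially the same route as the paper's proof: a second-order Taylor expansion of (\ref{eq:lambdas}) to solve for $\widehat{\lambda}_d$, then another expansion of $\widehat{\theta}^{(d)}_{\text{mdel}}$, with the swap from $\widehat{\xi}^{(d)}$ to $\ddot{\xi}^{(d)}$ handled by your fact (ii) (this is Lemma~\ref{lemma3} in the paper) and the $D_i-\delta$ weights produced by the exact identity $\sum_k\sum_{i\in\mathbb{I}_k}\widehat{\textsc{G}}(X_i,\widehat{\textsl{g}}_k^{(d)},\widehat{\xi}^{(d)})=0$. The only difference is organizational---the paper performs the centering step inside the derivation of $\sqrt{n}\widehat{\lambda}_d$ rather than deferring it to the end as you do---but the ingredients and the obstacle you flag are exactly those the paper addresses.
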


In the following theorem, we show that our proposed estimator with a single covariate-outcome model is asymptotically normal and semiparametric efficient when the nuisance estimators are consistent.
		\begin{theorem}
		Under regularity conditions (\textbf{A1})-(\textbf{A4}), if $r=1$ and $$\mathrm{E}\left[\left.\left(\widehat{\textsl{g}}^{(d)}_k(X)-\eta^{(d)}(X)\right)^2\right\vert(W_i)_{i\in \mathbb{I}_k^{(d)^c}}\right] \rightarrow 0$$ in probability as $n\rightarrow\infty$ for $k=1,\cdots,K$ and $d=0,1$, $\sqrt{n}\left(\widehat{\theta}_{\text{mdel}}-\theta\right)$ is asymptotically standard normal with mean zero and variance $\mathrm{Var}[\varphi(W,\theta,\delta,\eta^{(1)},\eta^{(0)})]$, where
		\[
		\varphi(W,\theta,\delta,\eta^{(1)},\eta^{(0)})=\frac{D}{\delta}\left(Y-\eta^{(1)}(X) \right)-\frac{1-D}{1-\delta}\left(Y-\eta^{(0)}(X) \right) + \eta^{(1)}(X) - \eta^{(0)}(X) - \theta.
		\]
        Therefore, $\widehat{\theta}_{\text{mdel}}$ achieves the semiparametric efficiency bound.
		\label{thm 1}
	    \end{theorem}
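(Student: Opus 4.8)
The plan is to work from the asymptotic expansion of Proposition~\ref{pro:3} and to show that, under the hypothesised conditional mean-square consistency of $\widehat{\textsl{g}}^{(d)}_k$, every nuisance-dependent piece of that expansion can be replaced by its population counterpart at cost $o_p(1)$ after multiplication by $\sqrt{n}$. Writing $\mu_d=\mathrm{E}[\eta^{(d)}(X)]$ (which equals $\theta_d$ by randomization), the target is the identity
\[
\sqrt{n}\left(\widehat{\theta}_{\text{mdel}}-\theta\right)=\frac{1}{\sqrt{n}}\sum_{i=1}^{n}\varphi\!\left(W_i,\theta,\delta,\eta^{(1)},\eta^{(0)}\right)+o_p(1).
\]
Granting this, the conclusion follows from the Lindeberg--L\'evy central limit theorem: the summands are i.i.d., have mean zero (it is the efficient influence function), and have finite variance since (\textbf{A4}) and Jensen's inequality give $\mathrm{E}[\eta^{(d)}(X)^2]\le\mathrm{E}[Y^2\mid D=d]<\infty$. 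Slutsky's theorem then delivers asymptotic normality with variance $\mathrm{Var}[\varphi(W,\theta,\delta,\eta^{(1)},\eta^{(0)})]$, and since $\varphi$ is the efficient score of $\theta$ recalled in Section~\ref{sec:review}, this variance is the semiparametric efficiency bound, so $\widehat{\theta}_{\text{mdel}}$ is efficient. It remains to justify the two substitutions.

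First I would show $\ddot{J}_n^{(d)^\tau}\ddot{S}_n^{(d)^{-1}}=1+o_p(1)$ for $d=0,1$. With $r=1$ these scalars are the within-group-$d$ empirical covariance of $Y-\theta_d$ with $\widehat{\textsl{g}}^{(d)}_k(X)-\ddot{\xi}^{(d)}$ and the matching empirical variance. Substituting $\eta^{(d)}$ for $\widehat{\textsl{g}}^{(d)}_k$ and $\mu_d$ for $\ddot{\xi}^{(d)}$ in these averages --- the substitution errors being bounded by Cauchy--Schwarz, the finite conditional second moment in (\textbf{A1}), and the vanishing conditional mean-square error --- a law of large numbers gives, in probability, $\ddot{S}_n^{(d)}\to\mathrm{Var}[\eta^{(d)}(X)]$ and $\ddot{J}_n^{(d)}\to\mathrm{Cov}(Y,\eta^{(d)}(X)\mid D=d)$; the tower property together with randomization shows the latter also equals $\mathrm{Var}[\eta^{(d)}(X)]$, while (\textbf{A2})--(\textbf{A3}) guarantee this common limit is strictly positive and $\ddot{S}_n^{(d)}$ invertible for large $n$, so the ratio tends to $1$.

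Second, and this is the crux, I would remove the estimation error of $\widehat{\textsl{g}}^{(d)}_k$ from the $n^{-1/2}$-scaled sums. Each nuisance-dependent term of the expansion has the form $\pm\,\ddot{J}_n^{(d)^\tau}\ddot{S}_n^{(d)^{-1}}$ times a sum $n^{-1/2}\sum_{k}\sum_{i\in\mathbb{I}_k}\rho_i\bigl(\widehat{\textsl{g}}^{(d)}_k(X_i)-\ddot{\xi}^{(d)}\bigr)$ with bounded, mean-zero, $X$-independent weights $\rho_i$ proportional to $D_i-\delta$. Decompose $\widehat{\textsl{g}}^{(d)}_k(X_i)-\ddot{\xi}^{(d)}=(\eta^{(d)}(X_i)-\mu_d)+(\widehat{\textsl{g}}^{(d)}_k(X_i)-\eta^{(d)}(X_i))-(\ddot{\xi}^{(d)}-\mu_d)$: the centering error is negligible because $|\ddot{\xi}^{(d)}-\mu_d|\le K^{-1}\sum_k\bigl(\mathrm{E}[(\widehat{\textsl{g}}^{(d)}_k(X)-\eta^{(d)}(X))^2\mid\mathbb{I}_k^{(d)^c}]\bigr)^{1/2}=o_p(1)$ by Jensen while $n^{-1/2}\sum_i\rho_i=O_p(1)$, and the leading term is exactly what we want. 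For the estimation-error term I would condition on the auxiliary sample $(W_i)_{i\in\mathbb{I}_k^{(d)^c}}$, on which $\widehat{\textsl{g}}^{(d)}_k$ is a fixed function: by randomization $D_i\independent X_i$, so $\rho_i\bigl(\widehat{\textsl{g}}^{(d)}_k(X_i)-\eta^{(d)}(X_i)\bigr)$ has conditional mean zero, and the conditionally independent summands have conditional variance bounded by a multiple of the conditional mean-square error, which is $o_p(1)$; a conditional Chebyshev bound combined with dominated convergence then yields $n^{-1/2}\sum_{k}\sum_{i\in\mathbb{I}_k}\rho_i\bigl(\widehat{\textsl{g}}^{(d)}_k(X_i)-\eta^{(d)}(X_i)\bigr)=o_p(1)$. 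Feeding Step~1 back in (a $1+o_p(1)$ coefficient multiplying an $O_p(1)$ sum), each nuisance-dependent term reduces to its population form plus $o_p(1)$; substituting into Proposition~\ref{pro:3} and performing a routine algebraic simplification (using $\theta_d=\mu_d$ and $\theta=\theta_1-\theta_0$) then collapses the expansion to the displayed identity.

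I expect the main obstacle to be the estimation-error step just sketched. The nuisance error $\widehat{\textsl{g}}^{(d)}_k-\eta^{(d)}$ is assumed only $o_p(1)$ in conditional $L^2$, not $o_p(n^{-1/2})$, so summing $n$ such errors and dividing by $\sqrt{n}$ is a priori dangerous; what saves it is the interplay of two ingredients: cross-fitting, which --- conditionally on the hold-out sample --- makes the summands independent with mean zero, and the randomized-experiment structure, which makes the known-mean weight $D_i-\delta$ exactly orthogonal to the estimation error, so that no ``product of two slow rates'' condition on $\widehat{\textsl{g}}^{(d)}_k$ is required; this is in effect the RCT counterpart of Neyman orthogonality combined with sample splitting. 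The remaining pieces --- the law-of-large-numbers limit for the coefficient and the final central limit theorem --- are standard.
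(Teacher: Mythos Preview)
Your proposal is correct and follows essentially the same argument as the paper: the same conditional-Chebyshev control of the cross-fitted error $n^{-1/2}\sum_k\sum_{i\in\mathbb{I}_k}\frac{D_i-\delta}{\delta}\bigl(\widehat{\textsl{g}}^{(d)}_k(X_i)-\eta^{(d)}(X_i)\bigr)$, the same Cauchy--Schwarz handling of cross terms, and the same tower-property identity that makes the $J/S$ ratio equal to $1$. The only organizational difference is that the paper goes back to the Lagrange-multiplier equation and re-derives $\sqrt{n}\widehat{\lambda}_d$ in population form before expanding $\widehat{\theta}^{(d)}_{\text{mdel}}$, whereas you take Proposition~\ref{pro:3} as a starting point and post-process its coefficients and sums; both routes land on the same influence-function expansion.
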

Note that the condition, $\mathrm{E}\left[\left.\left(\widehat{\textsl{g}}^{(d)}_k(X)-\eta^{(d)}(X)\right)^2\right\vert(W_i)_{i\in \mathbb{I}_k^{(d)^c}}\right] \rightarrow 0$ in probability as $n\rightarrow\infty$, called ``risk consistency'' in \citet{Wager2016}, is mild for the penalized regression methods when the regression model is correctly specified and sufficient sparsity is satisfied. And the ``risk consistency'' is also mild for many ML methods such as the random forests and neural networks when sufficient sparsity is satisfied. For more details, we refer to \citet{Chernozhukov2018}. When $r>1$, i.e., multiple models are imposed to estimate nuisance parameters, we expect that Theorem \ref{thm 1} still holds when any one of estimators for nuisance parameters satisfies ``risk consistency'' condition. Moreover, we expect that the convergence rate of the estimator with multiple models is identical to that with the oracle model. However, the asymptotic theory requires further complicated assumptions about the structure of covariates such that the weak law of large numbers can be applied to dependent terms, and we remain it as future work. Instead, we use simulation studies in Section \ref{sec:rda and sim} to show that our proposed estimator attains multiple robustness property and is approximately normally distributed with reasonable coverage rates.

\subsection{Variance Recovery for Valid Inference}\label{subsec: vr}
Based on the decomposition of $\widehat{\theta}_{\text{sel}}$ in Proposition \ref{pro:3}, we propose to estimate the variance of $\widehat{\theta}_{\text{mdel}}$ with
		 \begin{equation}
		 \begin{aligned}
		 \widehat{\sigma}^2_{\text{mdel}}=\frac{1}{n}\sum\limits_{d=0,1}\sum\limits_{k=1}^K\sum\limits_{i\in \mathbb{I}^{(d)}_k}\frac{n_d}{n}\widehat{p}_i\left\{\frac{n}{n_1}D_i(Y_i-\widehat{\theta}_{\text{mdel}}^{(1)})-\frac{n}{n_1}(D_i-\frac{n_1}{n}){\widehat{J}_n^{(1)^\tau}}{\widehat{{S}}_n^{(1)^{-1}}}\widehat{\textsc{G}}\left(X_i,\widehat{\textsl{g}}^{(1)}_{k},\widehat{\xi}^{(1)}\right)-\right.\\
		 \left.\frac{n}{n_0}(1-D_i)(Y_i-\widehat{\theta}_{\text{mdel}}^{(0)})+\frac{n}{n_0}(D_i-\frac{n_1}{n}){\widehat{J}_n^{(0)^\tau}}{\widehat{{S}}_n^{(0)^{-1}}}\widehat{\textsc{G}}\left(X_i,\widehat{\textsl{g}}^{(0)}_{k},\widehat{\xi}^{(0)}\right)\right\}^2,
		 \end{aligned}
		 \end{equation}
where 
		 \begin{align*}
		 \widehat{J}_n^{(d)} = \sum\limits_{k=1}^K\sum\limits_{i\in \mathbb{I}^{(d)}_k}\widehat{p}_iY_i\widehat{\textsc{G}}\left(X_i,\widehat{\textsl{g}}^{(d)}_{k},\widehat{\xi}^{(d)}\right),\\
		 \widehat{{S}}_n^{(d)}=\sum\limits_{v\in\{0,1\}}\sum\limits_{k=1}^K\sum\limits_{i\in \mathbb{I}_k^{(v)}}\frac{n_v}{n}\widehat{p}_i\widehat{\textsc{G}}\left(X_i,\widehat{\textsl{g}}^{(d)}_{k},\widehat{\xi}^{(d)}\right)\widehat{\textsc{G}}\left(X_i,\widehat{\textsl{g}}^{(d)}_{k},\widehat{\xi}^{(d)}\right)^\tau,
		 \end{align*}
for $d=0,1$. In the following theorem, we prove that the variance estimator of our EL approach converges to the true variance asymptotically. That is, our variance estimator successfully recovers the true variance. 
		 \begin{theorem}
Under the assumptions and regularity conditions of Theorem \ref{thm 1}, we have 
		 	\[
		 	{\widehat{\sigma}^2_{\text{mdel}}}\rightarrow\frac{ \mathrm{Var}\left[\varphi(W,\theta,\delta,\eta^{(1)},\eta^{(0)})\right]}{n}
		 	\]
		 	 in probability as $n\rightarrow\infty$. 
		 	\label{thm 2}
		 \end{theorem}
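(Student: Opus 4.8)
The plan is to treat $\widehat{\sigma}^2_{\text{mdel}}$ as, up to the factor $1/n$, an empirical average of squared plug-in evaluations of an influence function, and to show this average converges to $\mathrm{E}[\varphi^2]=\mathrm{Var}[\varphi]$. Write $\widehat{\psi}_i$ for the bracketed term in the definition of $\widehat{\sigma}^2_{\text{mdel}}$ and $\varphi(W_i)=\varphi(W_i,\theta,\delta,\eta^{(1)},\eta^{(0)})$. The steps are: (i) establish the uniform bound $\max_i|n_d\widehat{p}_i-1|=o_p(1)$, so the weight $\frac{n_d}{n}\widehat{p}_i$ of an observation $i\in\mathbb{I}^{(d)}_k$ equals $\frac1n(1+o_p(1))$ uniformly; (ii) show the plug-in nuisance quantities converge, $\widehat{\xi}^{(d)}\to\theta_d$, $\widehat{J}_n^{(d)}\to\mathrm{Var}[\eta^{(d)}(X)]$, $\widehat{S}_n^{(d)}\to\mathrm{Var}[\eta^{(d)}(X)]$ (hence $\widehat{J}_n^{(d)^\tau}\widehat{S}_n^{(d)^{-1}}\to1$ by (\textbf{A3}) with $r=1$) and $\widehat{\theta}^{(d)}_{\text{mdel}}\to\theta_d$; (iii) substitute these limits together with the risk-consistency approximation $\widehat{\textsc{G}}(X_i,\widehat{\textsl{g}}^{(d)}_k,\widehat{\xi}^{(d)})=\eta^{(d)}(X_i)-\theta_d+(\text{remainder})$ to write $\widehat{\psi}_i=\varphi(W_i)+R_{i,n}$ with $\frac1n\sum_{i=1}^n R_{i,n}^2=o_p(1)$; (iv) conclude by the weak law of large numbers that $n\widehat{\sigma}^2_{\text{mdel}}=\frac1n\sum_{i=1}^n(1+o_p(1))\{\varphi(W_i)+R_{i,n}\}^2\to\mathrm{E}[\varphi^2]=\mathrm{Var}[\varphi]$, i.e.\ $\widehat{\sigma}^2_{\text{mdel}}\to\mathrm{Var}[\varphi]/n$ in probability.

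For (i), Proposition \ref{pro:2} gives $\|\widehat{\lambda}_d\|=O_p(1/\sqrt n)$, while (\textbf{A1}) supplies a finite conditional second moment of each coordinate of $\widehat{\textsl{g}}^{(d)}_k(X)$; the elementary fact $\mathrm{E}[Z^2]<\infty\Rightarrow\max_{i\le m}|Z_i|=o_p(\sqrt m)$, applied conditionally on $(W_i)_{i\in\mathbb{I}_k^{(d)^c}}$, yields $\max_{i\in\mathbb{I}^{(d)}}\|\widehat{\textsc{G}}(X_i,\widehat{\textsl{g}}^{(d)}_k,\widehat{\xi}^{(d)})\|=o_p(\sqrt n)$, hence $\max_i|n_d\widehat{p}_i-1|=\max_i|\widehat{\lambda}_d^{\tau}\widehat{\textsc{G}}(X_i,\widehat{\textsl{g}}^{(d)}_k,\widehat{\xi}^{(d)})|=o_p(1)$. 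For (ii), I condition on $(W_i)_{i\in\mathbb{I}_k^{(d)^c}}$ so each $\widehat{\textsl{g}}^{(d)}_k$ is a fixed function; since every $X_i$ paired with $\widehat{\textsl{g}}^{(d)}_k$ inside $\widehat{\xi}^{(d)},\widehat{J}_n^{(d)},\widehat{S}_n^{(d)}$ lies in fold $k$ and is therefore independent of $\widehat{\textsl{g}}^{(d)}_k$, a conditional weak law of large numbers (second moments from (\textbf{A1})--(\textbf{A2})) combined with the risk-consistency hypothesis of Theorem \ref{thm 1}, $\mathrm{E}[(\widehat{\textsl{g}}^{(d)}_k(X)-\eta^{(d)}(X))^2\mid(W_i)_{i\in\mathbb{I}_k^{(d)^c}}]\to0$, gives $\widehat{\xi}^{(d)}\to\mathrm{E}[\eta^{(d)}(X)]=\theta_d$; replacing $\widehat{\xi}^{(d)}$ by $\theta_d$ and $\widehat{p}_i$ by $1/n_d$ (the incurred error is at most $\max_i|n_d\widehat{p}_i-1|\cdot\frac1n\sum_i\|\widehat{\textsc{G}}\|^2=o_p(1)O_p(1)$) gives $\widehat{S}_n^{(d)}\to\mathrm{E}[(\eta^{(d)}(X)-\theta_d)^{\otimes2}]=\mathrm{Var}[\eta^{(d)}(X)]$, and, after using the empirical-likelihood constraint $\sum_k\sum_{i\in\mathbb{I}^{(d)}_k}\widehat{p}_i\widehat{\textsc{G}}(X_i,\widehat{\textsl{g}}^{(d)}_k,\widehat{\xi}^{(d)})=0$ to recenter $Y_i$ together with randomization and iterated expectations, $\widehat{J}_n^{(d)}\to\mathrm{Cov}[Y(d),\eta^{(d)}(X)]=\mathrm{Var}[\eta^{(d)}(X)]$. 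Finally $\widehat{\theta}^{(d)}_{\text{mdel}}=\sum_{i\in\mathbb{I}^{(d)}}\widehat{p}_iY_i=\bar{Y}^{(d)}+o_p(1)\to\theta_d$, using (\textbf{A4}) and the bound from step (i).

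For (iii), substituting $\frac{n}{n_1}\to\delta^{-1}$, $\frac{n}{n_0}\to(1-\delta)^{-1}$, $\frac{n}{n_1}(D_i-\frac{n_1}{n})\to\frac{D_i-\delta}{\delta}$, $\frac{n}{n_0}(D_i-\frac{n_1}{n})\to\frac{D_i-\delta}{1-\delta}$, the limits of step (ii), and $\widehat{\textsc{G}}(X_i,\widehat{\textsl{g}}^{(d)}_k,\widehat{\xi}^{(d)})\approx\eta^{(d)}(X_i)-\theta_d$, the leading term of $\widehat{\psi}_i$ becomes the limiting summand of the decomposition in Proposition \ref{pro:3}, which -- after a short expansion using $\mathrm{E}[\eta^{(d)}(X)]=\theta_d$ and the identities $\frac{D-\delta}{\delta}=\frac{D}{\delta}-1$, $\frac{D-\delta}{1-\delta}=1-\frac{1-D}{1-\delta}$ -- collapses to $\varphi(W_i)$. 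Every discrepancy between $\widehat{\psi}_i$ and $\varphi(W_i)$ is absorbed into $R_{i,n}$, and each contribution to $\frac1n\sum_i R_{i,n}^2$ factors as (a scalar or averaged quantity that is $o_p(1)$: $\widehat{\xi}^{(d)}-\theta_d$, $\widehat{J}_n^{(d)^\tau}\widehat{S}_n^{(d)^{-1}}-1$, $\widehat{\theta}^{(d)}_{\text{mdel}}-\theta_d$, $\frac{n}{n_d}$ minus its limit, or the conditional risk $\mathrm{E}[(\widehat{\textsl{g}}^{(d)}_k(X)-\eta^{(d)}(X))^2\mid(W_i)_{i\in\mathbb{I}_k^{(d)^c}}]$) times (an average $\frac1n\sum_i(\cdot)^2$ that is $O_p(1)$ by (\textbf{A1}), (\textbf{A2}), (\textbf{A4})), hence is $o_p(1)$. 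For (iv), expand $\{a_i+R_{i,n}\}^2=a_i^2+2a_iR_{i,n}+R_{i,n}^2$ with $a_i=\varphi(W_i)$: $|\frac2n\sum a_iR_{i,n}|\le2(\frac1n\sum a_i^2)^{1/2}(\frac1n\sum R_{i,n}^2)^{1/2}=O_p(1)o_p(1)$, the $1+o_p(1)$ weight changes the average by $o_p(1)\cdot O_p(1)=o_p(1)$, and (\textbf{A4}) gives $\mathrm{E}[\varphi^2]<\infty$, so $\frac1n\sum a_i^2\to\mathrm{E}[\varphi^2]=\mathrm{Var}[\varphi]$, which finishes the proof.

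The main obstacle is step (iii): $\widehat{\psi}_i$ is built from the data-dependent functions $\widehat{\textsl{g}}^{(d)}_k$, controlled only through the conditional $L^2$ risk bound, so every approximation must be performed conditionally on the out-of-fold subsamples and then integrated out along the usual cross-fitting argument, and the global quantities $\widehat{\xi}^{(d)},\widehat{J}_n^{(d)},\widehat{S}_n^{(d)}$ must first be shown to converge to constants before being treated as plug-ins. Because the ``main'' terms are only $O(1)$ in $L^2$ rather than uniformly bounded, one must verify that after averaging over $i$ and squaring, their products with the replacement errors remain $o_p(1)$; this is exactly where the finite-moment conditions (\textbf{A1}), (\textbf{A2}), (\textbf{A4}) enter, through $\frac1n\sum_i\|\widehat{\textsc{G}}(X_i,\widehat{\textsl{g}}^{(d)}_k,\widehat{\xi}^{(d)})\|^2=O_p(1)$ and $\frac1n\sum_iY_i^2=O_p(1)$ together with the uniform $\widehat{p}_i$-bound. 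The algebraic identity reducing the limiting summand to $\varphi$ is routine but must be tracked carefully because of the $\frac{D-\delta}{\delta}$ and $\frac{D-\delta}{1-\delta}$ reweighting factors.
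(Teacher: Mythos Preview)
Your proposal is correct and follows essentially the same route as the paper's proof: both replace the empirical-likelihood weights by their uniform limit, show that $\widehat{J}_n^{(d)}$ and $\widehat{S}_n^{(d)}$ (hence their ratio) reduce, via risk-consistency and the arguments already developed for Theorem~\ref{thm 1}, to the corresponding population quantities built from $\eta^{(d)}$, substitute these into $\widehat{\sigma}^2_{\text{mdel}}$, and collapse the resulting summand to $\varphi(W_i)^2$ before applying the law of large numbers. The paper's own proof is considerably terser---it simply asserts the limits of $\widehat{J}_n^{(d)}$ and $\widehat{S}_n^{(d)}$ ``following the proof of Theorem~\ref{thm 1}'' and then writes down the two algebraic simplifications---so your steps (i)--(iv) amount to a more careful and explicit articulation of the same argument, in particular spelling out the uniform bound $\max_i|n_d\widehat{p}_i-1|=o_p(1)$ and the $L^2$ control of the remainder that the paper leaves implicit.
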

	 
	     \begin{corollary}
	     \label{cor:aym std nor}
	     Under the assumptions and  regularity conditions of Theorem \ref{thm 1}, 
	     	$\left({\widehat{\sigma}_{\text{mdel}}}\right)^{-1}\left(\widehat{\theta}_{\text{mdel}}-\theta\right)$ is asymptotically standard normal.
	     \end{corollary}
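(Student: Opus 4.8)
\textbf{Proof proposal for Corollary~\ref{cor:aym std nor}.}
The plan is to read off the conclusion from Theorems~\ref{thm 1} and~\ref{thm 2} by a Slutsky argument. Write
\[
\frac{\widehat{\theta}_{\text{mdel}}-\theta}{\widehat{\sigma}_{\text{mdel}}}
=\frac{\sqrt{n}\left(\widehat{\theta}_{\text{mdel}}-\theta\right)}{\sqrt{n\,\widehat{\sigma}^2_{\text{mdel}}}}.
\]
By Theorem~\ref{thm 1}, the numerator converges in distribution to a centred normal law with variance $\sigma^2:=\mathrm{Var}\left[\varphi(W,\theta,\delta,\eta^{(1)},\eta^{(0)})\right]$. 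By Theorem~\ref{thm 2}, $n\,\widehat{\sigma}^2_{\text{mdel}}\to\sigma^2$ in probability, so (provided $\sigma^2>0$) the continuous mapping theorem gives $\sqrt{n\,\widehat{\sigma}^2_{\text{mdel}}}\to\sigma$ in probability. Slutsky's theorem then yields that the displayed ratio converges in distribution to $N(0,\sigma^2)/\sigma=N(0,1)$, which is the claim.

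The only point that requires a word of care is the strict positivity of $\sigma^2$, needed both to divide by $\sigma$ in the limit and to guarantee that $\widehat{\sigma}_{\text{mdel}}$ is eventually a well-defined (real, nonzero) quantity. This is where I would spend the small amount of work in the proof. Conditioning on $(X,D)$, the efficient-score term $\tfrac{D}{\delta}\left(Y-\eta^{(1)}(X)\right)-\tfrac{1-D}{1-\delta}\left(Y-\eta^{(0)}(X)\right)$ has conditional variance $\tfrac{D}{\delta^2}\mathrm{Var}(Y\mid X,D=1)+\tfrac{1-D}{(1-\delta)^2}\mathrm{Var}(Y\mid X,D=0)$, which is bounded away from $0$ unless $Y$ is almost surely a deterministic function of $(X,D)$; equivalently, positivity of $\sigma^2$ can be extracted from the rank condition in (\textbf{A2}) together with (\textbf{A4}). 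Since by Theorem~\ref{thm 2} the quantity $n\,\widehat{\sigma}^2_{\text{mdel}}$ is within $o_p(1)$ of this strictly positive constant, we get $\mathrm{P}\left(\widehat{\sigma}^2_{\text{mdel}}>0\right)\to 1$, so the left-hand side above is well defined on an event of probability tending to one and the Slutsky step is legitimate.

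I do not anticipate a substantive obstacle: once the positivity of the limiting variance is recorded, the corollary is an immediate consequence of the two preceding theorems via Slutsky's lemma, and no further estimates or conditions are required beyond (\textbf{A1})--(\textbf{A4}) and the risk-consistency hypothesis already assumed in Theorem~\ref{thm 1}.
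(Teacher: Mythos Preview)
Your proposal is correct and matches the paper's treatment: the corollary is stated without a separate proof, being an immediate consequence of Theorems~\ref{thm 1} and~\ref{thm 2} via the standard Slutsky argument you outline. Your extra care about the strict positivity of $\sigma^2$ is a reasonable addition, though the paper does not address it explicitly.
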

Corollary \ref{cor:aym std nor} leads to a $100(1-\alpha)\%$ Wald confidence interval of $\theta$: $$\operatorname{CI}:=\left(\widehat{\theta}_{\text{mdel}}-z_{\alpha/2}{{\widehat{\sigma}_{\text{mdel}}}},\widehat{\theta}_{\text{mdel}}+z_{\alpha/2}{{\widehat{\sigma}_{\text{mdel}}}}\right),$$ where $z_{\alpha/2}$ is the upper quantile of the standard normal distribution.\par
As we mentioned in Section \ref{subsec:tea}, Zhang's approach with the covariate-outcome relationship estimated by an ML method seriously underestimates the variance. In contrast, our proposed EL approach recovers the true variance. To illustrate this point, we conduct a simulation study following the setting of \citet{Wager2016}. The setting is a special case of the simulation studies in Section \ref{subsec:sim} with coefficients equal to $(1,0,\cdots,0)$ or a permutation of $(1,\frac{1}{2},\cdots,\frac{1}{p})$. And for both Zhang's and our proposed MDEL approach, the covariate-outcome relationship is modelled by Lasso.\par

\begin{figure}[H]
	\centering
	\begin{subfigure}[t]{0.48\textwidth}
	\includegraphics[width=1\textwidth]{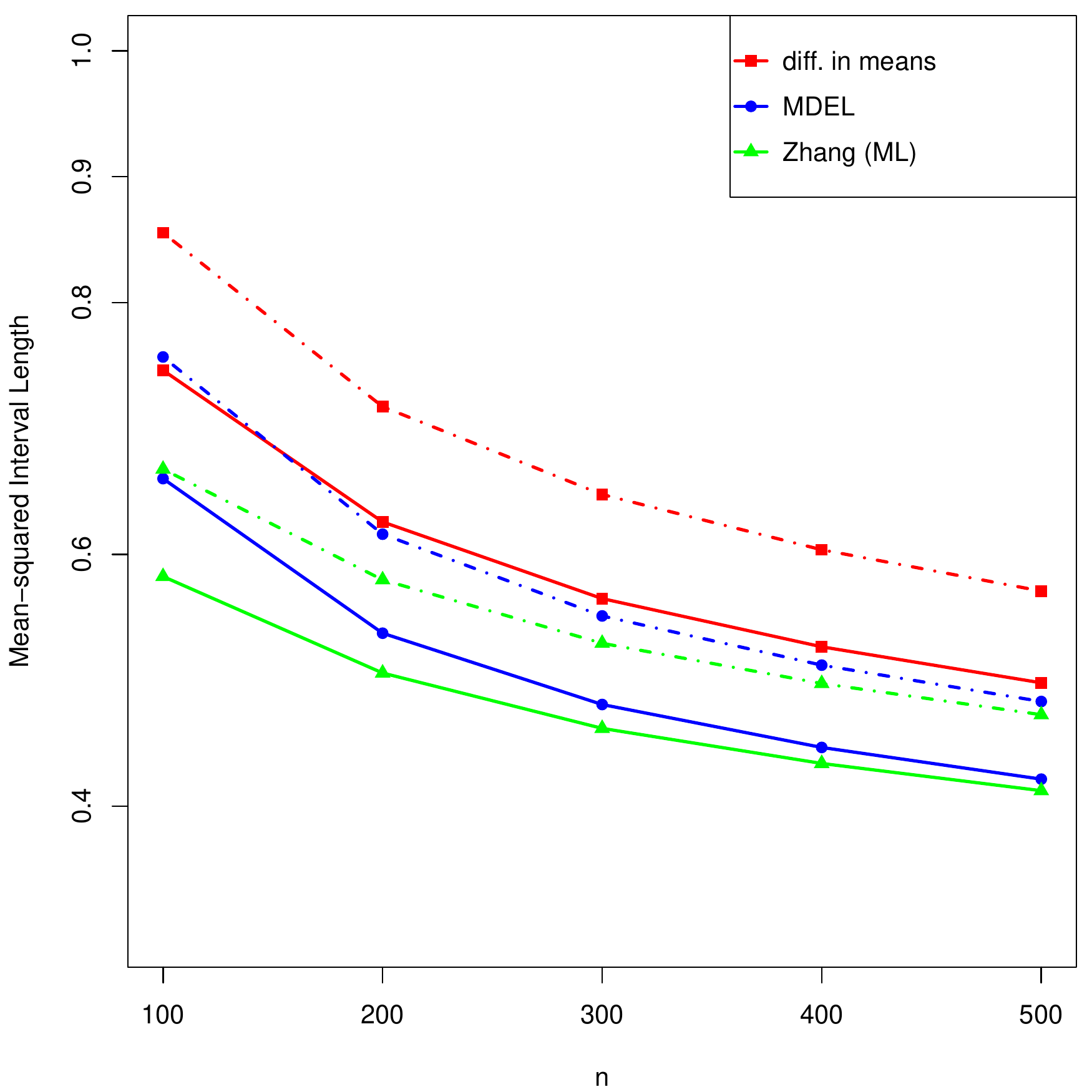}
	\caption{Mean-squared length of confidence intervals.}
	\end{subfigure}
	\begin{subfigure}[t]{0.48\textwidth}
	\includegraphics[width=1\textwidth]{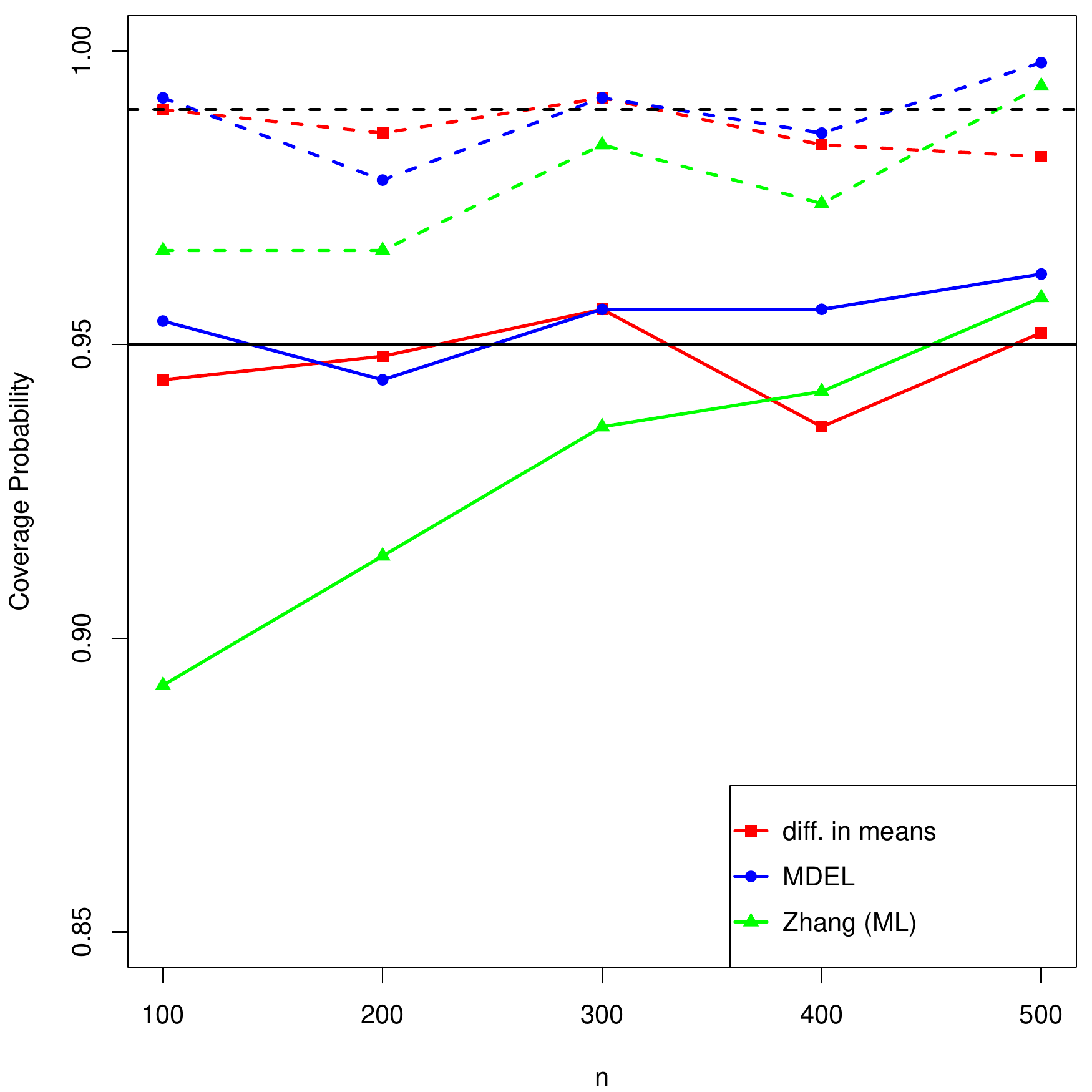}
	\caption{Coverage probability of confidence intervals.}
	\end{subfigure}
	\caption{Simulation results based on 500 Monte Carlo replications with $\beta^{(1)}=\beta^{(0)}=(1,0,\cdots,0)$, $p=500$, $\rho=0$, $\delta=0.5$ and sample size $n$ ranging from $100$ to $500$ under a simple setting described in section \ref{subsec:sim}. In the left panel, solid lines depict the mean-squared lengths of $95\%$ Wald confidence intervals and dashed-dotted lines depict the mean squared lengths of $99\%$ Wald confidence intervals. In the right panel, solid lines depict coverage proportions of $95\%$ Wald confidence intervals that cover the true $\theta$ and dashed lines depict coverage proportions of $99\%$ Wald confidence intervals that cover the true $\theta$.}
	\label{fig:S1ELC}
\end{figure}
		
\begin{figure}[H]
	\centering
	\begin{subfigure}[t]{0.48\textwidth}
	\includegraphics[width=1\textwidth]{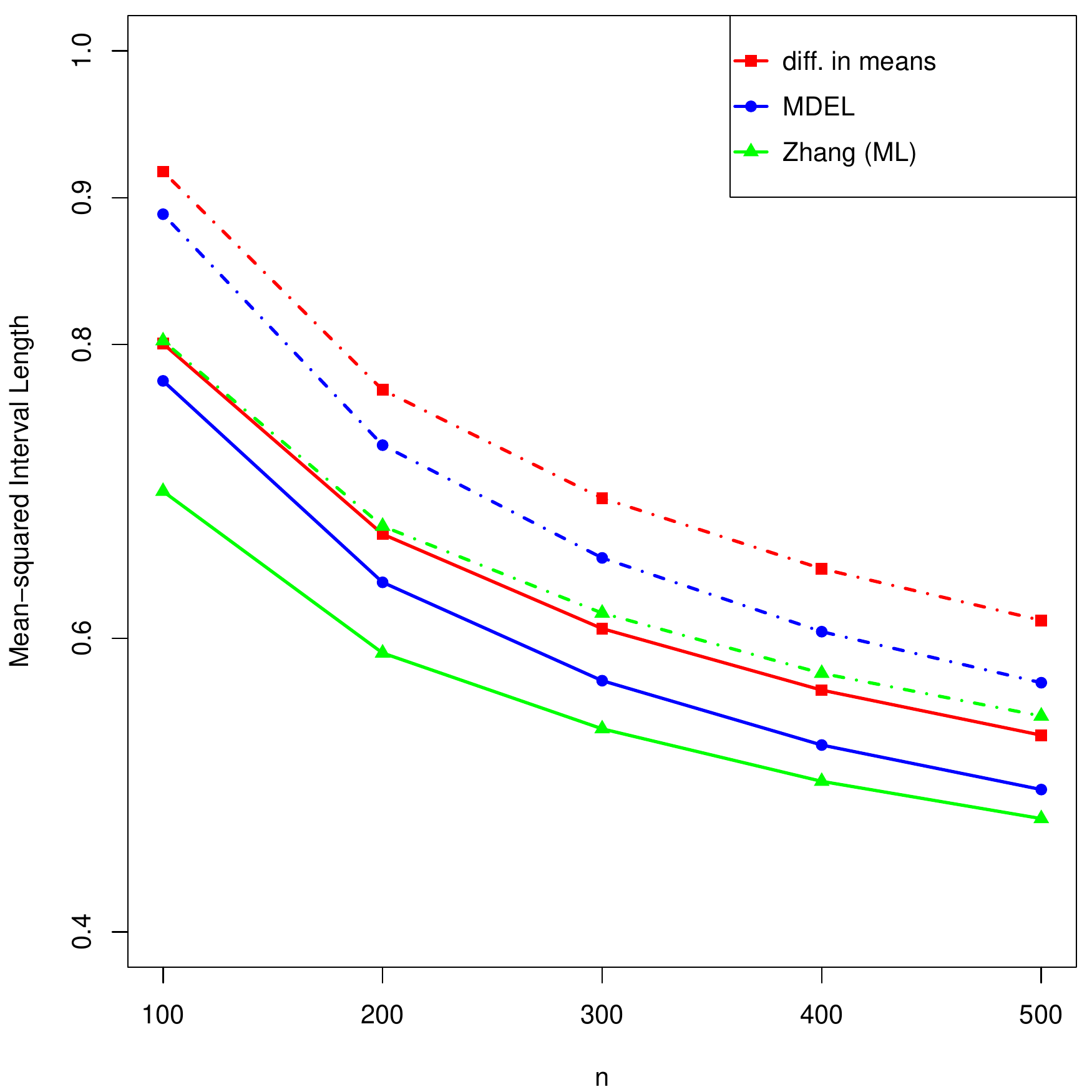}
	\caption{Mean-squared length of confidence intervals.}
	\end{subfigure}
	\begin{subfigure}[t]{0.48\textwidth}
	\includegraphics[width=1\textwidth]{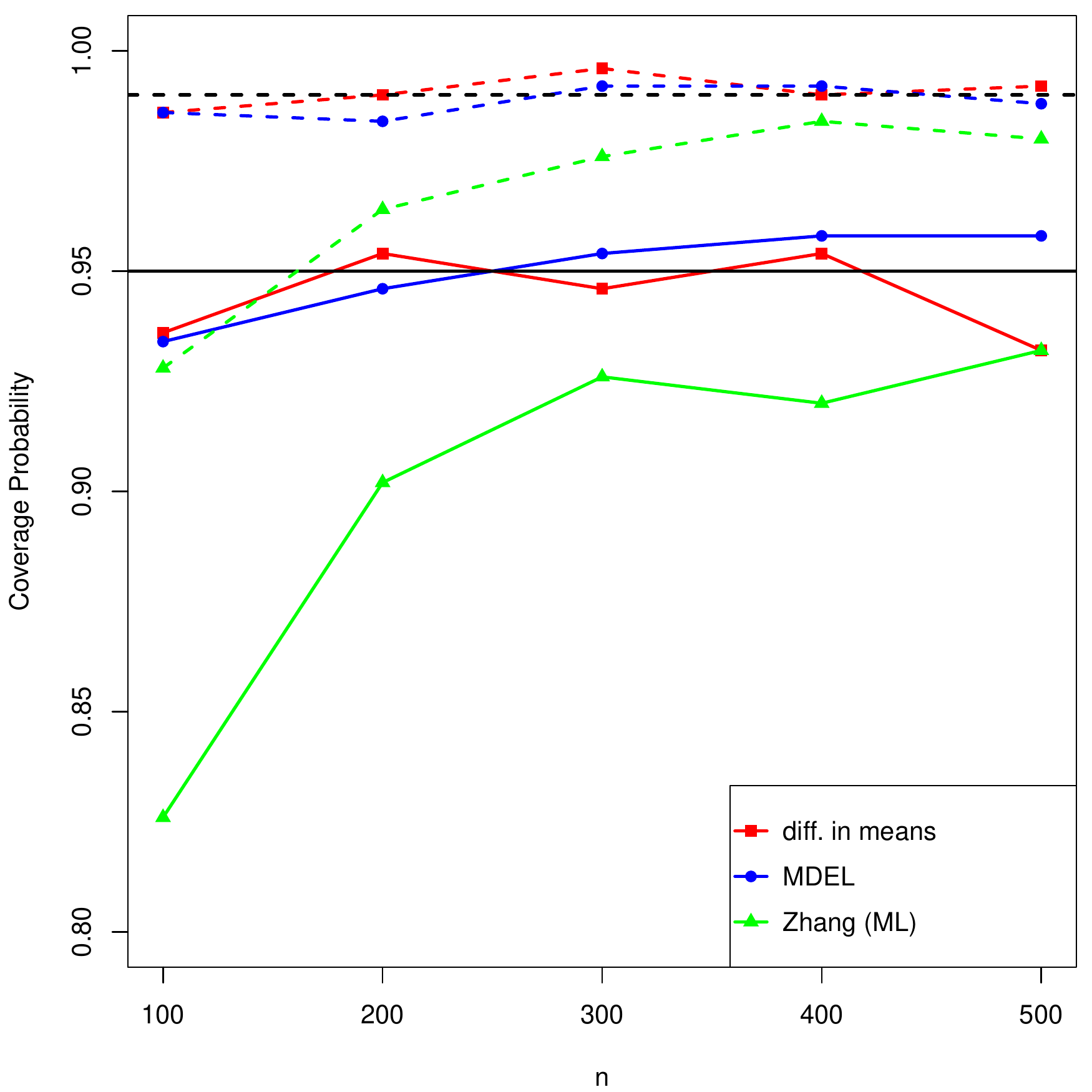}
	\caption{Coverage probability of confidence intervals.}
	\end{subfigure}
	\caption{Simulation results based on 500 Monte Carlo replications with $\beta^{(d)}$ equal to a random permutation of $(\frac{1}{1},\frac{1}{2},\cdots,\frac{1}{p})$ for $d=0,1$, $p=500$, $\rho=0.5$, $\delta=0.5$ and sample size $n$ ranging from $100$ to $500$ under the setting described in Section \ref{subsec:sim}. In the left panel, solid lines depict the mean-squared length of $95\%$ Wald confidence intervals and dashed-dotted lines depict the mean-squared length of $99\%$ Wald confidence intervals. In the right panel, solid lines depict coverage proportions of $95\%$ Wald confidence intervals that cover the true $\theta$ and dashed lines depict coverage proportions of $99\%$ Wald confidence intervals that cover the true $\theta$.}
	\label{fig:S3ELC}
\end{figure}

In Figure \ref{fig:S1ELC}, where the true signal is very sparse, the 95\% confidence intervals and 99\% confidence intervals of our EL approach are shorter than the corresponding confidence intervals based on $\widehat{\theta}_{\text{dim}}$. The coverage probability of the 95\% or 99\% confidence intervals of Zhang's EL approach is substantially lower than the true level for $n=100,200$. In contrast, the coverage probability of our EL approach resembles the nominal level for any sample size.\par

In Figure \ref{fig:S3ELC}, where the true signal is geometric, the 95\% confidence intervals and 99\% confidence intervals of our EL approach are still shorter than the corresponding ones based on $\widehat{\theta}_{\text{dim}}$. The coverage probability of the 95\% or 99\% confidence intervals of Zhang's approach is significantly below the nominal level for $n$ ranging from 100 to 500. In contrast, the coverage probability of our EL approach are very close to the nominal level for any sample size.\par
In summary, when the true signal is either $(1,0,\cdots,0)$ or a permutation of $(1,\frac{1}{2},\cdots,\frac{1}{p})$, Zhang's approach is not desirable as his proposed confidence intervals fail to cover $\theta$ in reasonable proportions whereas our approach recovers true variance and the coverage probabilities are close to the nominal levels.

\section{Practical Implementations}\label{sec:pract}
In our simulation studies and real data analysis, we utilize three popular ML methods to estimate $\eta^{(d)}$. One is Lasso \citep{Tibshirani1996} and the second one is SCAD \citep{Fan2001}. Lasso and SCAD are both penalized regression methods. Generally, they both lead to sparse solutions and thus work well for variable selection purpose. However, compared with Lasso, large coefficients would not be shrunken by SCAD and some small coefficients cannot survive after punishment. Therefore, SCAD works better for models with strong and sparse signals. The third method we use is random forests \citep{Breiman2001}, which are increasingly popular in recent years because of its flexibility and outstanding prediction ability for real complex data.\par
For Lasso, the penalty parameter $\lambda$ is determined by cross-validation criterion using \texttt{cv.glmnet} in R-package \texttt{glmnet} with 10-folds in this paper.
For SCAD, the first tuning parameter $a$ is chosen to be default 3.7 and penalty parameter $\lambda$ is determined by cross-validation criterion using \texttt{cv.ncvreg} in R-package \texttt{ncvreg} with 10-folds.
For random forests, we build 500 regression trees using \texttt{ranger} in R-package \texttt{ranger}, a fast implementation of random forests
for high dimensional data in C++ and R, with parameters set to be default.\par

\section{Simulation Studies and Real Data Analysis}\label{sec:rda and sim}

\subsection{Simulation Studies}\label{subsec:sim}
We consider linear models for $\eta^{(d)}$ with the dimension of covariates $p$ larger than the sample size $n$. The universal settings of our simulations are as follows. The covariates $X_i,i=1,\cdots,n$ are independent and identically generated from multivariate Gaussian $\mathcal{N}(1_p,\Sigma)$, where $1_p=(1,\cdots,1)^\tau$ is a $p$-dimensional vector. The assignment probability is fixed to be $\delta=0.5$ and $D_i\overset{i.i.d}{\sim}\text{Bernoulli}(\delta)$. The outcome $Y_i$ of the i-th unit under treatment $D_i=d_i$ are generated from $\mathcal{N}(X^\tau_i\beta^{(d_i)}+5I(d_i=1), 1)$, $i=1,\cdots,n$. We consider four different size scales 
\[
(n,p)=\{(80,200),(160,200),(200,1000),(800,1000)\}.
\]
Define $0^0=1$, signals and the covariance matrix of the covariates, $\mathbf{\Sigma}$, are different as follows.
\begin{simulation}[Sufficient Sparsity]
$\beta^{(1)}_i=3\cdot1(i\leq3), \beta^{(0)}_j=2\cdot1(j\leq3)$ and $\mathbf{\Sigma}_{ij}=\rho^{1(i\neq j)}$.
\label{sim 1}
\end{simulation}

\begin{simulation}[\citet{Fan2012}]
\[
(\beta^{(d)}_i)_{i=1,2,3,5,7,11,13,17,19,23}=(1.01, -0.06, 0.72, 1.55, 2.32, -0.36, 3.75, -2.04, -0.13, 0.61)^\tau, d=0,1
\]
and $\mathbf{\Sigma}_{ij}=\rho^{|i-j|}$.
\label{sim 2}
\end{simulation}

\begin{simulation}[Dense Geometry \citep{Wager2016}]
$\beta^{(1)}_i=11^{-10i/p}, \beta^{(0)}_j=10^{-10j/p}$ and $\mathbf{\Sigma}_{ij}=\rho^{|i-j|}$.
\label{sim 3}
\end{simulation}
 
Simulation \ref{sim 1} has sparse and strong signals. Simulation \ref{sim 2} has sparse signals with more challenging cofficients. Simulation \ref{sim 3} is identical to the geometric case of \citet{Wager2016}. Results of simulations are all based on $5000$ Monte Carlo data sets and given in Table \ref{tab:sim1}, \ref{tab:sim2} and \ref{tab:sim3}. First, we summarize the results in Table \ref{tab:sim1} and \ref{tab:sim2} (sparse case):
     \begin{itemize}
     	\item[(a)] Compared with the simple approach of difference in means, the EL estimators with any outcome model have significantly smaller SDs and RMSEs.
     	\item[(b)] Among the EL estimators with one outcome model, the estimators using SCAD perform relatively better than other estimators, and estimators using random forests perform worst in sense of RMSE. As expected, the EL estimators with multiple models perform closest to those with SCAD, and better than all other estimators when $\rho=0.5$ and $(n,p)=(80,200)$ and when $(n,p)=(800,1000)$.
     	\item[(c)] Using SCAD to model the covariate-outcome relationship, the EL estimators perform similarly to Wager's estimators in terms of SD and RMSE. However, when Lasso or random forests model are adopted, the EL estimators outperform Wager's estimators. In Simulation \ref{sim 1}, compared with Wager's estimators with random forests, the EL estimators with random forests have an average of $15.9\%$ reduction in RMSE for $\rho=0$ and $6.3\%$ reduction in RMSE for $\rho=0.5$. In simulation \ref{sim 2}, compared with Wager's estimators with random forests, the EL estimators with random forests have an average of $14.5\%$ reduction in RMSE for $\rho=0$ and $18.4\%$ reduction in RMSE for $\rho=0.5$. The reduction is more obvious when $n$ is larger.
     	\item[(d)] The SEs of the EL estimators with one outcome model are very close to their corresponding SDs, and the coverage probabilities of the EL estimators with one model are close to the nominal levels. However, the variances of the EL estimators with multiple models are slightly overestimated, but in a reasonable range.
     \end{itemize}
Results in Table \ref{tab:sim3} are summarized as follows (dense case):
     \begin{itemize}
     	\item[(a)] When $\rho=0$ and $n$ is small compared to the dimension of covariates($(n,p)=(80,200),(200,1000)$), compared with the simple approach of difference in means, there is no significant reduction in RMSE for the EL estimators. Otheriwse, compared with the difference in means estimators, the EL estimators with any outcome model have significantly smaller SDs and RMSEs.
     	\item[(b)] When $\rho=0$ and $n$ is small compared to the dimension of covariates, there is no significant difference among different estimators. In other cases, among the EL estimators using one outcome model, the estimators with Lasso generally perform best and the estimators with random forests perform worst in sense of RMSE. As expected, the EL estimators with multiple ML models perform closest to the ones with the best model.
     	\item[(c)] Under Lasso or SCAD model, the EL estimators perform similarly to Wager's estimators in terms of RMSE. However, under random forests model, the EL estimators significantly outperform Wager's estimators. For example, when $(n,p)=(160,200)$ and $\rho=0.5$, the RMSE of $\widehat{\theta}_{\text{mdel}}$(RF) is $0.304$ while the RMSE of $\widehat{\theta}_{\text{wdtt}}$(RF) is $0.446$.
     	\item[(d)] When $\rho=0$ and $(n,p)=(80,200)$, the variances of EL estimators are underestimated and the coverage rates of the EL estimators are smaller than the nominal levels, but still in reasonable range. 

     \end{itemize}

\begin{table}[H]
     	\caption{Results of Simulation \ref{sim 1} based on 5000 Monte Carlo replications}
     	\resizebox{\linewidth}{!}{
     		\begin{threeparttable}
            \centering
     				\begin{tabular}{lrrrrrrrrrrrrrr}
    \toprule
    &       & \multicolumn{6}{l}{$\rho=0$}                     &       & \multicolumn{6}{l}{$\rho=0.5$} \\
     					\cmidrule{3-8}\cmidrule{10-15}    Estimator &       & \multicolumn{1}{l}{Bias} & \multicolumn{1}{l}{SD} & \multicolumn{1}{l}{SE} & \multicolumn{1}{l}{RMSE} & \multicolumn{1}{l}{Cov95} & \multicolumn{1}{l}{Cov99} &       & \multicolumn{1}{l}{Bias} & \multicolumn{1}{l}{SD} & \multicolumn{1}{l}{SE} & \multicolumn{1}{l}{RMSE} & \multicolumn{1}{l}{Cov95} & \multicolumn{1}{l}{Cov99} \\
     					\midrule
     					\multicolumn{15}{l}{$(n,p)=(80,200)$} \\
    $\widehat{\theta}_{\text{dim}}$ &       & -0.004 & 1.002 & 1.015 & 1.002 & 0.950 & 0.988 &       & -0.012 & 1.389 & 1.417 & 1.389 & 0.953 & 0.991 \\
    $\widehat{\theta}_{\text{wdtt}}$(LASSO) &       & 0.000 & 0.386 & 0.383 & 0.386 & 0.943 & 0.987 &       & 0.000 & 0.407 & 0.409 & 0.407 & 0.951 & 0.991 \\
    $\widehat{\theta}_{\text{wdtt}}$(SCAD) &       & 0.001 & 0.318 & 0.315 & 0.318 & 0.948 & 0.989 &       & 0.002 & 0.381 & 0.378 & 0.381 & 0.947 & 0.988 \\
    $\widehat{\theta}_{\text{wdtt}}$(RF) &       & -0.005 & 0.959 & 0.970 & 0.959 & 0.947 & 0.988 &       & -0.010 & 0.820 & 0.841 & 0.820 & 0.955 & 0.990 \\
    $\widehat{\theta}_{\text{mdel}}$(LASSO) &       & 0.001 & 0.349 & 0.349 & 0.349 & 0.949 & 0.987 &       & 0.000 & 0.399 & 0.406 & 0.399 & 0.952 & 0.991 \\
    $\widehat{\theta}_{\text{mdel}}$(SCAD) &       & 0.001 & 0.318 & 0.316 & 0.318 & 0.945 & 0.990 &       & 0.002 & 0.382 & 0.381 & 0.382 & 0.948 & 0.988 \\
    $\widehat{\theta}_{\text{mdel}}$(RF) &       & -0.008 & 0.945 & 0.948 & 0.945 & 0.948 & 0.988 &       & -0.008 & 0.743 & 0.772 & 0.743 & 0.956 & 0.992 \\
    $\widehat{\theta}_{\text{mdel}}$(MULTI) &       & 0.003 & 0.321 & 0.341 & 0.321 & 0.959 & 0.992 &       & 0.002 & 0.374 & 0.397 & 0.374 & 0.957 & 0.993 \\
    \midrule
    \multicolumn{15}{l}{$(n,p)=(160,200)$} \\
    $\widehat{\theta}_{\text{dim}}$ &       & -0.004 & 0.713 & 0.718 & 0.713 & 0.950 & 0.990 &       & -0.016 & 0.999 & 1.003 & 0.999 & 0.951 & 0.991 \\
    $\widehat{\theta}_{\text{wdtt}}$(LASSO) &       & -0.001 & 0.225 & 0.228 & 0.225 & 0.953 & 0.989 &       & -0.002 & 0.261 & 0.263 & 0.261 & 0.946 & 0.992 \\
    $\widehat{\theta}_{\text{wdtt}}$(SCAD) &       & -0.001 & 0.210 & 0.212 & 0.210 & 0.948 & 0.989 &       & -0.002 & 0.250 & 0.251 & 0.250 & 0.949 & 0.990 \\
    $\widehat{\theta}_{\text{wdtt}}$(RF) &       & -0.004 & 0.653 & 0.658 & 0.653 & 0.951 & 0.990 &       & -0.005 & 0.550 & 0.556 & 0.550 & 0.951 & 0.990 \\
    $\widehat{\theta}_{\text{mdel}}$(LASSO) &       & -0.001 & 0.215 & 0.218 & 0.215 & 0.950 & 0.990 &       & -0.001 & 0.259 & 0.262 & 0.259 & 0.950 & 0.992 \\
    $\widehat{\theta}_{\text{mdel}}$(SCAD) &       & -0.001 & 0.210 & 0.212 & 0.210 & 0.950 & 0.989 &       & -0.002 & 0.250 & 0.252 & 0.250 & 0.952 & 0.991 \\
    $\widehat{\theta}_{\text{mdel}}$(RF) &       & -0.004 & 0.536 & 0.560 & 0.536 & 0.959 & 0.992 &       & -0.001 & 0.510 & 0.519 & 0.510 & 0.952 & 0.991 \\
    $\widehat{\theta}_{\text{mdel}}$(MULTI) &       & -0.001 & 0.212 & 0.222 & 0.212 & 0.956 & 0.991 &       & -0.002 & 0.251 & 0.261 & 0.251 & 0.959 & 0.991 \\
    \midrule
    \multicolumn{15}{l}{$(n,p)=(200,1000)$} \\
    $\widehat{\theta}_{\text{dim}}$ &       & -0.008 & 0.642 & 0.642 & 0.642 & 0.946 & 0.989 &       & -0.019 & 0.895 & 0.896 & 0.895 & 0.947 & 0.990 \\
    $\widehat{\theta}_{\text{wdtt}}$(LASSO) &       & 0.002 & 0.206 & 0.207 & 0.206 & 0.950 & 0.988 &       & 0.002 & 0.237 & 0.238 & 0.237 & 0.950 & 0.989 \\
    $\widehat{\theta}_{\text{wdtt}}$(SCAD) &       & 0.003 & 0.190 & 0.189 & 0.190 & 0.948 & 0.990 &       & 0.003 & 0.225 & 0.225 & 0.225 & 0.951 & 0.989 \\
    $\widehat{\theta}_{\text{wdtt}}$(RF) &       & -0.007 & 0.618 & 0.618 & 0.618 & 0.946 & 0.990 &       & -0.009 & 0.510 & 0.508 & 0.510 & 0.947 & 0.988 \\
    $\widehat{\theta}_{\text{mdel}}$(LASSO) &       & 0.004 & 0.194 & 0.194 & 0.194 & 0.949 & 0.990 &       & 0.003 & 0.235 & 0.236 & 0.235 & 0.951 & 0.989 \\
    $\widehat{\theta}_{\text{mdel}}$(SCAD) &       & 0.003 & 0.190 & 0.190 & 0.190 & 0.950 & 0.990 &       & 0.003 & 0.225 & 0.225 & 0.225 & 0.952 & 0.990 \\
    $\widehat{\theta}_{\text{mdel}}$(RF) &       & -0.005 & 0.565 & 0.576 & 0.565 & 0.953 & 0.991 &       & -0.006 & 0.483 & 0.482 & 0.483 & 0.950 & 0.987 \\
    $\widehat{\theta}_{\text{mdel}}$(MULTI) &       & 0.003 & 0.191 & 0.196 & 0.191 & 0.955 & 0.992 &       & 0.003 & 0.226 & 0.231 & 0.226 & 0.954 & 0.991 \\
    \midrule
    \multicolumn{15}{l}{$(n,p)=(800,1000)$} \\
    $\widehat{\theta}_{\text{dim}}$ &       & -0.005 & 0.321 & 0.320 & 0.321 & 0.949 & 0.990 &       & -0.004 & 0.450 & 0.447 & 0.450 & 0.951 & 0.991 \\
    $\widehat{\theta}_{\text{wdtt}}$(LASSO) &       & -0.001 & 0.096 & 0.096 & 0.096 & 0.949 & 0.991 &       & -0.002 & 0.114 & 0.113 & 0.114 & 0.949 & 0.989 \\
    $\widehat{\theta}_{\text{wdtt}}$(SCAD) &       & -0.001 & 0.093 & 0.094 & 0.093 & 0.950 & 0.991 &       & -0.002 & 0.112 & 0.112 & 0.112 & 0.949 & 0.989 \\
    $\widehat{\theta}_{\text{wdtt}}$(RF) &       & -0.004 & 0.294 & 0.293 & 0.294 & 0.951 & 0.991 &       & -0.005 & 0.238 & 0.237 & 0.238 & 0.949 & 0.989 \\
    $\widehat{\theta}_{\text{mdel}}$(LASSO) &       & -0.001 & 0.094 & 0.094 & 0.094 & 0.950 & 0.991 &       & -0.002 & 0.113 & 0.113 & 0.113 & 0.949 & 0.989 \\
    $\widehat{\theta}_{\text{mdel}}$(SCAD) &       & -0.001 & 0.093 & 0.094 & 0.093 & 0.951 & 0.991 &       & -0.002 & 0.112 & 0.112 & 0.112 & 0.950 & 0.989 \\
    $\widehat{\theta}_{\text{mdel}}$(RF) &       & -0.002 & 0.189 & 0.194 & 0.189 & 0.957 & 0.991 &       & -0.005 & 0.230 & 0.230 & 0.230 & 0.948 & 0.989 \\
    $\widehat{\theta}_{\text{mdel}}$(MULTI) &       & -0.001 & 0.093 & 0.095 & 0.093 & 0.952 & 0.992 &       & -0.002 & 0.112 & 0.113 & 0.112 & 0.949 & 0.989 \\
    \bottomrule
    \end{tabular}%
     			\begin{tablenotes}[flushleft]
     				\small
     				\item Bias = average bias of 5000 Monte Carlo estimators, SD = sample standard deviation of estimators, SE = average of model-based standard error, RMSE = empirical root mean square error, Cov95 = proportion of $95\%$ Wald confidence intervals covering the true $\theta$, Cov99 = proportion of $99\%$ Wald confidence intervals covering the true $\theta$.
     			\end{tablenotes}
     		\end{threeparttable}
     	}
     	\label{tab:sim1}%
     \end{table}%
     
     \begin{table}[H]
     	\caption{Results of Simulation \ref{sim 2} based on 5000 Monte Carlo replications}
     	\resizebox{\linewidth}{!}{
     		\begin{threeparttable}
     			\centering

     				\begin{tabular}{lrrrrrrrrrrrrrr}
     					\toprule
     					&       & \multicolumn{6}{l}{$\rho=0$}                     &       & \multicolumn{6}{l}{$\rho=0.5$} \\
     					\cmidrule{3-8}\cmidrule{10-15}    Estimator &       & \multicolumn{1}{l}{Bias} & \multicolumn{1}{l}{SD} & \multicolumn{1}{l}{SE} & \multicolumn{1}{l}{RMSE} & \multicolumn{1}{l}{Cov95} & \multicolumn{1}{l}{Cov99} &       & \multicolumn{1}{l}{Bias} & \multicolumn{1}{l}{SD} & \multicolumn{1}{l}{SE} & \multicolumn{1}{l}{RMSE} & \multicolumn{1}{l}{Cov95} & \multicolumn{1}{l}{Cov99} \\
     					\midrule
     					\multicolumn{15}{l}{$(n,p)=(80,200)$} \\
     					$\widehat{\theta}_{\text{dim}}$ &       & 0.028 & 1.196 & 1.210 & 1.196 & 0.950 & 0.988 &       & 0.017 & 1.231 & 1.243 & 1.231 & 0.946 & 0.988 \\
     					$\widehat{\theta}_{\text{wdtt}}$(LASSO) &       & 0.010 & 0.584 & 0.586 & 0.584 & 0.951 & 0.987 &       & 0.006 & 0.536 & 0.535 & 0.536 & 0.952 & 0.990 \\
     					$\widehat{\theta}_{\text{wdtt}}$(SCAD) &       & 0.007 & 0.451 & 0.447 & 0.451 & 0.949 & 0.989 &       & 0.007 & 0.448 & 0.442 & 0.448 & 0.946 & 0.989 \\
     					$\widehat{\theta}_{\text{wdtt}}$(RF) &       & 0.026 & 1.141 & 1.155 & 1.141 & 0.950 & 0.989 &       & 0.017 & 1.153 & 1.165 & 1.153 & 0.946 & 0.988 \\
     					$\widehat{\theta}_{\text{mdel}}$(LASSO) &       & 0.008 & 0.539 & 0.541 & 0.539 & 0.953 & 0.987 &       & 0.006 & 0.494 & 0.489 & 0.494 & 0.951 & 0.989 \\
     					$\widehat{\theta}_{\text{mdel}}$(SCAD) &       & 0.008 & 0.457 & 0.453 & 0.457 & 0.950 & 0.988 &       & 0.007 & 0.457 & 0.448 & 0.457 & 0.948 & 0.989 \\
     					$\widehat{\theta}_{\text{mdel}}$(RF) &       & 0.023 & 1.130 & 1.129 & 1.131 & 0.948 & 0.987 &       & 0.013 & 1.105 & 1.112 & 1.105 & 0.951 & 0.987 \\
     					$\widehat{\theta}_{\text{mdel}}$(MULTI) &       & 0.007 & 0.457 & 0.466 & 0.457 & 0.957 & 0.990 &       & 0.006 & 0.440 & 0.448 & 0.440 & 0.956 & 0.991 \\
     					\midrule
     					\multicolumn{15}{l}{$(n,p)=(160,200)$} \\
     					$\widehat{\theta}_{\text{dim}}$ &      &-0.018 & 0.859 & 0.853 & 0.859 & 0.947 & 0.988 &       & -0.010 & 0.891 & 0.878 & 0.891 & 0.947 & 0.988 \\
     					$\widehat{\theta}_{\text{wdtt}}$(LASSO) &      &0.000 & 0.226 & 0.223 & 0.226 & 0.947 & 0.988 &       & -0.001 & 0.217 & 0.215 & 0.217 & 0.949 & 0.988 \\
                        $\widehat{\theta}_{\text{wdtt}}$(SCAD) &       &0.002 & 0.182 & 0.182 & 0.182 & 0.945 & 0.988 &       & 0.001 & 0.183 & 0.183 & 0.183 & 0.953 & 0.989 \\
    $\widehat{\theta}_{\text{wdtt}}$(RF) &      &-0.016 & 0.788 & 0.784 & 0.789 & 0.948 & 0.989 &       & -0.008 & 0.794 & 0.782 & 0.794 & 0.947 & 0.989 \\
    $\widehat{\theta}_{\text{mdel}}$(LASSO) &      &0.001 & 0.208 & 0.207 & 0.208 & 0.950 & 0.989 &       & 0.000 & 0.201 & 0.201 & 0.201 & 0.952 & 0.987 \\
    $\widehat{\theta}_{\text{mdel}}$(SCAD) &       &0.001 & 0.183 & 0.183 & 0.183 & 0.948 & 0.989 &       & 0.001 & 0.183 & 0.184 & 0.183 & 0.953 & 0.990 \\
    $\widehat{\theta}_{\text{mdel}}$(RF) &       &-0.011 & 0.655 & 0.675 & 0.655 & 0.958 & 0.992 &       & -0.004 & 0.622 & 0.638 & 0.622 & 0.955 & 0.992 \\
    $\widehat{\theta}_{\text{mdel}}$(MULTI) &       &0.001 & 0.184 & 0.200 & 0.184 & 0.960 & 0.994 &       & 0.001 & 0.184 & 0.201 & 0.184 & 0.966 & 0.993 \\
    \midrule
     					\multicolumn{15}{l}{$(n,p)=(200,1000)$} \\
     					$\widehat{\theta}_{\text{dim}}$ &       & -0.003 & 0.773 & 0.763 & 0.773 & 0.945 & 0.988 &       & -0.002 & 0.796 & 0.785 & 0.796 & 0.949 & 0.988 \\
     					$\widehat{\theta}_{\text{wdtt}}$(LASSO) &       & 0.001 & 0.226 & 0.222 & 0.226 & 0.945 & 0.991 &       & 0.001 & 0.212 & 0.209 & 0.212 & 0.945 & 0.990 \\
     					$\widehat{\theta}_{\text{wdtt}}$(SCAD) &       & 0.002 & 0.164 & 0.163 & 0.164 & 0.950 & 0.990 &       & 0.003 & 0.166 & 0.164 & 0.166 & 0.948 & 0.989 \\
     					$\widehat{\theta}_{\text{wdtt}}$(RF) &       & -0.003 & 0.746 & 0.736 & 0.746 & 0.946 & 0.988 &       & -0.002 & 0.758 & 0.748 & 0.758 & 0.949 & 0.987 \\
     					$\widehat{\theta}_{\text{mdel}}$(LASSO) &       & 0.001 & 0.199 & 0.198 & 0.199 & 0.949 & 0.992 &       & 0.002 & 0.185 & 0.185 & 0.185 & 0.951 & 0.991 \\
     					$\widehat{\theta}_{\text{mdel}}$(SCAD) &       & 0.002 & 0.164 & 0.164 & 0.164 & 0.952 & 0.990 &       & 0.003 & 0.167 & 0.165 & 0.167 & 0.948 & 0.990 \\
     					$\widehat{\theta}_{\text{mdel}}$(RF) &       & -0.003 & 0.695 & 0.696 & 0.695 & 0.949 & 0.989 &       & -0.006 & 0.671 & 0.677 & 0.671 & 0.949 & 0.991 \\
     					$\widehat{\theta}_{\text{mdel}}$(MULTI) &       & 0.002 & 0.165 & 0.175 & 0.165 & 0.965 & 0.995 &       & 0.003 & 0.168 & 0.177 & 0.168 & 0.962 & 0.993 \\
     					\midrule
     					\multicolumn{15}{l}{$(n,p)=(800,1000)$} \\
     					$\widehat{\theta}_{\text{dim}}$ &       &0.010 & 0.385 & 0.381 & 0.385 & 0.949 & 0.990 &       & 0.009 & 0.396 & 0.392 & 0.397 & 0.949 & 0.989 \\
    $\widehat{\theta}_{\text{wdtt}}$(LASSO) &       &0.000 & 0.076 & 0.077 & 0.076 & 0.952 & 0.991 &       & 0.000 & 0.076 & 0.076 & 0.076 & 0.947 & 0.991 \\
    $\widehat{\theta}_{\text{wdtt}}$(SCAD) &       &-0.001 & 0.072 & 0.073 & 0.072 & 0.953 & 0.992 &       & -0.001 & 0.072 & 0.073 & 0.072 & 0.955 & 0.991 \\
    $\widehat{\theta}_{\text{wdtt}}$(RF) &      &0.009 & 0.354 & 0.350 & 0.354 & 0.947 & 0.990 &       & 0.009 & 0.351 & 0.348 & 0.352 & 0.946 & 0.990 \\
    $\widehat{\theta}_{\text{mdel}}$(LASSO) &       &0.000 & 0.073 & 0.074 & 0.073 & 0.954 & 0.991 &       & 0.000 & 0.073 & 0.074 & 0.073 & 0.954 & 0.992 \\
    $\widehat{\theta}_{\text{mdel}}$(SCAD) &       &-0.001 & 0.072 & 0.073 & 0.072 & 0.954 & 0.992 &       & -0.001 & 0.072 & 0.073 & 0.072 & 0.956 & 0.991 \\
    $\widehat{\theta}_{\text{mdel}}$(RF) &       &0.003 & 0.236 & 0.237 & 0.236 & 0.955 & 0.991 &       & 0.005 & 0.224 & 0.226 & 0.224 & 0.953 & 0.989 \\
    $\widehat{\theta}_{\text{mdel}}$(MULTI) &       &-0.001 & 0.072 & 0.075 & 0.072 & 0.958 & 0.992 &       & -0.001 & 0.072 & 0.075 & 0.072 & 0.958 & 0.992 \\
    \bottomrule
     				\end{tabular}%
     			
     			\begin{tablenotes}[flushleft]
     				\small
     				\item Bias = average bias of 5000 Monte Carlo estimators, SD = sample standard deviation of estimators, SE = average of model-based standard error, RMSE = empirical root mean square error, Cov95 = proportion of $95\%$ Wald confidence intervals covering the true $\theta$, Cov99 = proportion of $99\%$ Wald confidence intervals covering the true $\theta$.
     			\end{tablenotes}
     		\end{threeparttable}
     	}
     	\label{tab:sim2}%
     \end{table}%

     \begin{table}[H]
     	\caption{Results of Simulation \ref{sim 3} based on 5000 Monte Carlo replications}
     	\resizebox{\linewidth}{!}{
     		\begin{threeparttable}
     			\centering
     				\begin{tabular}{lrrrrrrrrrrrrrr}
     					\toprule
     					&       & \multicolumn{6}{l}{$\rho=0$}                     &       & \multicolumn{6}{l}{$\rho=0.5$} \\
     					\cmidrule{3-8}\cmidrule{10-15}    Estimator &       & \multicolumn{1}{l}{Bias} & \multicolumn{1}{l}{SD} & \multicolumn{1}{l}{SE} & \multicolumn{1}{l}{RMSE} & \multicolumn{1}{l}{Cov95} & \multicolumn{1}{l}{Cov99} &       & \multicolumn{1}{l}{Bias} & \multicolumn{1}{l}{SD} & \multicolumn{1}{l}{SE} & \multicolumn{1}{l}{RMSE} & \multicolumn{1}{l}{Cov95} & \multicolumn{1}{l}{Cov99} \\
     					\midrule
     					\multicolumn{15}{l}{$(n,p)=(80,200)$} \\
     					$\widehat{\theta}_{\text{dim}}$ &       & 0.003 & 0.489 & 0.490 & 0.489 & 0.949 & 0.990 &       & 0.002 & 0.732 & 0.737 & 0.732 & 0.949 & 0.990 \\
     					$\widehat{\theta}_{\text{wdtt}}$(LASSO) &       & 0.002 & 0.453 & 0.451 & 0.453 & 0.946 & 0.987 &       & 0.000 & 0.435 & 0.432 & 0.435 & 0.945 & 0.985 \\
     					$\widehat{\theta}_{\text{wdtt}}$(SCAD) &       & 0.004 & 0.458 & 0.460 & 0.458 & 0.946 & 0.988 &       & 0.002 & 0.497 & 0.497 & 0.497 & 0.945 & 0.984 \\
     					$\widehat{\theta}_{\text{wdtt}}$(RF) &       & 0.002 & 0.478 & 0.478 & 0.478 & 0.947 & 0.991 &       & 0.000 & 0.667 & 0.671 & 0.667 & 0.948 & 0.990 \\
     					$\widehat{\theta}_{\text{mdel}}$(LASSO) &       & 0.003 & 0.464 & 0.452 & 0.464 & 0.943 & 0.987 &       & 0.000 & 0.425 & 0.422 & 0.425 & 0.945 & 0.988 \\
     					$\widehat{\theta}_{\text{mdel}}$(SCAD) &       & 0.003 & 0.469 & 0.459 & 0.469 & 0.940 & 0.987 &       & 0.003 & 0.510 & 0.507 & 0.510 & 0.946 & 0.984 \\
     					$\widehat{\theta}_{\text{mdel}}$(RF) &       & 0.003 & 0.489 & 0.478 & 0.489 & 0.940 & 0.987 &       & -0.002 & 0.583 & 0.599 & 0.582 & 0.952 & 0.992 \\
     					$\widehat{\theta}_{\text{mdel}}$(MULTI) &       & 0.005 & 0.470 & 0.446 & 0.470 & 0.932 & 0.983 &       & 0.001 & 0.433 & 0.425 & 0.433 & 0.943 & 0.986 \\
     					\midrule
     					\multicolumn{15}{l}{$(n,p)=(160,200)$} \\
     					$\widehat{\theta}_{\text{dim}}$ &       &-0.002 & 0.350 & 0.346 & 0.350 & 0.947 & 0.989 &       & -0.005 & 0.527 & 0.521 & 0.527 & 0.948 & 0.989 \\
    $\widehat{\theta}_{\text{wdtt}}$(LASSO) &       &-0.001 & 0.259 & 0.255 & 0.259 & 0.950 & 0.989 &       & -0.002 & 0.220 & 0.219 & 0.220 & 0.948 & 0.990 \\
    $\widehat{\theta}_{\text{wdtt}}$(SCAD) &       &0.000 & 0.260 & 0.256 & 0.260 & 0.949 & 0.988 &       & -0.001 & 0.257 & 0.254 & 0.257 & 0.949 & 0.990 \\
    $\widehat{\theta}_{\text{wdtt}}$(RF) &       &-0.002 & 0.336 & 0.332 & 0.336 & 0.946 & 0.990 &       & -0.003 & 0.447 & 0.442 & 0.446 & 0.948 & 0.989 \\
    $\widehat{\theta}_{\text{mdel}}$(LASSO) &       &-0.001 & 0.260 & 0.255 & 0.260 & 0.946 & 0.989 &       & -0.002 & 0.211 & 0.211 & 0.211 & 0.952 & 0.989 \\
    $\widehat{\theta}_{\text{mdel}}$(SCAD) &       &0.000 & 0.266 & 0.261 & 0.266 & 0.947 & 0.988 &       & -0.001 & 0.262 & 0.260 & 0.262 & 0.951 & 0.991 \\
    $\widehat{\theta}_{\text{mdel}}$(RF) &       &-0.003 & 0.326 & 0.322 & 0.326 & 0.945 & 0.988 &       & 0.000 & 0.304 & 0.317 & 0.304 & 0.959 & 0.992 \\
    $\widehat{\theta}_{\text{mdel}}$(MULTI) &       &0.000 & 0.258 & 0.250 & 0.258 & 0.942 & 0.988 &       & -0.002 & 0.212 & 0.215 & 0.212 & 0.952 & 0.991 \\

     					\midrule
     					\multicolumn{15}{l}{$(n,p)=(200,1000)$} \\
     					$\widehat{\theta}_{\text{dim}}$ &       & 0.001 & 0.663 & 0.660 & 0.663 & 0.945 & 0.988 &       & -0.001 & 1.113 & 1.109 & 1.113 & 0.947 & 0.988 \\
     					$\widehat{\theta}_{\text{wdtt}}$(LASSO) &       & 0.001 & 0.640 & 0.641 & 0.640 & 0.950 & 0.989 &       & 0.002 & 0.753 & 0.758 & 0.753 & 0.950 & 0.990 \\
     					$\widehat{\theta}_{\text{wdtt}}$(SCAD) &       & 0.001 & 0.645 & 0.646 & 0.645 & 0.949 & 0.989 &       & 0.003 & 0.864 & 0.867 & 0.864 & 0.950 & 0.991 \\
     					$\widehat{\theta}_{\text{wdtt}}$(RF) &       & 0.002 & 0.655 & 0.653 & 0.655 & 0.946 & 0.988 &       & 0.000 & 1.064 & 1.062 & 1.064 & 0.948 & 0.988 \\
     					$\widehat{\theta}_{\text{mdel}}$(LASSO) &       & 0.002 & 0.651 & 0.643 & 0.651 & 0.947 & 0.987 &       & -0.001 & 0.735 & 0.740 & 0.735 & 0.953 & 0.990 \\
     					$\widehat{\theta}_{\text{mdel}}$(SCAD) &       & 0.002 & 0.654 & 0.646 & 0.654 & 0.946 & 0.987 &       & 0.002 & 0.858 & 0.861 & 0.858 & 0.949 & 0.991 \\
     					$\widehat{\theta}_{\text{mdel}}$(RF) &       & 0.002 & 0.661 & 0.653 & 0.661 & 0.946 & 0.987 &       & 0.001 & 0.959 & 0.978 & 0.958 & 0.952 & 0.991 \\
     					$\widehat{\theta}_{\text{mdel}}$(MULTI) &       & 0.001 & 0.653 & 0.640 & 0.653 & 0.945 & 0.985 &       & -0.002 & 0.734 & 0.737 & 0.734 & 0.952 & 0.989 \\
     					\midrule
     					\multicolumn{15}{l}{$(n,p)=(800,1000)$} \\
     					$\widehat{\theta}_{\text{dim}}$ &       &0.004 & 0.330 & 0.330 & 0.330 & 0.951 & 0.991 &       & 0.008 & 0.553 & 0.555 & 0.553 & 0.950 & 0.991 \\
    $\widehat{\theta}_{\text{wdtt}}$(LASSO) &       &0.001 & 0.155 & 0.155 & 0.155 & 0.954 & 0.990 &       & 0.001 & 0.111 & 0.111 & 0.111 & 0.953 & 0.990 \\
    $\widehat{\theta}_{\text{wdtt}}$(SCAD) &       &0.000 & 0.138 & 0.139 & 0.138 & 0.954 & 0.992 &       & -0.001 & 0.165 & 0.167 & 0.165 & 0.950 & 0.993 \\
    $\widehat{\theta}_{\text{wdtt}}$(RF) &       &0.004 & 0.320 & 0.321 & 0.320 & 0.951 & 0.991 &       & 0.008 & 0.495 & 0.495 & 0.495 & 0.949 & 0.991 \\
    $\widehat{\theta}_{\text{mdel}}$(LASSO) &       &0.001 & 0.146 & 0.146 & 0.146 & 0.953 & 0.989 &       & 0.001 & 0.104 & 0.104 & 0.104 & 0.953 & 0.989 \\
    $\widehat{\theta}_{\text{mdel}}$(SCAD) &       &0.000 & 0.139 & 0.140 & 0.139 & 0.954 & 0.991 &       & 0.000 & 0.168 & 0.169 & 0.167 & 0.950 & 0.992 \\
    $\widehat{\theta}_{\text{mdel}}$(RF) &       &0.003 & 0.294 & 0.299 & 0.294 & 0.953 & 0.991 &       & 0.007 & 0.301 & 0.306 & 0.301 & 0.958 & 0.989 \\
    $\widehat{\theta}_{\text{mdel}}$(MULTI) &       &0.000 & 0.134 & 0.135 & 0.134 & 0.955 & 0.992 &       & 0.001 & 0.104 & 0.105 & 0.104 & 0.957 & 0.989 \\

     					\bottomrule
     				\end{tabular}%

     			\begin{tablenotes}[flushleft]
     				\small
     				\item Bias = average bias of 5000 Monte Carlo estimators, SD = sample standard deviation of estimators, SE = average of model-based standard error, RMSE = empirical root mean square error, Cov95 = proportion of $95\%$ Wald confidence intervals covering the true $\theta$, Cov99 = proportion of $99\%$ Wald confidence intervals covering the true $\theta$.
     			\end{tablenotes}
     		\end{threeparttable}
     	}
     	\label{tab:sim3}%
     \end{table}%

    \begin{figure}[H]
	\centering
	\begin{subfigure}[t]{0.48\textwidth}
	\includegraphics[width=1\textwidth]{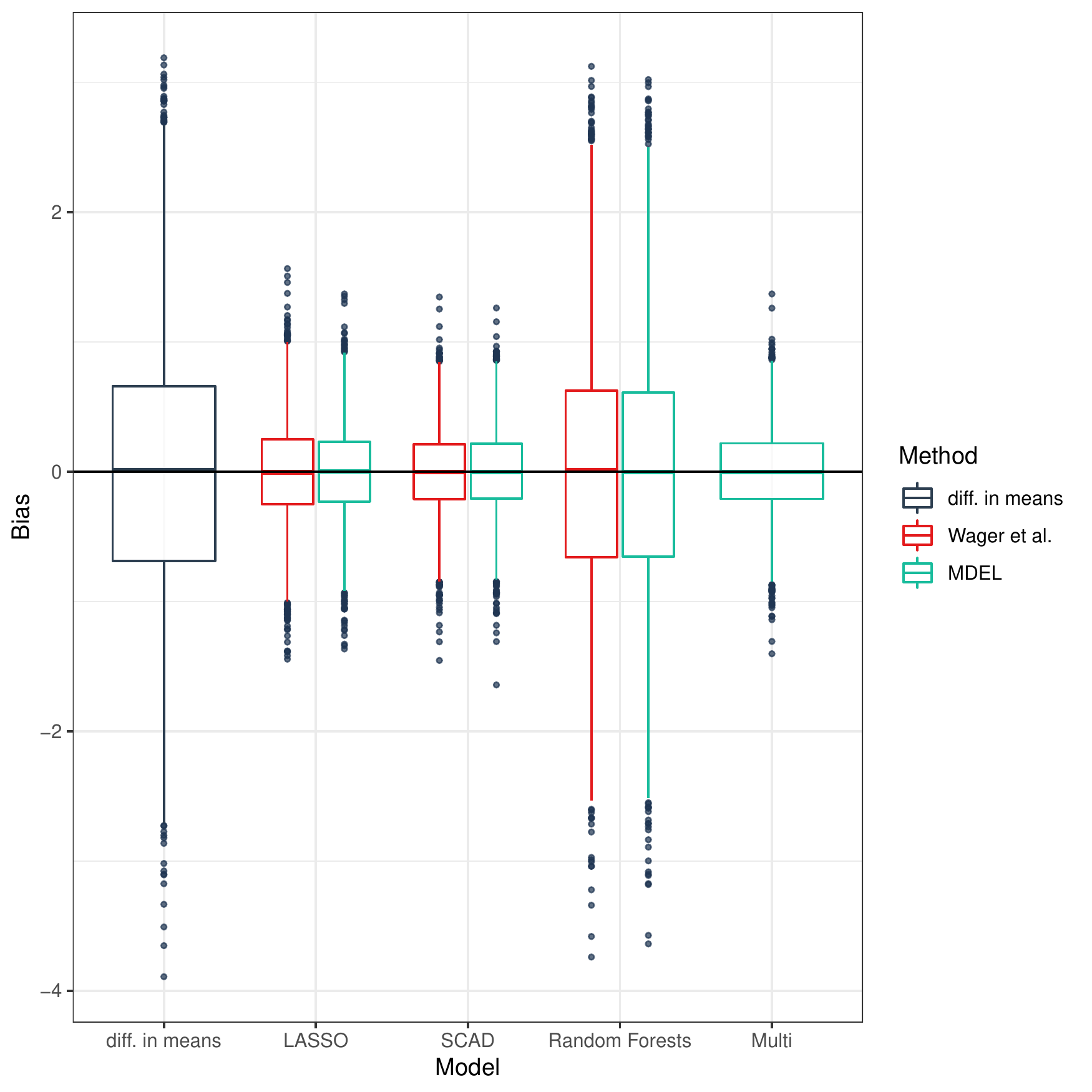}
	\caption{$(n,p)=(80,200),\rho=0$}
	\end{subfigure}
	\begin{subfigure}[t]{0.48\textwidth}
	\includegraphics[width=1\textwidth]{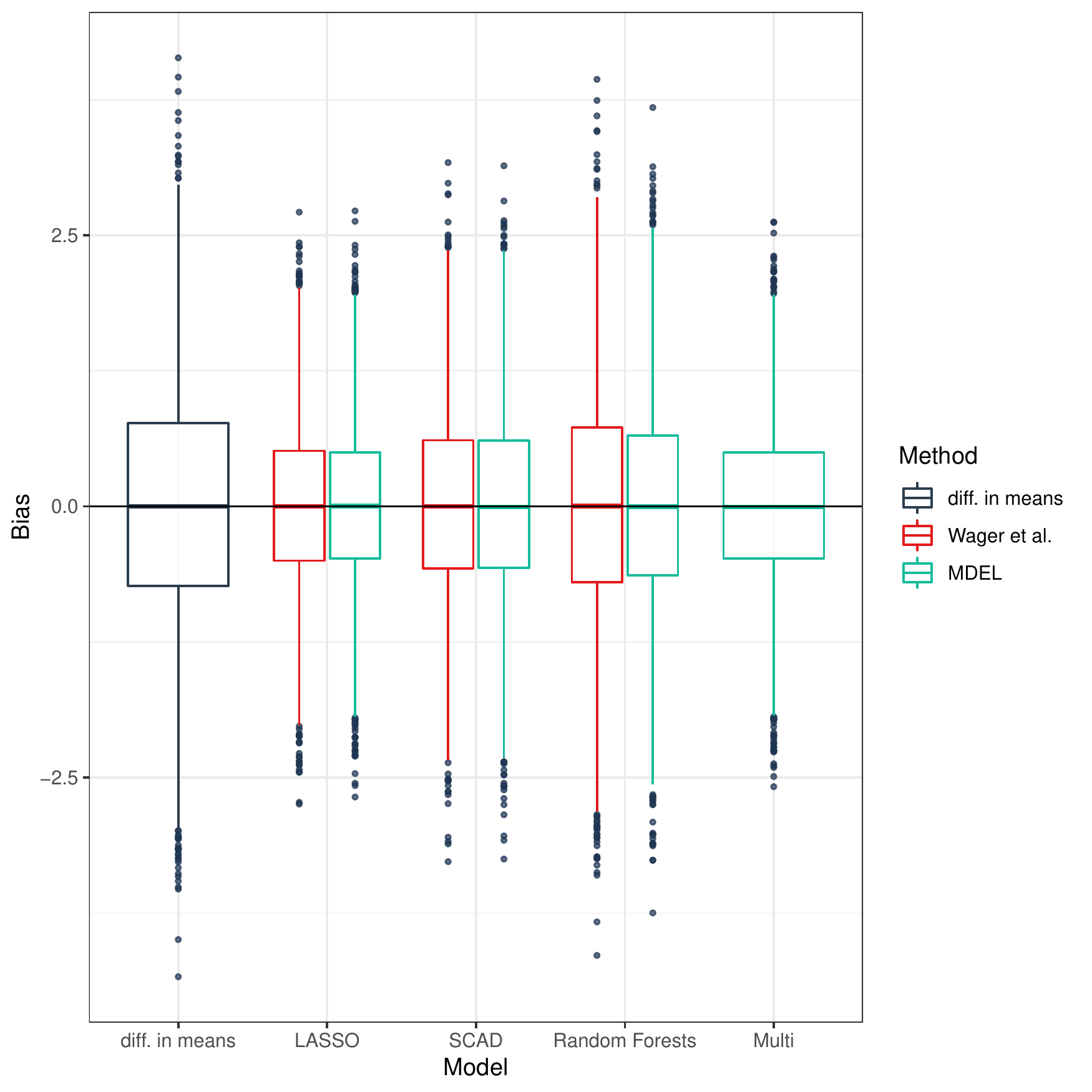}
	\caption{$(n,p)=(200,1000),\rho=0.5$}
	\end{subfigure}
     \caption{Boxplot of 5000 Monte Carlo biases based on Simulation \ref{sim 1} (in the left panel) and Simulation \ref{sim 3} (in the right panel).}
     \label{fig:boxplot}
     \end{figure}

\subsection{Analysis of ACTG 175 Data Set}\label{subsec:rda}
In this section, we apply our proposed MDEL method to data from $2139$ HIV-infected patients enrolled in AIDS Clinical Trials Group Protocol 175 (ACTG175) \citep{Hammer1996}. It is a double-blinded randomized experiment which was designed to study the treatment of patients receiving $3$ different drugs and their combinations. Patients whose CD4 cell counts from 200 to 500 per cubic millimeter were randomly assigned to  different antiretroviral regimens: zidovudine (ZDV) monotherapy, ZDV + didanosine (ddI), ZDV + zalcitabine, and ddI monotherapy. We follow the work of \citet{Tsiatis2008}, \citet{Huang2008}, and \citet{Zhang2018}, where two treatment groups are considered: patients who received ZDV monotherapy alone, with $n_0=532$ and patients who received either ZDV + ddI, or ZDV + zalcitabine, or ddI alone, with $n_1=1607$. Pre-treatment baseline covariates are $5$ continuous variables: cd40 = CD4 count(cells/mm$^3$), cd80 = CD8 count(cells/mm$^3$), age = age(years), wtkg = weight(kg), karnof = Karnofsky score(scale of 0-100), and $7$ seven binary variables: hemo = hemophilia, homo = homosexual activity, drug = history of intravenous drug use, race = race(0=white, 1=nonwhite), gender = gender(0=female), str2 = antiretroviral history(0=naive, 1=experienced), and symp = symptomatic status(0=asymptomatic).\par
In the previous work of \citet{Tsiatis2008}, \citet{Zhang2018}, and \citet{Tan2020Arxiv}, Forward-1, a forward step-wise regression model allowing for linear terms of covariates, and Forward-2, a forward ste-pwise regression model allowing for linear, quadratic and interaction terms of baseline variables, are adopted. Our proposed MDEL approach enables us to consider a much richer feature set. Therefore, we take linear and quadratic terms of continuous variables, linear and interaction terms of binary variables, and interaction terms of above two sets of coordinates as our final feature set, i.e.,
     \begin{align*}
     \mathbb{F}_{\text{ACTG175}}=\left\{\text{(cd40+cd80+age+wtkg+karnof+1)}^2\times\left(\text{(hemo+homo+drug+race+gender
     }\right.\right.\\
     \left.\left.\text{+str2+symp+1)}^2-\text{hemo}^2-\text{homo}^2-\text{drug}^2-\text{race}^2-\text{gender}^2-\text{str}^2-\text{symp}^2\right)\right\}.
     \end{align*}
This leads to $608$ explanatory variables (excluding the intercept) and we adopt Lasso, SCAD, and random forests to estimate the variable-outcome relationship using this feature set. Table \ref{tab:realdata} displays the estimates, standard errors, and confidence intervals of our proposed approach and some existing approaches described in Section \ref{sec:review} and \ref{sec:method}.\par

In practice, $\widehat{\theta}_{\text{mdel}}$(ML) denotes the point estimator of our MDEL approach, where the choices of ML include LASSO, SCAD, RF, and MULTI. Here, RF indicates the random forests method, and MULTI means we make use of multiple ML methods (LASSO, SCAD, and RF) in our MDEL method.

\begin{table}[H]
     	\centering
     	\caption{Point and interval estimates of $\theta$ for ACTG 175 data.}
     	\resizebox{\linewidth}{!}{
     		\begin{threeparttable}
     			\begin{tabular}{lllllllllll}
     				\toprule
     				Estimator &       & \multicolumn{1}{l}{Estimate} &       & \multicolumn{1}{l}{SE} &       & \multicolumn{1}{l}{Relative Efficiency} &       & 95\% Confidence Interval &       & 99\% Confidence Interval \\
     				\midrule
     				\multicolumn{11}{c}{5 folds} \\
     				$\widehat{\theta}_{\text{dim}}$ &       & 46.811 &       & 6.760 &       & 1.000 &       & ( 33.56 , 60.06 ) &       & ( 29.40 , 64.22 ) \\
     				$\widehat{\theta}_{\text{tdzl}}$(Forward-1) &       & 49.896 &       & 5.139 &       & 1.738 &       & ( 39.82 , 59.97 ) &       & ( 36.66 , 63.13 ) \\
     				$\widehat{\theta}_{\text{tdzl}}$(Forward-2) &       & 51.589 &       & 5.070 &       & 1.797 &       & ( 41.65 , 61.53 ) &       & ( 38.53 , 64.65 ) \\
     				$\widehat{\theta}_{\text{Zhang}}$(Forward-1) &       & 49.872 &       & 5.128 &       & 1.738 &       & ( 39.82 , 59.92 ) &       & ( 36.66 , 63.08 ) \\
     				$\widehat{\theta}_{\text{Zhang}}$(Forward-2) &       & 51.395 &       & 5.028 &       & 1.808 &       & ( 41.54 , 61.25 ) &       & ( 38.44 , 64.34 ) \\
     				$\widehat{\theta}_{\text{wdtt}}$(LASSO) &       & 49.785 &       & 5.233 &       & 1.669 &       & ( 39.53 , 60.04 ) &       & ( 36.31 , 63.26 ) \\
     				$\widehat{\theta}_{\text{wdtt}}$(SCAD) &       & 49.508 &       & 5.216 &       & 1.680 &       & ( 39.29 , 59.73 ) &       & ( 36.07 , 62.94 ) \\
     				$\widehat{\theta}_{\text{wdtt}}$(RF) &       & 53.442 &       & 5.253 &       & 1.656 &       & ( 43.15 , 63.74 ) &       & ( 39.91 , 66.97 ) \\
     				$\widehat{\theta}_{\text{mdel}}$(LASSO) &       & 49.938 &       & 5.200 &       & 1.690 &       & ( 39.75 , 60.13 ) &       & ( 36.54 , 63.33 ) \\
     				$\widehat{\theta}_{\text{mdel}}$(SCAD) &       & 49.483 &       & 5.197 &       & 1.692 &       & ( 39.30 , 59.67 ) &       & ( 36.10 , 62.87 ) \\
     				$\widehat{\theta}_{\text{mdel}}$(RF) &       & 53.160 &       & 5.216 &       & 1.680 &       & ( 42.94 , 63.38 ) &       & ( 39.72 , 66.60 ) \\
     				$\widehat{\theta}_{\text{mdel}}$(MULTI) &       & 50.396 &       & 5.150 &       & 1.723 &       & ( 40.30 , 60.49 ) &       & ( 37.13 , 63.66 ) \\
     				\midrule
     				\multicolumn{11}{c}{10 folds} \\
     				$\widehat{\theta}_{\text{dim}}$ &       & 46.811 &       & 6.760 &       & 1.000 &       & ( 33.56 , 60.06 ) &       & ( 29.40 , 64.22 ) \\
     				$\widehat{\theta}_{\text{tdzl}}$(Forward-1) &       & 49.896 &       & 5.139 &       & 1.738 &       & ( 39.82 , 59.97 ) &       & ( 36.66 , 63.13 ) \\
     				$\widehat{\theta}_{\text{tdzl}}$(Forward-2) &       & 51.589 &       & 5.070 &       & 1.797 &       & ( 41.65 , 61.53 ) &       & ( 38.53 , 64.65 ) \\
     				$\widehat{\theta}_{\text{Zhang}}$(Forward-1) &       & 49.872 &       & 5.128 &       & 1.738 &       & ( 39.82 , 59.92 ) &       & ( 36.66 , 63.08 ) \\
     				$\widehat{\theta}_{\text{Zhang}}$(Forward-2) &       & 51.395 &       & 5.028 &       & 1.808 &       & ( 41.54 , 61.25 ) &       & ( 38.44 , 64.34 ) \\
     				$\widehat{\theta}_{\text{wdtt}}$(LASSO) &       & 49.854 &       & 5.224 &       & 1.675 &       & ( 39.62 , 60.09 ) &       & ( 36.40 , 63.31 ) \\
     				$\widehat{\theta}_{\text{wdtt}}$(SCAD) &       & 49.991 &       & 5.210 &       & 1.684 &       & ( 39.78 , 60.20 ) &       & ( 36.57 , 63.41 ) \\
     				$\widehat{\theta}_{\text{wdtt}}$(RF) &       & 53.885 &       & 5.262 &       & 1.651 &       & ( 43.57 , 64.20 ) &       & ( 40.33 , 67.44 ) \\
     				$\widehat{\theta}_{\text{mdel}}$(LASSO) &       & 50.059 &       & 5.178 &       & 1.705 &       & ( 39.91 , 60.21 ) &       & ( 36.72 , 63.40 ) \\
     				$\widehat{\theta}_{\text{mdel}}$(SCAD) &       & 49.979 &       & 5.177 &       & 1.705 &       & ( 39.83 , 60.13 ) &       & ( 36.64 , 63.32 ) \\
     				$\widehat{\theta}_{\text{mdel}}$(RF) &       & 53.542 &       & 5.212 &       & 1.682 &       & ( 43.33 , 63.76 ) &       & ( 40.12 , 66.97 ) \\
     				$\widehat{\theta}_{\text{mdel}}$(MULTI) &       & 50.665 &       & 5.140 &       & 1.730 &       & ( 40.59 , 60.74 ) &       & ( 37.42 , 63.90 ) \\
     				\bottomrule
     			\end{tabular}%
     			\begin{tablenotes}[flushleft]
     				\small
     				\item SE = standard error, Relative Efficiency = (SE$^2$ of corresponding estimator)/(SE$^2$ of $\widehat{\theta}_{\text{dim}}$).
     			\end{tablenotes}
     		\end{threeparttable}
     	}
     	\label{tab:realdata}%
     \end{table}%
     
For inference on $\theta$, both $95\%$ and $99\%$ Wald confidence intervals are provided. The results of Table \ref{tab:realdata} give us strong evidence to reject the null hypothesis that there is no difference in treatment effect between two groups with different therapies. It is worth to note that, despite a much richer feature set with $p=608$ variables is considered, our proposed approach does not improve the estimation efficiency. This indicates that, the original explanatory variables are adequate for modeling $\eta^{(d)}(\cdot),d=0,1$. However, our data analysis result of ACTG 175 data set is still meaningful because we provide further reliability to use the original set of explanatory variables.
     
\subsection{Analysis of GSE118657 Data Set}
Gene Expression Omnibus dataset (GSE118657) is a Phase II/III randomized controlled trial examining the use of lactoferrin to prevent nosocomial infections in critically ill patients undergoing mechanical ventilation \citep{Lee2019,Muscedere2018}. This data set consists of 61 patients, among which 32 patients were randomized to receive lactoferrin and the remaining ones were assigned to the placebo group. We are interested in studying the effect of lactoferrin on the length of stay in ICU. For covariate adjustment, we consider four important variables of patients before receiving the treatment-age, sex, SOFA score, and APACHE II score, denoted by $X_b$, and gene expression data of patients, denoted by $X_g$. In the following data analysis, approaches of \citet{Zhang2018} and \citet{Tsiatis2008} are based on modelling $\mathrm{E}[Y|X_b,D=d],d=0,1$ with Forward-1 or Forward-2 model. To make use of information of the gene expression data, we model $\mathrm{E}[Y|X_b,X_g,D=d],d=0,1$ by ML methods and subsequently apply the approach of \citet{Wager2016} and our proposed MDEL approach. Since the dimension of $X=(X_b,X_g)$, $p\approx 50000$, is too high, we use sure independent screening (SIS) method \citep{Fan2008} to filter out variables that are relatively weak-correlated with the response, and reduce the dimension of $X$ to a low level, say $d_X=O(n)$, before modelling $\mathrm{E}[Y|X,D=1]$ and $\mathrm{E}[Y|X,D=0]$.\par

\begin{table}[H]
     	\centering
     	\caption{Point and interval estimates of $\theta$ for GSE118657 data with 5 folds.}
     	\resizebox{\linewidth}{!}{
     		\begin{threeparttable}
		\begin{tabular}{lllllllllll}
			\toprule
			Estimator &       & \multicolumn{1}{l}{Estimate} &       & \multicolumn{1}{l}{SE} &       & \multicolumn{1}{l}{Relative Efficiency} &       & 95\% Confidence Interval &       & 99\% Confidence Interval \\
			\midrule
			\multicolumn{11}{c}{$d_X=100$} \\
			$\widehat{\theta}_{\text{dim}}$ &       & -8.489 &       & 13.737 &       & 1.000 &       & ( -35.41 , 18.44 ) &       & ( -43.87 , 26.90 ) \\
			$\widehat{\theta}_{\text{tdzl}}$(Forward-1) &       & -7.769 &       & 13.701 &       & 1.005 &       & ( -34.62 , 19.08 ) &       & ( -43.06 , 27.52 ) \\
			$\widehat{\theta}_{\text{tdzl}}$(Forward-2) &       & -10.993 &       & 13.965 &       & 0.968 &       & ( -38.36 , 16.38 ) &       & ( -46.96 , 24.98 ) \\
			$\widehat{\theta}_{\text{Zhang}}$(Forward-1) &       & -7.933 &       & 13.204 &       & 1.082 &       & ( -33.81 , 17.95 ) &       & ( -41.94 , 26.08 ) \\
			$\widehat{\theta}_{\text{Zhang}}$(Forward-2) &       & -10.083 &       & 14.413 &       & 0.908 &       & ( -38.33 , 18.17 ) &       & ( -47.21 , 27.04 ) \\
			$\widehat{\theta}_{\text{wdtt}}$(LASSO) &       & -9.310 &       & 13.789 &       & 0.993 &       & ( -36.34 , 17.72 ) &       & ( -44.83 , 26.21 ) \\
			$\widehat{\theta}_{\text{wdtt}}$(SCAD) &       & -10.493 &       & 14.165 &       & 0.941 &       & ( -38.26 , 17.27 ) &       & ( -46.98 , 25.99 ) \\
			$\widehat{\theta}_{\text{wdtt}}$(RF) &       & -13.732 &       & 14.065 &       & 0.954 &       & ( -41.30 , 13.83 ) &       & ( -49.96 , 22.50 ) \\
			$\widehat{\theta}_{\text{mdel}}$(LASSO) &       & -8.540 &       & 12.970 &       & 1.122 &       & ( -33.96 , 16.88 ) &       & ( -41.95 , 24.87 ) \\
			$\widehat{\theta}_{\text{mdel}}$(SCAD) &       & -8.661 &       & 12.181 &       & 1.272 &       & ( -32.53 , 15.21 ) &       & ( -40.04 , 22.71 ) \\
			$\widehat{\theta}_{\text{mdel}}$(RF) &       & -8.647 &       & 13.457 &       & 1.042 &       & ( -35.02 , 17.73 ) &       & ( -43.31 , 26.02 ) \\
			$\widehat{\theta}_{\text{mdel}}$(MULTI) &       & -7.237 &       & 10.755 &       & 1.632 &       & ( -28.32 , 13.84 ) &       & ( -34.94 , 20.46 ) \\
			\midrule
			\multicolumn{11}{c}{$d_X=500$} \\
			$\widehat{\theta}_{\text{dim}}$ &       & -8.489 &       & 13.737 &       & 1.000 &       & ( -35.41 , 18.44 ) &       & ( -43.87 , 26.90 ) \\
			$\widehat{\theta}_{\text{tdzl}}$(Forward-1) &       & -7.769 &       & 13.701 &       & 1.005 &       & ( -34.62 , 19.08 ) &       & ( -43.06 , 27.52 ) \\
			$\widehat{\theta}_{\text{tdzl}}$(Forward-2) &       & -10.993 &       & 13.965 &       & 0.968 &       & ( -38.36 , 16.38 ) &       & ( -46.96 , 24.98 ) \\
			$\widehat{\theta}_{\text{Zhang}}$(Forward-1) &       & -7.933 &       & 13.204 &       & 1.082 &       & ( -33.81 , 17.95 ) &       & ( -41.94 , 26.08 ) \\
			$\widehat{\theta}_{\text{Zhang}}$(Forward-2) &       & -10.083 &       & 14.413 &       & 0.908 &       & ( -38.33 , 18.17 ) &       & ( -47.21 , 27.04 ) \\
			$\widehat{\theta}_{\text{wdtt}}$(LASSO) &       & -10.720 &       & 14.094 &       & 0.950 &       & ( -38.34 , 16.90 ) &       & ( -47.02 , 25.58 ) \\
			$\widehat{\theta}_{\text{wdtt}}$(SCAD) &       & -11.145 &       & 14.072 &       & 0.953 &       & ( -38.73 , 16.44 ) &       & ( -47.39 , 25.10 ) \\
			$\widehat{\theta}_{\text{wdtt}}$(RF) &       & -15.856 &       & 14.167 &       & 0.940 &       & ( -43.62 , 11.91 ) &       & ( -52.35 , 20.64 ) \\
			$\widehat{\theta}_{\text{mdel}}$(LASSO) &       & -8.193 &       & 12.881 &       & 1.137 &       & ( -33.44 , 17.05 ) &       & ( -41.37 , 24.99 ) \\
			$\widehat{\theta}_{\text{mdel}}$(SCAD) &       & -6.061 &       & 12.325 &       & 1.242 &       & ( -30.22 , 18.10 ) &       & ( -37.81 , 25.69 ) \\
			$\widehat{\theta}_{\text{mdel}}$(RF) &       & -8.648 &       & 13.420 &       & 1.048 &       & ( -34.95 , 17.66 ) &       & ( -43.22 , 25.92 ) \\
			$\widehat{\theta}_{\text{mdel}}$(MULTI) &       & -7.202 &       & 11.924 &       & 1.327 &       & ( -30.57 , 16.17 ) &       & ( -37.92 , 23.51 ) \\
			\bottomrule
		\end{tabular}%
	
     		\begin{tablenotes}[flushleft]
     				\small
     				\item SE = standard error, Relative Efficiency = (SE$^2$ of corresponding estimator)/(SE$^2$ of $\widehat{\theta}_{\text{dim}}$).
     			\end{tablenotes}
     		\end{threeparttable}
     	}
     	\label{tab:realdata2}%
     \end{table}%

Results given in Table \ref{tab:realdata2} indicate that there is no improvement about the length of stay in ICU for patients after the use of lactoferrin. Our approach with multiple ML methods, $\widehat{\theta}_{\text{mdel}}$(MULTI), is more efficient than other estimators with the shortest confidence intervals.

\section{Conclusions and Further Discussions}
     In this paper, we propose a machine learning and data-splitting based EL approach to make statistical inference on the average treatment effect in randomized controlled trials. Our approach not only maintains the advantages of the traditional EL approaches, but also overcomes the shortage that the traditional EL approaches usually make invalid inference in high-dimensional settings. Compared with the regression adjustment approach proposed by \citet{Wager2016}, our proposed approach has two attractive characteristics, which are illustrated by our simulation studies: (i). Compared with semiparametric estimators, our proposed estimators perform better when we use random forests to estimate the nuisance parameters; (ii). Our MDEL estimators with multiple ML models are likely to perform as good as that with the oracle model, known as multiple robustness.\par 
     For future work, we plan to (i) study the asymptotic theory of the proposed EL estimator with multiple models; (ii) generalize our proposed approach to high-dimensional observational studies by modelling propensity scores and imposing additional constraints about the propensity scores.

\section*{Acknowledgements}
We are grateful to two reviewers' helpful and instructive comments and suggestions that have greatly improved our original submission.
\section*{Funding}
Ying Yan's research is supported by the National Natural Science Foundation of China (NSFC) (Grant No. 11901599; Grant  No. 71991474).

\bibliographystyle{asa}
\bibliography{mybib}

\newpage
\section*{Appendix: Lemmas and Proofs}

\subsection*{Lemmas}
     \begin{lemma}[\citet{Chernozhukov2018}; Conditional Convergence implies Unconditional]\label{lemma1}
     Let $\{X_m\}$ and $\{Y_m\}$ be random vectors. (a) If for $\varepsilon_m\rightarrow0$, $\mathbb{P}\left(||X_m||>\varepsilon_m|Y_m\right)\rightarrow0$ as $m\rightarrow\infty$, then $\mathbb{P}(||X_m||>\varepsilon_m)\rightarrow0$ as $m\rightarrow\infty$. (b) Let $\{A_m\}$ be a sequence of positive constants. If $||X_m||=O_p(A_m)$ conditional on $Y_m$, then $||X_m||=O_p(A_m)$ holds unconditionally.
     \end{lemma}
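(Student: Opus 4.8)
The plan is to prove both parts of Lemma~\ref{lemma1} by reducing unconditional probability statements to expectations of conditional probabilities via the tower property, and then invoking dominated/bounded convergence. Throughout I treat the conditional probabilities $\mathbb{P}(\cdot \mid Y_m)$ as regular conditional probabilities, so that they are genuine $[0,1]$-valued random variables, measurable with respect to $\sigma(Y_m)$.

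\textbf{Part (a).} First I would write, by the law of total probability,
\[
\mathbb{P}\left(\|X_m\|>\varepsilon_m\right) = \mathrm{E}\!\left[\,\mathbb{P}\left(\|X_m\|>\varepsilon_m \mid Y_m\right)\,\right].
\]
Set $Z_m := \mathbb{P}(\|X_m\|>\varepsilon_m \mid Y_m)$. By hypothesis $Z_m \to 0$ (in the pointwise/almost-sure sense in which the conditional statement is posed, or more generally in probability), and $0 \le Z_m \le 1$. Since the constant $1$ is an integrable dominating function on a probability space, the bounded convergence theorem gives $\mathrm{E}[Z_m] \to 0$, i.e.\ $\mathbb{P}(\|X_m\|>\varepsilon_m)\to 0$. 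If the conditional convergence is only assumed in probability rather than almost surely, I would instead argue that a bounded sequence converging to $0$ in probability also converges to $0$ in $L^1$ (again by uniform integrability, which is automatic here because of the uniform bound by $1$), yielding the same conclusion $\mathrm{E}[Z_m]\to 0$.

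\textbf{Part (b).} The statement $\|X_m\| = O_p(A_m)$ conditional on $Y_m$ unpacks as: for every $\epsilon>0$ there is a constant $M_\epsilon$ and an index $m_\epsilon$ such that $\mathbb{P}(\|X_m\| > M_\epsilon A_m \mid Y_m) < \epsilon$ for all $m \ge m_\epsilon$, where this bound holds with probability tending to one (or almost surely, depending on the precise reading). The plan is to fix $\epsilon>0$, take the corresponding $M_\epsilon$, and apply Part~(a) in the form of taking expectations:
\[
\mathbb{P}\left(\|X_m\|>M_\epsilon A_m\right)=\mathrm{E}\!\left[\,\mathbb{P}\left(\|X_m\|>M_\epsilon A_m\mid Y_m\right)\,\right].
\]
Bounding the conditional probability inside the expectation by $\epsilon$ on the high-probability event where the conditional $O_p$ bound holds, and by $1$ on its complement (whose probability tends to $0$), I obtain $\limsup_m \mathbb{P}(\|X_m\|>M_\epsilon A_m) \le \epsilon$. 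Since $\epsilon>0$ is arbitrary, this is exactly the definition of $\|X_m\|=O_p(A_m)$ unconditionally.

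\textbf{The main obstacle} is bookkeeping about the precise sense of the conditional statements rather than any deep analytic difficulty. Specifically, I must be careful whether ``conditional $O_p$'' means the bound $\mathbb{P}(\|X_m\|>M_\epsilon A_m\mid Y_m)<\epsilon$ holds almost surely for large $m$, or only with probability tending to one; the two-event splitting (``good'' event where the conditional bound is valid, ``bad'' event of vanishing probability) in Part~(b) is designed to handle the weaker of these readings uniformly. The only genuine tool needed is that a uniformly bounded sequence of random variables converging to zero (a.s.\ or in probability) converges in $L^1$, which is the bounded convergence theorem or its uniform-integrability refinement; everything else is the tower property and elementary bounds.
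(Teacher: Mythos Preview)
Your argument is correct. The paper does not supply its own proof of this lemma; it is quoted verbatim from \citet{Chernozhukov2018} and used as a black box in the subsequent proofs. Your derivation via the tower property and bounded convergence is exactly the standard one, and your careful handling of the ambiguity between almost-sure and in-probability readings of the conditional hypothesis (via uniform integrability of $[0,1]$-valued sequences) and of the conditional $O_p$ statement (via the good/bad event split) is appropriate and complete. There is nothing to add.
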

     \begin{lemma}\label{lemma3}
     $\widehat{\textsc{G}}\left(X_i,\widehat{\textsl{g}}^{(d)}_{k},\widehat{\xi}^{(d)}\right)=\widehat{\textsc{G}}\left(X_i,\widehat{\textsl{g}}^{(d)}_{k},\ddot{\xi}^{(d)}\right)+O_p(\frac{1}{\sqrt{n}})$, $i\in\mathbb{I}^{(d)}_k$ for $k=1,\cdots,K$ and $d=0,1$.
     \end{lemma}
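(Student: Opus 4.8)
The plan is to reduce the lemma to a bound on the discrepancy between the plug-in mean $\widehat{\xi}^{(d)}$ and its conditional-expectation surrogate $\ddot{\xi}^{(d)}$, and then to control that discrepancy fold by fold. Since $\widehat{\textsc{G}}(X_i,\widehat{\textsl{g}}^{(d)}_k,\widehat{\xi}^{(d)})=\widehat{\textsl{g}}^{(d)}_k(X_i)-\widehat{\xi}^{(d)}$ and $\widehat{\textsc{G}}(X_i,\widehat{\textsl{g}}^{(d)}_k,\ddot{\xi}^{(d)})=\widehat{\textsl{g}}^{(d)}_k(X_i)-\ddot{\xi}^{(d)}$, their difference equals $\ddot{\xi}^{(d)}-\widehat{\xi}^{(d)}$, which involves neither $i$ nor $X_i$. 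Hence it suffices to prove $\|\widehat{\xi}^{(d)}-\ddot{\xi}^{(d)}\|=O_p(n^{-1/2})$ for $d=0,1$.

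Write $\mu^{(d)}_k:=\mathrm{E}[\widehat{\textsl{g}}^{(d)}_k(X)\mid(W_i)_{i\in \mathbb{I}_k^{(d)^c}}]$, so that $\ddot{\xi}^{(d)}=K^{-1}\sum_{k=1}^{K}\mu^{(d)}_k$ and, by the identity $\widehat{\xi}^{(d)}=K^{-1}\sum_{k=1}^{K}\widehat{\xi}^{(d)}_k$, one has $\widehat{\xi}^{(d)}-\ddot{\xi}^{(d)}=K^{-1}\sum_{k=1}^{K}(\widehat{\xi}^{(d)}_k-\mu^{(d)}_k)$. As $K$ is fixed, it is enough to show $\widehat{\xi}^{(d)}_k-\mu^{(d)}_k=O_p(n^{-1/2})$ for each $k$. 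Fix $k$ and condition on the fold assignment and on $(W_i)_{i\in \mathbb{I}_k^{(d)^c}}$. On this event $\widehat{\textsl{g}}^{(d)}_k$ is a fixed function; the index set $\mathbb{I}_k$ is disjoint from $\mathbb{I}_k^{(d)^c}$, so by i.i.d.\ sampling the covariates $\{X_i:i\in\mathbb{I}_k\}$ are conditionally i.i.d.; and by the randomization assumption $D\independent\{Y(1),Y(0),X\}$, conditioning on $D_i\in\{0,1\}$ within $\mathbb{I}_k$ leaves the law of each $X_i$ equal to that of $X$. Therefore $\widehat{\xi}^{(d)}_k=|\mathbb{I}_k|^{-1}\sum_{i\in\mathbb{I}_k}\widehat{\textsl{g}}^{(d)}_k(X_i)$ is, conditionally, a sample average of i.i.d.\ terms with mean exactly $\mu^{(d)}_k$ and variance $|\mathbb{I}_k|^{-1}\mathrm{Var}[\widehat{\textsl{g}}^{(d)}_k(X)\mid(W_i)_{i\in \mathbb{I}_k^{(d)^c}}]$, where $|\mathbb{I}_k|=n/K$. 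By condition (\textbf{A1}) the conditional second moment of each coordinate of $\widehat{\textsl{g}}^{(d)}_k(X)$ is finite with probability tending to one, so the displayed conditional variance is $O_p(1)$, and a conditional Chebyshev inequality gives $\widehat{\xi}^{(d)}_k-\mu^{(d)}_k=O_p(n^{-1/2})$ conditionally on $(W_i)_{i\in \mathbb{I}_k^{(d)^c}}$. Lemma \ref{lemma1}(b) then upgrades this to an unconditional $O_p(n^{-1/2})$ bound, and summing over the $K$ fixed folds yields $\|\widehat{\xi}^{(d)}-\ddot{\xi}^{(d)}\|=O_p(n^{-1/2})$, hence the lemma.

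The only genuinely delicate step is the passage from the moment hypothesis in (\textbf{A1}) --- and, if a uniform bound is wanted, from the matrix inequality in (\textbf{A2}) --- to the assertion that the random conditional variance $\mathrm{Var}[\widehat{\textsl{g}}^{(d)}_k(X)\mid(W_i)_{i\in \mathbb{I}_k^{(d)^c}}]$ is $O_p(1)$; once that is secured, the remainder is routine i.i.d.\ sample-mean bookkeeping, so I would settle this point first and then proceed exactly as above.
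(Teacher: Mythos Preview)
Your proposal is correct and follows essentially the same route as the paper: both reduce the claim to showing $\widehat{\xi}^{(d)}-\ddot{\xi}^{(d)}=O_p(n^{-1/2})$, decompose this as $K^{-1}\sum_k(\widehat{\xi}^{(d)}_k-\mu^{(d)}_k)$, bound each fold conditionally on $(W_i)_{i\in\mathbb{I}_k^{(d)^c}}$, and then invoke Lemma~\ref{lemma1} to remove the conditioning. The paper's proof is terser (it simply cites the Central Limit Theorem in place of your Chebyshev argument), but the logical structure is identical.
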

     \begin{proof}
     Simple algebra gives 
       \[
       \widehat{\textsc{G}}\left(X_i,\widehat{\textsl{g}}^{(d)}_{k},\ddot{\xi}^{(d)}\right)-\widehat{\textsc{G}}\left(X_i,\widehat{\textsl{g}}^{(d)}_{k},\widehat{\xi}^{(d)}\right)=\frac{1}{K}\sum\limits_{k=1}^K\frac{1}{|\mathbb{I}_k|}\sum\limits_{i\in \mathbb{I}_k}\left(\widehat{\textsl{g}}_k^{(d)}(X_i)-\mathrm{E}\left[\left.\widehat{\textsl{g}}_k^{(d)}(X_i)\right\vert (W_j)_{j\in \mathbb{I}_k^{(d)^c}}\right]\right).
       \]
       The proof is completed by the Central Limit Theorem and lemma \ref{lemma1}.
       \end{proof}
       
     \subsection*{Proofs}
     \textbf{Proof of Proposition \ref{pro:1}}
     \begin{proof}
     Let $n_d^*=\frac{n_d}{K}$. $n_d^*$ is an integer as $\mathbb{I}_k^{(d)}$, $k=1,\cdots,K$ are of equal size for $d=0,1$. Write $(W_i)_{i\in\mathbb{I}_k^{(d)}}=\{W^d_{k1},W^d_{k2},\cdots,W^d_{kn_d^*}\}$ for $k=1,\cdots,K$ and $d=0,1$ with random orders. Let\\ $T_{udj}=\frac{1}{K}\sum\limits_{k=1}^K\left[\widehat{\textsc{G}}\left(X^d_{kj},\widehat{\textsl{g}}^{(d)}_{k},\widehat{\xi}^{(d)}\right)\right]_u$ for $u=1,\cdots,r$, $j=1,\cdots,n_d^*$ and $d=0,1$. It suffices to prove that $0$ is contained in the convex hull of $\{T_{ud1},\cdots,T_{udn_d^*}\}$ with probability tending to $1$ as $n\rightarrow\infty$ for $u=1,\cdots,r$ and $d=0,1$. To prove it, it suffices to show that for any given $u$ and $d$, $\mathrm{P}(\max\limits_{1\leq j\leq n_d^*}T_{udj}\leq 0)\rightarrow 0$ and $\mathrm{P}(\min\limits_{1\leq j\leq n_d^*}T_{udj}\geq 0)\rightarrow 0$. Now, we only prove that $\mathrm{P}(\max\limits_{1\leq j\leq n_d^*}T_{udj}\leq 0)\rightarrow 0$ and the proof of $\mathrm{P}(\min\limits_{1\leq j\leq n_d^*}T_{udj}\geq 0)\rightarrow 0$ will be similar. Let $T^{q}_{udj}=\left[\widehat{\textsl{g}}^{(d)}_{q}(X^d_{qj})\right]_u-\left[\widehat{\xi}_q^{(d)}\right]_u$ for $q=1,\cdots,K$. It is easy to check that $T_{udj}=\frac{1}{K}\sum\limits_{q=1}^KT^{q}_{udj}$. Simple calculation gives
     \[
        \mathrm{P}(\max\limits_{1\leq j\leq n_d^*}T_{udj}\leq 0)=\mathrm{P}(\max\limits_{1\leq j\leq n_d^*}\frac{1}{K}\sum\limits_{q=1}^KT^{q}_{udj}\leq 0)\leq \mathrm{P}(\min\limits_{1\leq q\leq K}\max\limits_{1\leq j\leq n_d^*}T^{q}_{udj}\leq 0).
     \]
     Therefore, it suffices to prove that $\mathrm{P}(\max\limits_{1\leq j\leq n_d^*}T^{q}_{udj}\leq 0)\rightarrow0$ for $q=1,\cdots,K$. The proof below follows the technique of \citet{Jing2009}. For a given $q$, let $\nu_{udj}=\psi(T^{q}_{udj})$, where $\psi(x)$ is a nondecreasing, twice differentiable function with bounded first and second derivatives such that
     \begin{equation}
      \psi(x)=\left\{\begin{array}{ll}
      0, & \text { if } x \leq 0 \\
        a(x), & \text { if } 0<x<\epsilon \\
         1, & \text { if } x \geq \epsilon
         \end{array}\right.
    \end{equation}
    with $0<a(x)<1$ for $0<x<\epsilon$. Simple algebra gives that, conditional on $(W_i)_{i\in\mathbb{I}_q^{(d)^c}}$,
    \begin{align*}
     \mathrm{P}\left(\max\limits_{1\leq j\leq n_d^*}T^{q}_{udj}\leq 0\right)=&\mathrm{P}\left(\nu_{ud1}=0,\cdots,\nu_{udn_d^*}=0\right)\\
     =&\mathrm{P}\left(\sum\limits_{j=1}^{n_d^*}\nu_{udj}=0\right)\\
     \leq& \mathrm{P}\left(\left|\sum\limits_{j=1}^{n_d^*}(\nu_{udj}-\mathrm{E}\nu_{udj})\right|\geq n_d^*\mathrm{E}\nu_{udj}\right)\\
     \leq& \frac{\mathrm{E}\left[\left(\sum\limits_{j=1}^{n_d^*}(\nu_{udj}-\mathrm{E}\nu_{udj})\right)^2\right]}{n^2(\mathrm{E}\nu_{udj})^2}\\
     \leq& \frac{n_d^*\mathrm{Var}(\nu_{ud1})+n_d^*(n_d^*-1)\mathrm{Cov}(\nu_{ud1},\nu_{ud2})}{{n_d^*}^2(\mathrm{E}\nu_{ud1})^2}
    \end{align*}
    It suffices to show that, conditional on $(W_i)_{i\in\mathbb{I}_q^{(d)^c}}$,
    \begin{itemize}
        \item[(a)] $\mathrm{Var}(\nu_{ud1})\leq 1$.
        \item[(b)] $\lim\limits_{n\rightarrow\infty}\mathrm{E}\nu_{ud1}\geq c>0$.
        \item[(c)] $\mathrm{Cov}(\nu_{ud1},\nu_{ud2})\rightarrow 0$.
    \end{itemize}
    (a) holds because $0\leq \nu_{ud1}\leq 1$. Simple algebra indicates that 
    \begin{align*}
    T^{q}_{udj}=&\left[\widehat{\textsl{g}}^{(d)}_{q}(X^d_{qj})\right]_u-\mathrm{E}\left[\left.\left[\widehat{\textsl{g}}^{(d)}_{q}(X^d_{qj})\right]_u\right\vert(W_i)_{i\in\mathbb{I}_q^{(d)^c}}\right]-\left[\widehat{\xi}_q^{(d)}\right]_u+\mathrm{E}\left[\left.\left[\widehat{\textsl{g}}^{(d)}_{q}(X^d_{qj})\right]_u\right\vert(W_i)_{i\in\mathbb{I}_q^{(d)^c}}\right]\\
    =& Q_u(X_{qj}^d,\widehat{\textsl{g}}^{(d)}_{q})+\mathrm{Rem}^q_{ud}.
    \end{align*}
    Under condition (\textbf{A1}), The Central Limit Theory (CLT) and Lemma \ref{lemma1} gives $\mathrm{Rem}^q_{ud}=O_p(\frac{1}{\sqrt{n}})$. A Taylor expansion yields
    \[
    \nu_{udj}=\psi(T^{q}_{udj})=\psi(Q_u(X_{qj}^d,\widehat{\textsl{g}}^{(d)}_{q}))+C_n\mathrm{Rem}^q_{ud},
    \]
    where $C_n<C$ for some constant $C$. Therefore, it is easy to verify that $\mathrm{E}\nu_{udj}\rightarrow \mathrm{E}\left[\psi(Q_u(X_{qj}^d,\widehat{\textsl{g}}^{(d)}_{q}))\right]$. Note that conditional on $(W_i)_{i\in\mathbb{I}_q^{(d)^c}}$, $\mathrm{E}\left[Q_u(X_{qj}^d,\widehat{\textsl{g}}^{(d)}_{q})\right]=0$ and $\mathrm{Var}\left(Q_u(X_{qj}^d,\widehat{\textsl{g}}^{(d)}_{q})\right)>0$ (By condition (\textbf{A1})). Therefore, we have $\mathrm{P}(Q_u(X_{qj}^d,\widehat{\textsl{g}}^{(d)}_{q})>0)>0$, which implies that $\mathrm{E}\left[\psi(Q_u(X_{qj}^d,\widehat{\textsl{g}}^{(d)}_{q}))\right]>0$ conditional on $(W_i)_{i\in\mathbb{I}_q^{(d)^c}}$. This completes the proof of (b). (c) is obvious because conditional on $(W_i)_{i\in\mathbb{I}_q^{(d)^c}}$, $\psi(Q_u(X_{q1}^d,\widehat{\textsl{g}}^{(d)}_{q}))$ and $\psi(Q_u(X_{q2}^d,\widehat{\textsl{g}}^{(d)}_{q}))$ are independent and $C_n\mathrm{Rem}^q_{ud}\rightarrow 0$ with probability tending to 1. This completes the proof.
     \end{proof}
     \noindent
     \textbf{Proof of Proposition \ref{pro:2}}
     \begin{proof}
     	Without loss of generality, we assume that $q_1=q_2=r$. Otherwise we can replace\\ $\widehat{\textsc{G}}\left(X_i,\widehat{\textsl{g}}^{(d)}_{k},\widehat{\xi}^{(d)}\right)$ by a subset of $q_1$ components that guarantee the limit of $\widehat{V}_n^{(d)}$ is bounded by two finite and positive definite matrices of the full rank $q_1$ (See Chapter 11 in \citet{Owen2001}). For fixed $d$, from $\frac{1}{n_d}\sum\limits_{k=1}^K\sum\limits_{i\in \mathbb{I}_k^{(d)}}\frac{\widehat{\textsc{G}}\left(X_i,\widehat{\textsl{g}}^{(d)}_{k},\widehat{\xi}^{(d)}\right)}{1+\widehat{\lambda}_d^{\tau} \widehat{\textsc{G}}\left(X_i,\widehat{\textsl{g}}^{(d)}_{k},\widehat{\xi}^{(d)}\right)}=0$, simple algebra gives
     	\begin{align*}
     	0&=\frac{1}{n_d}\sum\limits_{k=1}^K\sum\limits_{i\in \mathbb{I}_k^{(d)}}\widehat{\textsc{G}}\left(X_i,\widehat{\textsl{g}}^{(d)}_{k},\widehat{\xi}^{(d)}\right)\left[1-\frac{1}{1+\widehat{\lambda}_d^{\tau} \widehat{\textsc{G}}\left(X_i,\widehat{\textsl{g}}^{(d)}_{k},\widehat{\xi}^{(d)}\right)}\right]\\
     	&=\frac{1}{n_d}\sum\limits_{k=1}^K\sum\limits_{i\in \mathbb{I}_k^{(d)}}\widehat{\textsc{G}}\left(X_i,\widehat{\textsl{g}}^{(d)}_{k},\widehat{\xi}^{(d)}\right)-\frac{1}{n_d}\sum\limits_{k=1}^K\sum\limits_{i\in \mathbb{I}_k^{(d)}}\frac{\widehat{\textsc{G}}\left(X_i,\widehat{\textsl{g}}^{(d)}_{k},\widehat{\xi}^{(d)}\right)^{\otimes2}\widehat{\lambda}_d}{1+\widehat{\lambda}_d^\tau \widehat{\textsc{G}}\left(X_i,\widehat{\textsl{g}}^{(d)}_{k},\widehat{\xi}^{(d)}\right)}.
     	\end{align*}
     	Therefore, we have
     	\begin{equation}\label{eq:1}
     	\frac{1}{n_d}\sum\limits_{k=1}^K\sum\limits_{i\in \mathbb{I}_k^{(d)}}\widehat{\textsc{G}}\left(X_i,\widehat{\textsl{g}}^{(d)}_{k},\widehat{\xi}^{(d)}\right)=\frac{1}{n_d}\sum\limits_{k=1}^K\sum\limits_{i\in \mathbb{I}_k^{(d)}}\frac{\widehat{\textsc{G}}\left(X_i,\widehat{\textsl{g}}^{(d)}_{k},\widehat{\xi}^{(d)}\right)^{\otimes2}\widehat{\lambda}_d}{1+\widehat{\lambda}_d^\tau \widehat{\textsc{G}}\left(X_i,\widehat{\textsl{g}}^{(d)}_{k},\widehat{\xi}^{(d)}\right)}.
     	\end{equation}
     	For fixed $k$, conditional on $\left(W_i\right)_{i\in \mathbb{I}^{(d)^c}_k}$, the Central Limit Theorem indicates that
     	\[
     	\frac{1}{|\mathbb{I}_k^{(d)}|}\sum\limits_{i\in \mathbb{I}_k^{(d)}}\left(\widehat{\textsl{g}}^{(d)}_k(X_i)-\widehat{\xi}^{(d)}_k\right)-\mathrm{E}\left[\left.\widehat{\textsl{g}}^{(d)}_k(X)-\widehat{\xi}^{(d)}_k\right\vert \left(W_i\right)_{i\in \mathbb{I}^{(d)^c}_k}\right]=O_p(\frac{1}{\sqrt{n}}).
     	\]
        Then, lemma \ref{lemma1} gives $\frac{1}{|\mathbb{I}_k^{(d)}|}\sum\limits_{i\in \mathbb{I}_k^{(d)}}\left(\widehat{\textsl{g}}^{(d)}_k(X_i)-\widehat{\xi}^{(d)}_k\right)=O_p(\frac{1}{\sqrt{n}})$ unconditionally. Therefore, the left term of (\ref{eq:1}) is 
        \[
        \frac{1}{n_d}\sum\limits_{k=1}^K\sum\limits_{i\in \mathbb{I}_k^{(d)}}\widehat{\textsc{G}}\left(X_i,\widehat{\textsl{g}}^{(d)}_{k},\widehat{\xi}^{(d)}\right)=\frac{1}{K}\sum\limits_{k=1}^K\frac{1}{|\mathbb{I}_k^{(d)}|}\sum\limits_{i\in \mathbb{I}_k^{(d)}}\left(\widehat{\textsl{g}}^{(d)}_k(X_i)-\widehat{\xi}^{(d)}_k\right)=O_p(\frac{1}{\sqrt{n}}).
        \]
        Turn to the right term of (\ref{eq:1}), and let $\nu_d=\frac{\widehat{\lambda}_d}{||\widehat{\lambda}_d||}$, where  $||\cdot||$ is the Euclidean norm. We have 
        \[
        1+\widehat{\lambda}_d^\tau \widehat{\textsc{G}}\left(X_i,\widehat{\textsl{g}}^{(d)}_{k},\widehat{\xi}^{(d)}\right)\leq1+||\widehat{\lambda}_d||\nu_d^\tau\widehat{\textsc{G}}\left(X_i,\widehat{\textsl{g}}^{(d)}_{k},\widehat{\xi}^{(d)}\right)\leq 1+2||\widehat{\lambda}_d||\sqrt{r}\max\limits_{k\in\{1,\cdots,K\}}\max\limits_{j=1,\cdots,r }\max\limits_{i\in \mathbb{I}_k}\left\vert[\widehat{\textsl{g}}^{(d)}_k(X_i)]_j\right\vert.
        \]
        Condition (\textbf{A1}), lemma 11.2 in \citet{Owen2001}, and lemma \ref{lemma1} indicate that $\max\limits_{k\in\{1,\cdots,K\}}\max\limits_{j=1,\cdots,r }\max\limits_{i\in \mathbb{I}_k}\left\vert[\widehat{\textsl{g}}^{(d)}_k(X_i)]_j\right\vert=o_p(n^{1/2})$. Multiply $\nu_d^\tau$ on both sides of (\ref{eq:1}), we have
     	\begin{align*}
     	&||\widehat{\lambda}_d||\frac{1}{n_d}\sum\limits_{k=1}^K\sum\limits_{i\in \mathbb{I}_k^{(d)}}\nu_d^\tau\widehat{\textsc{G}}\left(X_i,\widehat{\textsl{g}}^{(d)}_{k},\widehat{\xi}^{(d)}\right)^{\otimes2}\nu_d\\
     	\leq
     	&\frac{\nu_d^\tau}{n_d}\sum\limits_{k=1}^K\sum\limits_{i\in \mathbb{I}_k^{(d)}}\widehat{\textsc{G}}\left(X_i,\widehat{\textsl{g}}^{(d)}_{k},\widehat{\xi}^{(d)}\right)\left(1+2||\widehat{\lambda}_d||\sqrt{r}\max\limits_{k\in\{1,\cdots,K\}}\max\limits_{j=1,\cdots,r }\max\limits_{i\in \mathbb{I}_k}\left\vert[\widehat{\textsl{g}}^{(d)}_k(X_i)]_j\right\vert\right).
     	\end{align*}
     	Condition (\textbf{A2}) implies $\frac{1}{n_d}\sum\limits_{k=1}^K\sum\limits_{i\in \mathbb{I}_k^{(d)}}\nu_d^\tau\widehat{\textsc{G}}\left(X_i,\widehat{\textsl{g}}^{(d)}_{k},\widehat{\xi}^{(d)}\right)^{\otimes2}\nu_d\asymp1$. It follows from all above results that
     	\begin{align}
     	||\widehat{\lambda}_d||\leq O_p(\frac{1}{\sqrt{n}})(1+2||\widehat{\lambda}_d||o_p(n^{1/2})).
     	\label{eq: 2}
     	\end{align}
     	Equation (\ref{eq: 2}) indicates that $||\hat{\lambda}_d||=O_p(\frac{1}{\sqrt{n}})$. This completes the proof.
       \end{proof}
     
     ~
     
   \noindent
   \textbf{Proof of Proposition \ref{pro:3}}
   \begin{proof}
     First, we consider the case $d=1$ and the case $d=0$ will be similar. Taylor expansion, Proposition \ref{pro:1}, and Lemma \ref{lemma3} lead to
     \begin{equation}\label{eq4}
     \begin{aligned}
     0 &= \sqrt{n}\frac{1}{n_1}\sum\limits_{k=1}^K\sum\limits_{i\in \mathbb{I}_k^{(1)}}\frac{\widehat{\textsc{G}}\left(X_i,\widehat{\textsl{g}}^{(1)}_{k},\widehat{\xi}^{(1)}\right)}{1+\widehat{\lambda}_1^{\tau} \widehat{\textsc{G}}\left(X_i,\widehat{\textsl{g}}^{(1)}_{k},\widehat{\xi}^{(1)}\right)}\\
     &=
     \frac{1}{\sqrt{n}}\sum\limits_{k=1}^K\sum\limits_{i\in \mathbb{I}_k}\frac{D_i}{\delta}\widehat{\textsc{G}}\left(X_i,\widehat{\textsl{g}}^{(1)}_{k},\widehat{\xi}^{(1)}\right)-\sqrt{n}\frac{1}{K}\sum\limits_{k=1}^K\frac{1}{|\mathbb{I}_k|}\sum\limits_{i\in \mathbb{I}_k}\frac{D_i}{\delta}\widehat{\textsc{G}}\left(X_i,\widehat{\textsl{g}}^{(1)}_{k},\widehat{\xi}^{(1)}\right)^{\otimes2}\widehat{\lambda}_1+o_p(1)\\
     &=
     \frac{1}{\sqrt{n}}\sum\limits_{k=1}^K\sum\limits_{i\in \mathbb{I}_k}\frac{D_i-\delta}{\delta}\widehat{\textsc{G}}\left(X_i,\widehat{\textsl{g}}^{(1)}_{k},\ddot{\xi}^{(1)}\right)-\sqrt{n}\frac{1}{K}\sum\limits_{k=1}^K\frac{1}{|\mathbb{I}_k|}\sum\limits_{i\in \mathbb{I}_k}\frac{D_i}{\delta}\widehat{\textsc{G}}\left(X_i,\widehat{\textsl{g}}^{(1)}_{k},\ddot{\xi}^{(1)}\right)^{\otimes2}\widehat{\lambda}_1+o_p(1).\\
     \end{aligned}
     \end{equation}
     Therefore, we have
     \begin{equation}\label{eq2}
     \sqrt{n}\widehat{\lambda}_1=\ddot{{S}}_n^{(1)^{-1}}\frac{1}{\sqrt{n}}\sum\limits_{k=1}^K\sum\limits_{i\in \mathbb{I}_k}\frac{D_i-\delta}{\delta}\widehat{\textsc{G}}\left(X_i,\widehat{\textsl{g}}^{(1)}_{k},\ddot{\xi}^{(1)}\right)+o_p(1).
     \end{equation}
     Under condition (\textbf{A3}) and (\textbf{A4}), by Taylor expansion, Proposition \ref{pro:1}, Lemma \ref{lemma3}, and (\ref{eq2}), we have
     \begin{equation}\label{eq3}
     \begin{aligned}
     \sqrt{n}\left(\widehat{\theta}^{(1)}_{\text{mdel}}-\theta_1\right)
     &=
     \sqrt{n}\sum\limits_{k=1}^K\sum\limits_{i\in \mathbb{I}_k}D_i\widehat{p}_i\left(Y_i-\theta_1\right)\\
        &=
        \frac{1}{\sqrt{n}}\sum\limits_{k=1}^K\sum\limits_{i\in \mathbb{I}_k}\frac{D_i}{\delta}\frac{Y_i-\theta_1}{1+\widehat{\lambda}_1^\tau \widehat{\textsc{G}}\left(X_i,\widehat{\textsl{g}}^{(1)}_{k},\widehat{\xi}^{(1)}\right)}+o_p(1)\\
        &=
        \frac{1}{\sqrt{n}}\sum\limits_{k=1}^K\sum\limits_{i\in \mathbb{I}_k}\frac{D_i}{\delta}\left(Y_i-\theta_1\right)-\frac{1}{\sqrt{n}}\sum\limits_{k=1}^K\sum\limits_{i\in \mathbb{I}_k}\frac{D_i}{\delta}\left(Y_i-\theta_1\right)\widehat{\textsc{G}}\left(X_i,\widehat{\textsl{g}}^{(1)}_{k},\ddot{\xi}^{(1)}\right)\widehat{\lambda}_1+o_p(1)\\
        &=
        \frac{1}{\sqrt{n}}\sum\limits_{k=1}^K\sum\limits_{i\in \mathbb{I}_k}\frac{D_i}{\delta}\left(Y_i-\theta_1\right)-\ddot{J}_n^{(1)^\tau}\ddot{{S}}_n^{(1)^{-1}}\frac{1}{\sqrt{n}}\sum\limits_{k=1}^K\sum\limits_{i\in \mathbb{I}_k}\frac{D_i-\delta}{\delta}\widehat{\textsc{G}}\left(X_i,\widehat{\textsl{g}}^{(1)}_{k},\ddot{\xi}^{(1)}\right)+o_p(1)\\
        &=
        \frac{1}{\sqrt{n}}\sum\limits_{k=1}^K\sum\limits_{i\in \mathbb{I}_k}\left[\frac{D_i}{\delta}\left(Y_i-\theta_1\right)-\frac{D_i-\delta}{\delta}\ddot{J}_n^{(1)^\tau}\ddot{{S}}_n^{(1)^{-1}}\widehat{\textsc{G}}\left(X_i,\widehat{\textsl{g}}^{(1)}_{k},\ddot{\xi}^{(1)}\right)\right]+o_p(1)
        \end{aligned}
     	\end{equation}
     	It is easy to give the form of $\sqrt{n}\left(\widehat{\theta}^{(0)}_{\text{mdel}}-\theta_0\right)$ in a similar way:
     	\[
     	\sqrt{n}\left(\widehat{\theta}^{(0)}_{\text{mdel}}-\theta_0\right)=\frac{1}{\sqrt{n}}\sum\limits_{k=1}^K\sum\limits_{i\in \mathbb{I}_k}\left[\frac{1-D_i}{1-\delta}\left(Y_i-\theta_0\right)-\frac{D_i-\delta}{1-\delta}\ddot{J}_n^{(0)^\tau}\ddot{{S}}_n^{(0)^{-1}}\widehat{\textsc{G}}\left(X_i,\widehat{\textsl{g}}^{(0)}_{k},\ddot{\xi}^{(0)}\right)\right]+o_p(1).
     	\]
     	Above results give that
     	\begin{equation*}
     	\begin{aligned}
     		\sqrt{n}\left(\widehat{\theta}_{\text{mdel}}-\theta\right)=\frac{1}{\sqrt{n}}\sum\limits_{k=1}^K\sum\limits_{i\in \mathbb{I}_k}\left[\frac{D_i}{\delta}\left(Y_i-\theta_1\right)-\frac{D_i-\delta}{\delta}\ddot{J}_n^{(1)^\tau}\ddot{{S}}_n^{(1)^{-1}}\widehat{\textsc{G}}\left(X_i,\widehat{\textsl{g}}^{(1)}_{k},\ddot{\xi}^{(1)}\right)\right.\\
     		\left.-\frac{1-D_i}{1-\delta}\left(Y_i-\theta_0\right)+\frac{D_i-\delta}{1-\delta}\ddot{J}_n^{(0)^\tau}\ddot{{S}}_n^{(0)^{-1}}\widehat{\textsc{G}}\left(X_i,\widehat{\textsl{g}}^{(0)}_{k},\ddot{\xi}^{(0)}\right)\right]+o_p(1).
     	\end{aligned}
     	\end{equation*}
     	This completes the proof.
     \end{proof}
     	
   ~
   
   \noindent
   \textbf{Proof of Theorem \ref{thm 1}}
     \begin{proof}
     		When $r=1$, conditional on $\left(W_i\right)_{i\in \mathbb{I}_k^{(d)^c}}$, the Holder Inequality gives that
     		\[
     		\mathrm{E}\left[\left.\left|\widehat{\textsl{g}}^{(d)}_k(X)-\eta^{(d)}(X)\right|\quad\right\vert \left(W_i\right)_{i\in \mathbb{I}_k^{(d)^c}}\right]\leq \sqrt{\mathrm{E}\left[\left.\left(\widehat{\textsl{g}}^{(d)}_k(X)-\eta^{(d)}(X)\right)^2\right\vert \left(W_i\right)_{i\in \mathbb{I}_k^{(d)^c}}\right]}.
     		\]
     		Therefore, we have $\mathrm{E}\left[\left.\left|\widehat{\textsl{g}}^{(d)}_k(X)-\eta^{(d)}(X)\right|\quad\right\vert \left(W_i\right)_{i\in \mathbb{I}_k^{(d)^c}}\right]\rightarrow0$ in probability as $n\rightarrow\infty$ for $k=1,\cdots,K$. Let $\textsc{G}(X_i,\eta^{(d)},\theta_d)=\eta^{(d)}(X_i)-\theta_d$. For simplicity, write $\varsigma_k^{(d)}(X_i)=\widehat{\textsl{g}}^{(d)}_k(X_i)-\eta^{(d)}(X_i)$. It is straightforward to show that $\widehat{\textsc{G}}\left(X_i,\widehat{\textsl{g}}^{(d)}_{k},\ddot{\xi}^{(d)}\right)-\textsc{G}(X_i,\eta^{(d)},\theta_d)=\varsigma_k^{(d)}(X_i)+o_p(1)$ by lemma \ref{lemma1}. Following (\ref{eq4}), it is easy to verify that
     		\begin{equation}
     		\begin{aligned}
     		0&=\frac{1}{\sqrt{n}}\sum\limits_{i\in I}\frac{D_i-\delta}{\delta}\textsc{G}(X_i,\eta^{(1)},\theta_1)-\sqrt{n}\frac{1}{n}\sum\limits_{i\in I}\frac{D_i}{\delta}\textsc{G}(X_i,\eta^{(1)},\theta_1)^{2}\widehat{\lambda}_1+o_p(1)\\
     		&+\frac{1}{\sqrt{n}}\sum\limits_{k=1}^K\sum\limits_{i\in \mathbb{I}_k}\frac{D_i-\delta}{\delta}\varsigma_k^{(1)}(X_i)+\frac{1}{\sqrt{n}}\sum\limits_{k=1}^K\sum\limits_{i\in \mathbb{I}_k}\varsigma_k^{(1)}(X_i)^2\widehat{\lambda}_1-\frac{2}{\sqrt{n}}\sum\limits_{k=1}^K\sum\limits_{i\in \mathbb{I}_k}\frac{D_i}{\delta}\varsigma_k^{(1)}(X_i)\textsc{G}(X_i,\eta^{(1)},\theta_1)\widehat{\lambda}_1.
     		\end{aligned}
     		\end{equation}
     		Now, we bound 
     		\[
     		A=\frac{1}{\sqrt{n}}\sum\limits_{k=1}^K\sum\limits_{i\in \mathbb{I}_k}\frac{D_i-\delta}{\delta}\varsigma_k^{(1)}(X_i),\quad B=\frac{1}{\sqrt{n}}\sum\limits_{k=1}^K\sum\limits_{i\in \mathbb{I}_k}\varsigma_k^{(1)}(X_i)^2\widehat{\lambda}_1
     		\]
     		and
     		\[
     		C=\frac{1}{\sqrt{n}}\sum\limits_{k=1}^K\sum\limits_{i\in \mathbb{I}_k}\frac{D_i}{\delta}\varsigma_k^{(1)}(X_i)\textsc{G}(X_i,\eta^{(1)},\theta_1)\widehat{\lambda}_1,
     		\]
     		respectively. Conditional on $(W_i)_{i\in \mathbb{I}^{(1)^c}_k}$, the mean of $\frac{1}{\sqrt{|\mathbb{I}_k|}}\sum\limits_{i\in \mathbb{I}_k}\frac{D_i-\delta}{\delta}\varsigma_k^{(1)}(X_i)$ is zero and the variance is given by
     		\[
     	    \mathrm{E}[(D-\delta)^2]\cdot\mathrm{E}\left[\left.\varsigma_k^{(1)}(X)^2\right\vert(W_i)_{i\in \mathbb{I}^{(1)^c}_k}\right],
     		\]
     		which converges to zero in probability as $n\rightarrow\infty$. Then $A=o_p(1)$ by the Chebyshev's Inequality and lemma \ref{lemma1}. $B$ vanishes in probability because $\sqrt{n}\widehat{\lambda}_1=O_p(1)$.  For $C$, the Cauchy-Schwarz Inequality gives that
     		\[
     		C\leq \sqrt{n}\widehat{\lambda}_1\frac{1}{K}\sum\limits_{k=1}^K\sqrt{\frac{1}{|\mathbb{I}_k|}\sum\limits_{i\in \mathbb{I}_k}\varsigma_k^{(1)}(X_i)^2}\cdot\sqrt{\frac{1}{|\mathbb{I}_k|}\sum\limits_{i\in \mathbb{I}_k}\left(\frac{D_i}{\delta}\textsc{G}(X_i,\eta^{(1)},\theta_1)\right)^2}.
     		\]
     		Conditional on $(W_i)_{i\in \mathbb{I}_k^{(d)^c}}$, the right term of above inequality converges to $0$ in probability as $n\rightarrow\infty$; therefore $C=o_p(1)$ by lemma \ref{lemma1}.
     		Above results give that 
     		\[
     		\sqrt{n}\widehat{\lambda}_1=\mathrm{E}\left[\textsc{G}(X_i,\eta^{(1)},\theta_1)^2\right]^{-1}\frac{1}{\sqrt{n}}\sum\limits_{i=1}^n\frac{D_i-\delta}{\delta}\textsc{G}(X_i,\eta^{(1)},\theta_1)+o_p(1).
     		\]
     		Similarly, it is easy to check that
     		\[
     		\sqrt{n}\widehat{\lambda}_0=\mathrm{E}\left[\textsc{G}(X_i,\eta^{(0)},\theta_0)^2\right]^{-1}\frac{1}{\sqrt{n}}\sum\limits_{i=1}^n\frac{D_i-\delta}{1-\delta}\textsc{G}(X_i,\eta^{(0)},\theta_0)+o_p(1).
     		\]
     		Using above results, Taylor expansion indicates that
     		\begin{equation*}
     		\begin{aligned}
     		\sqrt{n}\left(\widehat{\theta}^{(1)}_{\text{mdel}}-\theta_1\right) &= \frac{1}{\sqrt{n}}\sum\limits_{i=1}^n\frac{D_i}{\delta}\left(Y_i-\theta_1\right)-\frac{1}{\sqrt{n}}\sum\limits_{i=1}^n\frac{D_i}{\delta}\left(Y_i-\theta_1\right)\textsc{G}(X_i,\eta^{(1)},\theta_1)\widehat{\lambda}_1+o_p(1)\\
     		&=\frac{1}{\sqrt{n}}\sum\limits_{i=1}^n\left\{\frac{D_i}{\delta}\left(Y_i-\theta_1\right)\right.\\
     		&\left.-\frac{D_i-\delta}{\delta}\mathrm{E}\left[\frac{D}{\delta}(Y-\theta_1)\textsc{G}(X,\eta^{(1)},\theta_1)\right]\mathrm{E}\left[\textsc{G}(X,\eta^{(1)},\theta_1)^2\right]^{-1}\textsc{G}(X,\eta^{(1)},\theta_1)\right\}\\
     		&+o_p(1).
     		\end{aligned}
     		\end{equation*}
     		Following from
     		\begin{align*}
     		\mathrm{E}\left[\frac{D}{\delta}(Y-\theta_1)\textsc{G}(X,\eta^{(1)},\theta_1)\right]=\frac{\mathbb{P}(D=1)}{\delta}\mathrm{E}\left[(Y-\theta_1)\textsc{G}(X,\eta^{(1)},\theta_1)|D=1\right]\\
     		=\mathrm{E}\left[\mathrm{E}\left[(Y-\theta_1)|X,D=1\right]\textsc{G}(X,\eta^{(1)},\theta_1)\right]=\mathrm{E}\left[\textsc{G}(X,\eta^{(1)},\theta_1)^2\right],
     		\end{align*}
     		we have
     		\begin{equation*}
     		\begin{aligned}
     		\sqrt{n}\left(\widehat{\theta}^{(1)}_{\text{mdel}}-\theta_1\right)&=
     		\frac{1}{\sqrt{n}}\sum\limits_{i=1}^n\left\{\frac{D_i}{\delta}\left(Y_i-\theta_1\right)-\frac{D_i-\delta}{\delta}\left(\eta^{(1)}(X_i)-\theta_1\right)\right\}+o_p(1)\\
     		&=
     		\frac{1}{\sqrt{n}}\sum\limits_{i=1}^n\left\{\frac{D_i}{\delta}\left(Y_i-\eta^{(1)}(X_i)\right)+\left(\eta^{(1)}(X_i)-\theta_1\right)\right\}+o_p(1).
     		\end{aligned}
     		\end{equation*}
     		Similarly, when $d=0$, it is easy to verify that
     		\begin{equation*}
     		\begin{aligned}
     		\sqrt{n}\left(\widehat{\theta}^{(0)}_{\text{mdel}}-\theta_0\right)
     		=
     		\frac{1}{\sqrt{n}}\sum\limits_{i=1}^n\left\{\frac{1-D_i}{1-\delta}\left(Y_i-\eta^{(0)}(X_i)\right)+\left(\eta^{(0)}(X_i)-\theta_0\right)\right\}+o_p(1).
     		\end{aligned}
     		\end{equation*}
     		Therefore, we have
     		\begin{equation*}
     		\begin{aligned}
     		\sqrt{n}\left(\widehat{\theta}_{\text{mdel}}-\theta\right)
     		&=
     		\frac{1}{\sqrt{n}}\sum\limits_{i=1}^n\left\{\frac{D_i}{\delta}\left(Y_i-\eta^{(1)}(X_i)\right)-\frac{1-D_i}{1-\delta}\left(Y_i-\eta^{(0)}(X_i)\right)\right.\\
     		&\left.+\eta^{(1)}(X_i)-\eta^{(0)}(X_i)-\theta\right\}+o_p(1).
     		\end{aligned}
     		\end{equation*}
     		This completes the proof.
     	    \end{proof}
        
        ~
        
        \noindent
\textbf{Proof of Theorem \ref{thm 2}}
        \begin{proof}
        Following the proof of Theorem \ref{thm 1}, it is easy to verify that
        \[
        \widehat{J}_n^{(d)}=\frac{1}{n_d}\sum\limits_{k=1}^K\sum\limits_{i\in \mathbb{I}^{(d)}_k}Y_i\widehat{\textsc{G}}\left(X_i,\eta^{(1)},\theta_1\right)+o_p(1)\quad\text{and}\quad \widehat{S}_n^{(d)}=\frac{1}{n}\sum\limits_{k=1}^K\sum\limits_{i\in \mathbb{I}_k}\widehat{\textsc{G}}\left(X_i,\eta^{(1)},\theta_1\right)^2+o_p(1).
        \]
        Then, some algebra gives
        \begin{equation}
        \begin{aligned}
        \widehat{\sigma}^2_{\text{mdel}}
        =&\frac{1}{n}\sum\limits_{d=0,1}\sum\limits_{k=1}^K\sum\limits_{i\in \mathbb{I}^{(d)}_k}\frac{n_d}{n}\widehat{p}_i\left\{\frac{n}{n_1}D_i(Y_i-\eta^{(1)}(X_i)+\theta_1-\widehat{\theta}_{\text{mdel}}^{(1)})+\eta^{(1)}(X_i)-\theta_1-\right.\\
        &\left.\frac{n}{n_0}(1-D_i)(Y_i-\eta^{(0)}(X_i)+\theta_0-\widehat{\theta}_{\text{mdel}}^{(0)})-\eta^{(0)}(X_i)+\theta_0\right\}^2+o_p(1)\\
        =&\frac{1}{n^2}\sum\limits_{i=1}^n\left\{\frac{D_i}{\delta}\left(Y_i-\eta^{(1)}(X_i)\right)-\frac{1-D_i}{1-\delta}\left(Y_i-\eta^{(0)}(X_i)\right)+\eta^{(1)}(X_i)-\eta^{(0)}(X_i)-\theta\right\}^2+o_p(1).
        \end{aligned}
        \end{equation}
        This completes the proof.
        \end{proof}

\end{document}